\documentclass[twocolumn]{autart} 
\usepackage{natbib}
\usepackage{cite}
\usepackage{url}
\usepackage{amsmath}
\usepackage{amssymb}
\usepackage{mathrsfs}
\usepackage{mathtools}
\usepackage{bm}
\usepackage{algorithm}

\usepackage{algorithmic}
\usepackage{graphicx}
\usepackage{xcolor}
\usepackage{booktabs}
\usepackage{bbm}
\usepackage{physics}
\usepackage{subfig}
\usepackage{wrapfig}
\usepackage{har2nat}
\usepackage{bbm}
\usepackage{epstopdf}
\usepackage{siunitx}
\usepackage{bm}
\usepackage{enumerate}

\newcommand{\mathletter}[1]{%
    \expandafter\newcommand\csname b#1\endcsname{\mathbb{#1}}
    \expandafter\newcommand\csname c#1\endcsname{\mathcal{#1}}
	\expandafter\newcommand\csname f#1\endcsname{\mathfrak{#1}}
	\expandafter\newcommand\csname t#1\endcsname{\widetilde{#1}}
	\expandafter\newcommand\csname h#1\endcsname{\widehat{#1}}
	\expandafter\newcommand\csname s#1\endcsname{\boldsymbol{#1}}
    \expandafter\newcommand\csname o#1\endcsname{\overline{#1}}
}%
\def\mathletters#1{\mathlettersB#1,,}
\def\mathlettersB#1,{\ifx,#1,\else\mathletter#1\expandafter\mathlettersB\fi}
\mathletters{A,B,C,D,E,F,G,H,I,J,K,L,M,N,O,P,Q,R,S,T,U,V,W,X,Y,Z}

\def\us{\underline{\sigma}}

\def\mt{\mathrm{tr}}

\def\bee{\begin{equation}}
	\def\ene{\end{equation}}
\def\been{\begin{equation*}}
	\def\enen{\end{equation*}}

\newtheorem{theo}{Theorem}
\newtheorem{lemma}{Lemma}

\newtheorem{remark}{Remark}

\newtheorem{assumption}{Assumption}

\graphicspath{{./figures/}}

\begin{document}
	\begin{frontmatter} 
		\title{Direct Data-Driven Linear Quadratic Tracking via Policy Optimization} 
		\author{Shubo~Kang}, 
		\ead{ksb20@mails.tsinghua.edu.cn}
		\author{Keyou~You\corauthref{cor}}
		\corauth[cor]{Corresponding author} 
		\ead{youky@tsinghua.edu.cn}
		\address{Department of Automation, and Beijing National Research Center for Information Science and Technology, Tsinghua University, Beijing 100084, China.}
		\begin{keyword}  
			Linear systems, direct data-driven approach, linear quadratic tracking, policy optimization, online control. 
		\end{keyword} 
	
	\begin{abstract} 
		Direct data-driven optimal control provides an elegant end-to-end paradigm, yet its real-time applicability is often hindered by the growing dimensionality of online decision variables. Recent breakthroughs, notably Data‑EnablEd Policy Optimization (DeePO), overcome this bottleneck for the Linear Quadratic Regulator (LQR) through sample‑covariance parameterization; however, extending this paradigm to Linear Quadratic Tracking (LQT) poses a fundamental challenge. The core difficulty stems from the intricate coupling between time‑varying references and the feedback‑feedforward policy structure, which prevents a direct application of constant‑dimension parameterization. We first introduce a reference‑decoupled reformulation of LQT that naturally accommodates the covariance parameterization, guaranteeing a fixed dimension of decision variables independent of data horizon. This formulation is proven to be exactly equivalent to the indirect certainty‑equivalence LQT solution. Leveraging this characterization, we develop offline and online DeePO algorithms. Theoretically, we prove global linear convergence for the offline algorithm using local gradient dominance and smoothness, and show that in the online setting the optimality gap decays linearly up to a bias term that scales inversely with the signal‑to‑noise ratio (SNR). Numerical simulations varify the theoretical results and illustrate the superior tracking performance of the proposed method.
	\end{abstract}

	\end{frontmatter}

	\section{Introduction}

	Optimal control problems, particularly the Linear Quadratic Regulator (LQR) and Linear Quadratic Tracking (LQT), serve as fundamental benchmarks in data-driven control. Existing data-driven optimal control methods are fundamentally classified into two distinct paradigms: indirect methods and direct methods. The former follows a classical ``identify-then-control" principle, whereas the latter seeks an end-to-end mapping from raw data to optimal control policies. Furthermore, depending on the data collection mechanism and the interaction mode with the environment, these algorithms can be broadly categorized into episodic and (fully) adaptive settings. While the episodic setting allows for pre-collected batch data or repeated trials with system resets, the adaptive setting presents a far more stringent challenge, requiring the controller to continuously refine its policy in real-time along a single, uninterrupted state trajectory while maintaining system stability.

	As a highly reliable and modular paradigm, indirect data-driven methods leverage mature theoretical frameworks of identification and model-based control synthesis to provide rigorous performance guarantees. A rich line of work has established non-asymptotic performance bounds, both in episodic settings \cite{abbasi2011regret, dean2020sample, mania2019certainty} and, more closely related to our setting, in adaptive schemes where the policy is updated at every time step \cite{wang2021exact, lu2023almost}.  By recursively refining the model estimate and re-solving the certainty-equivalence optimal control problem, these online indirect methods can achieve strong regret and tracking performance. Their success, however, relies on an explicit intermediate identification step: at each time instant, one must update the model parameters and then solve the associated algebraic Riccati equations to obtain the controller gains. This model-solve cycle, while theoretically well understood, naturally motivates an alternative end-to-end paradigm,  where the policy parameters are updated directly from streaming data without the need for explicit model reconstruction or Riccati solutions. Such a direct approach could potentially simplify the algorithmic structure, and reduce per-step computation.

	Direct data-driven control offers exactly such an end-to-end philosophy. Rooted in behavioral theory and Willems' Fundamental Lemma \cite{willems2005note}, early works parameterize the LQR problem directly from raw data, enabling optimal control without identifying system matrices \cite{de2019formulas, de2021low, dorfler2023certainty}. The data informativity framework is also used for characterizing the minimal data requirements for direct optimal control \cite{van2020data, van2020noisy}, with recent extensions to LQT \cite{trentelman2021informativity, zhou2025dynamic}. Despite their conceptual appeal, these formulations lack a iterative structure and are inherently episodic. Besides, the number of decision variables grows linearly with the data length, making them computationally intractable for online adaptive control. Another line of research, known as Policy optimization (PO) methods, naturally support iterative refinement. However, existing data-driven PO largely relies on zeroth-order gradient estimation, which requires numerous system rollouts and is thus ill-suited to continuous online learning along a single trajectory \cite{malik2020derivative}. 
	
	A significant step toward overcoming these barriers was recently achieved by proposing the Data-EnablEd Policy Optimization (DeePO) method \cite{zhao2025data,zhao2025policy}. By introducing a sample covariance parameterization, DeePO ensures that the dimension of the policy parameters remains constant irrespective of the data length, while the exact policy gradient can be estimated directly from a single closed-loop trajectory. This innovation successfully realizes the long-standing goal of online, direct, data-driven optimal control for the LQR problem. However, extending DeePO to the LQT problem is non-trivial: unlike LQR, the LQT policy involves both a feedback gain and a feedforward gain, and a time-varying reference trajectory introduces intricate coupling between the reference signal and the policy parameters. The resulting optimization landscape precludes a straightforward generalization and requires a careful re-engineering of the parameterization and problem formulation. This paper addresses exactly this gap. We show that the coupling can be elegantly resolved through a reference-decoupled reformulation, which then enables a rigorous extension of the covariance parameterization and the DeePO framework to the LQT setting.

	To tackle the coupling challenge, this paper first proposes a reference-decoupled reformulation of the LQT problem. The key insight is that the optimal feedback gain is independent of the reference trajectory. By considering multiple orthogonal reference points that span the state space and summing their costs, we obtain an aggregated optimization problem whose unique optimal solution exactly recovers both the optimal feedback and feedforward gains of the original LQT. This yields a compact formulation that depends solely on the policy parameters and system matrices, free from the future reference trajectory. Building on this, we extend the covariance parameterization to the LQT case, expressing the closed-loop system via sample covariances of the collected data. The resulting data-driven formulation has decision variables of fixed dimension, independent of the data length, thereby eliminating the dimension explosion problem of classical methods based on the Fundamental Lemma. Importantly, we prove that this data-driven problem is strictly equivalent to the standard indirect certainty-equivalence LQT, inheriting the same theoretical guarantees while circumventing explicit model identification and online solutions of algebraic Riccati equations.

	We then develop offline and online DeePO algorithms. The offline algorithm performs projected gradient descent using a closed-form policy gradient. By establishing equivalence to a model-based policy gradient iteration~\cite{zhao2023data} with a data-dependent scaling matrix and proving local gradient dominance and smoothness properties, we show global linear convergence. For the online adaptive setting, we design an algorithm that interleaves control, data augmentation, and policy updates along a single trajectory. A dynamic re-parameterization strategy ensures consistent gradient updates despite time-varying data matrices, while injected exploration noise maintains persistent excitation. We prove that, under bounded noise and persistent excitation, the optimality gap is bounded by a linearly decaying term plus a constant bias inversely proportional to the signal-to-noise ratio, and that the closed-loop state remains uniformly bounded. While preliminary work~\cite{kang2025linear} focused on offline set-point tracking, this paper provides a comprehensive online adaptive framework for time-varying LQT with full convergence analysis. Numerical simulations validate the theoretical findings and tracking performance.

	The remainder of this paper is organized as follows. Section~\ref{sec:preliminary} introduces the problem formulation. Section~\ref{sec:covariance_lqt} presents the reference-decoupled formulation and covariance parameterization. Sections~\ref{sec:offline_deepo} and~\ref{sec:online_deepo} detail the offline and online algorithms with convergence analyses. Section~\ref{sec:experiment} provides numerical experiments, and Section~\ref{sec:conclusion} concludes.

	\section{Problem Formulation} \label{sec:preliminary}

	In this section, we introduce some notations and preliminaries on the linear quadratic tracking (LQT) problem, as well as the policy optimization algorithm. Then, we formulate the problem in this paper.

	\subsection{The model-based LQT problem}
	Consider the following discrete-time linear time-invariant (LTI) system:
	\begin{equation} \label{eq:system_dynamics}
		x_{t+1} = A x_t + B u_t + w_t,
	\end{equation}
	where $x_t \in \bR^n$ is the system state, $u_t \in \bR^m$ is the control input, and $w_t \in \bR^n$ is the process noise at time $t$.

	In this work, we consider the linear quadratic tracking control on the system (A,B), the objective of which is to design a control policy $u_t = \pi(x_t, z_t,x_{t-1},z_{t-1},\dots)$ to minimize the following infinite-horizon cost function:
	\begin{equation} \label{eq:lqt_cost}
	\begin{aligned}
		&J_\pi = \\
		&\limsup_{T\rightarrow \infty} \frac{1}{T}\bE \left[ \sum_{t=0}^{T-1} \left( (x_t - z_t)^\top Q (x_t - z_t) + u_t^\top R u_t \right) \right],
	\end{aligned}
	\end{equation}
	where $Q \succ 0$ and $R \succ 0$, and $\{z_t \in \bR^n\}$ is the reference trajectory to be tracked.

	The model-based case of LQT assumes that the system dynamics matrices $A$ and $B$ are known, and has been well studied in the literature\cite{lewis2012optimal}. Suppose that the noise process $\{w_t\}$ is white Gaussian with zero mean and bounded covariance, independent of the initial state. The optimal control policy is given by the following linear state-feedback law:
	\begin{equation} \label{eq:policy_structure}
		u_t^* = K^* x_t + K_v^* v_t,
	\end{equation}
	where $K^*$ and $K_v^*$ are gain matrices computed by solving the associated Riccati equations as
	\begin{equation} \label{eq:riccati_equations}
		\begin{aligned}
		P &= Q + A^\top P A + A^\top P B (R + B^\top P B)^{-1} B^\top P A, \\
		K^* &= -(R + B^\top P B)^{-1} B^\top P A, \\
		K_v^* &= (R + B^\top P B)^{-1} B^\top,
		\end{aligned}
	\end{equation}
	and $v_t$ indicates the influence of the future reference trajectory $z_{t}, z_{t+1}, \ldots$ on the current control input $u_t^*$, which can be computed by solving the following backward dynamics:
	\begin{equation} \label{eq:vt_dynamics}
		v_{t} = (A + B K^*)^\top v_{t+1} + Q z_{t}.
	\end{equation}

	\subsection{Indirect certainty-equivalence LQT via model identification}
	In the data-driven setting, the system dynamics matrices $A$ and $B$ are unknown, and need to be estimated from data. The certainty-equivalence approach is to first estimate the system matrices via least squares, and then design the optimal controller based on the estimated model. 

	Consider the states, inputs, noises and the successor states of length $t$:
	\begin{equation} \label{eq:data_collection}
		\begin{aligned}
		X_{0,t} &:= [x_0, x_1, \ldots, x_{t-1}] \in \bR^{n \times t}, \\
		U_{0,t} &:= [u_0, u_1, \ldots, u_{t-1}] \in \bR^{m \times t}, \\
		W_{0,t} &:= [w_0, w_1, \ldots, w_{t-1}]  \in \bR^{n \times t}, \\
		X_{1,t} &:= [x_1, x_2, \ldots, x_{t}] \in \bR^{n \times t},
		\end{aligned}
	\end{equation}
	where the matrices satisfy the system dynamics~\eqref{eq:system_dynamics} in the following matrix form:
	\begin{equation} \label{eq:data_equation}
		X_{1,t} = A X_{0,t} + B U_{0,t} + W_{0,t}.
	\end{equation}
	We assume in this paper that the data is persistently exciting (PE)~\cite{willems2005note}, i.e. $D_{0,t} \triangleq {\left[U_{0,t}^\top, X_{0,t}^\top\right]}^\top$ has   full row rank. The PE condition is necessary in designing LQR, as well as LQT controller~\cite{van2020data,kang2023minimum}, and can be ensured by applying sufficiently rich inputs during data collection.

	Following the certainty-equivalence principle~\cite{dorfler2023certainty}, we can  first estimate the system matrices $(A,B)$ via the least squares estimator and the subspace relation~\eqref{eq:data_equation}:
	\begin{equation} \label{eq:ls_estimator}
		[\hA_t, \hB_t] = \arg\min_{A,B} \Vert X_{1,t} - A X_{0,t} - B U_{0,t} \Vert_F^2 = X_{1,t} D_{0,t}^\dagger.
	\end{equation}
	Then, the certainty-equivalence LQT controller is given by solving the following optimization problem:
	\begin{equation} \label{eq:ce_lqt}
		\begin{aligned}
		\min \limits_{K, K_v}~ &J_\pi, \\
		~s.t. ~~ &\text{\eqref{eq:system_dynamics} with the estimate model $[\hA_t, \hB_t]$,} \\
		&u_t = \pi(x_t,z_t) = K x_t + K_v v_t \\
		&v_{t} = (A + B K) v_{t+1} + Q z_{t}.
		\end{aligned}
	\end{equation}

	Such an indirect data-driven method is also known as the certainty-equivalence (CE) approach, and has been widely studied in the literature~\cite{dean2020sample,mania2019certainty}. In the following, we use $\hK$ and $\hK_v$ to denote the solution of~\eqref{eq:ce_lqt}. 

	\subsection{Direct data-driven LQT via data-based parameterization} \label{sec:direct_lqt}
	Instead of identifying the system model, another line of research focuses on directly learning the optimal controller from data, which is known as the direct data-driven method. 

	Due to the PE condition of data, there exist a matrix $G$ and a matrix $G_v$ such that 
	\begin{equation} \label{eq:data_parameterization}
		\begin{bmatrix}
		K \\ I_n
		\end{bmatrix} =
		\begin{bmatrix}
		U_{0,t} \\ X_{0,t}
		\end{bmatrix} G, ~~
		\begin{bmatrix}
		K_v \\ \boldsymbol{0}_n
		\end{bmatrix} = 
		\begin{bmatrix}
		U_{0,t} \\ X_{0,t}
		\end{bmatrix} G_v
	\end{equation}
	holds for any given $K$ and $K_v$. Then, the closed-loop system under the control policy $u_t = K x_t + K_v v_t$ can be represented by the following data-based dynamics:
	\begin{equation} \label{eq:data_based_dynamics}
		\begin{aligned}
		A+B K &= (X_{1,t} - W_{0,t}) G , \\
		B K_v &= (X_{1,t} - W_{0,t}) G_v.
		\end{aligned}
	\end{equation}
	By the certainty-equivalence principle, the term $W_{0,t}$ in~\eqref{eq:data_based_dynamics} is ignored, and we get a totally data-based representation of the closed-loop system. Then, the direct data-driven LQT controller can be designed by solving the following optimization problem:
	\begin{equation} \label{eq:direct_lqt}
		\begin{aligned}
		\min \limits_{G, G_v}~ &J_\pi, \\
		~s.t. ~~ &x_{t+1} = X_{1,t} G x_t + X_{1,t} G_v v_t, \\
		&u_t = \pi(x_t,z_t) = U_{0,t} G x_t + U_{0,t} G_v v_t, \\
		&v_{t} = X_{1,t} G v_{t+1} + Q z_{t}.
		\end{aligned}
	\end{equation}

	The idea of this data-driven parameterization approach is first proposed in~\citet{de2019formulas} for the LQR problem and only has the homogeneous part, i.e. $K$ and $G$ part in \eqref{eq:data_parameterization}. An extension to the affine policy like~\eqref{eq:policy_structure} is first studied in~\citet{breschi2023data} to the best of our knowledge. However, the parameterization \eqref{eq:data_parameterization} becomes larger as time $t$ increases, which makes the optimization problem \eqref{eq:direct_lqt} intractable when running online.

	\subsection{Direct data-driven LQT with online closed-loop data}

	This paper aims to provide an online direct data-driven policy optimization method for LQT, which extends the DeePO framework \cite{zhao2025policy}. In the following sections we first propose a \emph{reference‑decoupled reformulation} of the LQT problem and a \emph{covariance‑based parameterisation} that yields a fixed‑dimension optimization problem. The resulting algorithms, both offline and online, will then be shown to converge linearly (for the offline case) or to a neighbourhood of optimality (for the online case).

	\section{Direct Data-driven LQT with Covariance Parameterization} \label{sec:covariance_lqt}

	In this section, we first decouple the reference from the policy, then introduce a covariance‑based parameterization that yields a constant‑size optimization problem.

	\subsection{Reference-Decoupled Formulation of LQT}
	Observed from \eqref{eq:riccati_equations}, the optimal control gain is independent of the reference trajectory $\{z_t\}$. Inspired by this fact, we first reformulate the LQT problem into a reference-decoupled form. 

	As a special case, consider the set-point tracking problem, i.e. $z_t \equiv \delta$ for all $t$. By \eqref{eq:vt_dynamics}, $v_t$ becomes also time-invariant with $v_\delta = (I_n - A - B K^*)^{-\top} Q \delta.$ Then, the optimal control policy~\eqref{eq:policy_structure} can be rewritten as
	\begin{equation} \label{eq:setpoint_policy}
		u_t^* = K^* x_t + K_v^*(I_n - A - B K^*)^{-\top} Q \delta \triangleq K^* x_t + L^*\delta.
	\end{equation}
	Now, the set-point tracking problem becomes
	\begin{equation} \label{eq:set_point_lqt}
		\begin{aligned}
		\min \limits_{K, L} ~&J_\pi, \\
		~s.t. ~~ &x_{t+1} = A x_t + B u_t + w_t,\\
		& u_t = \pi(x_t, \delta) = K x_t + L \delta.
		\end{aligned}
	\end{equation}
	Since the previous results on LQT still hold true, $K^*$ in \eqref{eq:setpoint_policy} and $L^* = K_v^*(I - A - B K^*)^{-\top} Q$ must be one of the optimal solutions to \eqref{eq:set_point_lqt}. Intuitively, we can solve the set-point tracking problem~\eqref{eq:set_point_lqt} to get the optimal gain $K^*$ and then compute $K_v^*$ from $L^*$. However, the optimal solution to \eqref{eq:set_point_lqt} is not unique, since all $L$ that satisfy $L\delta = L^* \delta$ are the same policy and thus optimal. Clearly, a single $\delta$ determines $L$ only in its own direction. To Uniquely identify the whole matrix $L$, we need $n$ linearly independent reference points. Specifically, we choose $\delta^i = e^i$, where $e^i$ is the $i$-th column of the identity matrix $I_n$ and obtain the following reformulation of model-based LQT by adding the cost funtions under these $n$ reference points:
	\begin{equation} \label{eq:multi_set_point_lqt}
		\begin{aligned}
		\min \limits_{K, L} ~&J_\pi^{\text{sum}} = \sum_{i=1}^n J_\pi^{(i)}, \\
		~s.t. ~~ &x_{t+1}^{(i)} = A x_t^{(i)} + B u_t^{(i)} + w_t^{(i)},~ i=1,\ldots,n,\\
		& u_t^{(i)} = \pi(x_t^{(i)}, \delta^i) = K x_t^{(i)} + L e^i,
		\end{aligned}
	\end{equation}
	where $J_\pi^{(i)}$ is the cost function~\eqref{eq:lqt_cost} under the reference point $e^i$. 

	The following lemma shows that the optimal solution to~\eqref{eq:multi_set_point_lqt} recovers the optimal gains of the original LQT problem. 
	\begin{lemma} \label{lem:equivalence}
		The optimal solution $(K^{sum}, L^{sum})$ to \eqref{eq:multi_set_point_lqt} is unique, and satisfies that $K^{sum}=K^*$ and $L^{sum} = K_v^*(I_n - A - B K^*)^{-\top} Q$ where $K^*$ and $K_v^*$ are given in \eqref{eq:riccati_equations}.
	\end{lemma}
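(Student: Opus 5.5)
The plan is to split each cost $J_\pi^{(i)}$ into a zero-mean fluctuation part, which depends only on $K$, and a steady-state mean part, which depends on $(K,L)$. Each part is then lower-bounded separately, and equality in both bounds pins down $(K,L)$ uniquely. Throughout, restrict to stabilizing $K$, i.e.\ $\rho(A+BK)<1$; otherwise $J_\pi^{(i)}=\infty$ since the noise keeps exciting the state. Assume, as is standard for LQR uniqueness, that the noise covariance $W\succ 0$.

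\textbf{Step 1 (decomposition).} Fix a stabilizing $K$ and any $L$. Under $u_t=Kx_t+Le^i$, split $x_t^{(i)}=\bar x^{(i)}+\tilde x_t^{(i)}$ plus a transient that does not affect the average cost. Here
\[
\bar x^{(i)}=(I_n-A-BK)^{-1}BLe^i, \qquad \tilde x_{t+1}=(A+BK)\tilde x_t+w_t .
\]
Since $\tilde x_t$ has zero mean, the cross terms vanish in the limit average, giving
\[
J_\pi^{(i)}(K,L)=\mathrm{tr}\big((Q+K^\top RK)\Sigma_K\big)+f_i(\bar x^{(i)},\bar u^{(i)}),
\]
where $\Sigma_K$ is the stationary covariance of $\tilde x_t$ and $\bar u^{(i)}=K\bar x^{(i)}+Le^i$. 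The mean part is
\[
f_i(\bar x,\bar u)=(\bar x-e^i)^\top Q(\bar x-e^i)+\bar u^\top R\bar u .
\]
The first term is exactly the stochastic LQR cost $J_{\mathrm{LQR}}(K)$. Standard LQR theory gives that $K^*$ from \eqref{eq:riccati_equations} is its unique minimizer over stabilizing gains.

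\textbf{Step 2 (lower bound for the mean part and its attainability).} Every pair $(\bar x^{(i)},\bar u^{(i)})$ produced by some $(K,L)$ satisfies the equilibrium constraint $\bar x=A\bar x+B\bar u$. Hence $f_i\ge f_i^\star$, where $f_i^\star$ is the minimum of $f_i$ over this linear constraint. Because $Q,R\succ 0$, this is a strictly convex QP with a unique minimizer $(\bar x_i^\star,\bar u_i^\star)$. Summing over $i$ gives
\[
J_\pi^{\text{sum}}(K,L)\ge nJ_{\mathrm{LQR}}(K^*)+\sum_i f_i^\star .
\]
The bound is attained: take $K=K^*$ and define $L$ columnwise by $Le^i=\bar u_i^\star-K^*\bar x_i^\star$. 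Then $\bar x=\bar x_i^\star$ solves $\bar x=(A+BK^*)\bar x+BLe^i$, and this solution is unique since $I-A-BK^*$ is invertible. Therefore the lower bound is the optimal value of \eqref{eq:multi_set_point_lqt}.

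\textbf{Step 3 (uniqueness).} Any optimal $(K,L)$ must achieve equality in both bounds. Equality in the LQR term forces $K=K^*$. Equality in each $f_i$ forces $(\bar x^{(i)},\bar u^{(i)})=(\bar x_i^\star,\bar u_i^\star)$, hence $Le^i=\bar u_i^\star-K^*\bar x_i^\star$ for every $i$. Since the $e^i$ span $\bR^n$, $L$ is uniquely determined. So $(K^{sum},L^{sum})$ is unique and $K^{sum}=K^*$.

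\textbf{Step 4 (identifying $L$).} As argued before the lemma, the LQT policy $u_t=K^*x_t+L^*e^i$ with $L^*=K_v^*(I_n-A-BK^*)^{-\top}Q$ is optimal for each single set-point problem. Hence it attains the minimum of each $J_\pi^{(i)}$, and therefore of the sum. By the uniqueness in Step 3, $L^{sum}=L^*$.

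As a cross-check, one can verify directly that $\bar u_i^\star-K^*\bar x_i^\star=L^*e^i$. This follows from the KKT conditions of the steady-state QP combined with the Riccati equation, and it avoids relying on the set-point optimality argument.

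\textbf{Main obstacle.} The main subtlety is Step 1: the mean part depends on $K$ as well as on $L$, so a priori the joint minimization could trade LQR performance for a better mean. The separate lower bound in Step 2, together with its attainability at $K^*$, removes this coupling. The assumption $W\succ 0$ is needed for the uniqueness of $K$.
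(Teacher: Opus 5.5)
Your argument is correct, but it follows a different route from the paper.

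The paper's proof is a citation argument. It imports two facts from the LQT literature: each single set-point problem has the unique optimal gain $K^*$, and every optimal feedforward satisfies $Le^i=L^*e^i$ with $L^*=K_v^*(I_n-A-BK^*)^{-\top}Q$. It then passes to the sum, implicitly using that $(K^*,L^*)$ is a common minimizer of all $n$ subproblems, so any minimizer of $J_\pi^{\text{sum}}$ must minimize every $J_\pi^{(i)}$. It concludes by the spanning property of $\{e^i\}$. This is shorter, but it outsources both uniqueness claims.

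You instead prove them directly. You split $J_\pi^{(i)}$ into the stochastic LQR cost, which depends on $K$ only, plus the steady-state cost $f_i$. The paper uses this same decomposition only later, in \eqref{eq:cost_alternative_form}, for bounding. Bounding the two pieces separately, with attainability at $K=K^*$, removes the apparent $K$--$L$ coupling. It yields uniqueness of $K$ from LQR uniqueness, and uniqueness of each column $Le^i$ from strict convexity of the steady-state QP.

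Your route also exposes assumptions the paper leaves implicit: the restriction to stabilizing $K$, and nondegenerate noise $W\succ 0$. Without noise $K$ is genuinely non-unique. Any stabilizing $K$ paired with $Le^i=\bar u_i^\star-K\bar x_i^\star$ attains the same average cost.

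Your Step 4 still leans on the same LQT optimality fact as the paper. If you want to be fully self-contained, the KKT cross-check does go through. Take $\bar x=(I_n-A-BK^*)^{-1}BL^*e^i$, $\bar u=K^*\bar x+L^*e^i$, and multiplier $\lambda=(I_n-A-BK^*)^{-\top}Qe^i-P\bar x$. One checks $R\bar u=B^\top\lambda$ and $Q(\bar x-e^i)+(I_n-A)^\top\lambda=0$. This uses the Riccati equation with a minus sign in front of $A^\top PB(R+B^\top PB)^{-1}B^\top PA$; the plus sign in \eqref{eq:riccati_equations} is a typo.
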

	\begin{pf}
		By \citet{lewis2012optimal}, the optimal gain $K^i$ to the $i$-th set-point tracking problem, i.e. the problem \eqref{eq:set_point_lqt} with $\delta = e^i$, is unique. Since \eqref{eq:set_point_lqt} is a special case of general LQT, then it is clear that $K^i$ equals to $K^*$ for all $i=1,\ldots,n$. Therefore, the optimal gain $K^{sum}$ to \eqref{eq:multi_set_point_lqt} must also equal to $K^*$.

		For the gain $L^{sum}$, it is also known that all the optimal solutions $L^i$ to the $i$-the set-point tracking problem \eqref{eq:set_point_lqt} will satisfy $L^i \delta = K_v^*(I_n - A - B K^*)^{-\top} Q \delta$. Since $\{\delta^i\}_{i=1}^n$ span the whole space $\bR^n$, then $K_v^*(I_n - A - B K^*)^{-\top} Q$ is the only matrix that minimizes \eqref{eq:multi_set_point_lqt}. This completes the proof. 
	\end{pf}

	For a given policy $\theta = [K, L]$, define some policy-dependent matrices as follows:
	\begin{equation} \label{eq:policy_matrices}
		\begin{aligned}
		P_K = ~& Q + K^\top R K + (A + B K)^\top P_K (A + B K), \\
		Y_K = ~& (I_n - A - B K)^{-1}, \\
		G_\theta =~& Y_K^\top (-Q + K^\top R L + (A+BK)^\top P_K B L). 
		\end{aligned}
	\end{equation}

	The next lemma shows that we can drop the states and inputs in \eqref{eq:multi_set_point_lqt}, and get a compact form that only depends on the policy parameters. 
	\begin{lemma} \label{lem:compact_form}
		The reformulated LQT problem~\eqref{eq:multi_set_point_lqt} is equivalent to the following model-based optimization problem:
		\begin{equation} \label{eq:compact_lqt}
			\begin{aligned}
			\min \limits_{\theta} ~& C(\theta) = \mt (Q + L^\top R L + L^\top B^\top P_K B L\\ 
			&~~~~~~~~~ + nP_K+ 2G_\theta^\top B L), \\
			~s.t. ~~& \rho(A+BK) <~ 1,
			\end{aligned}
		\end{equation}
	\end{lemma}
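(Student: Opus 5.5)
The plan is to compute each summand $J_\pi^{(i)}$ in closed form for a fixed policy $\theta=[K,L]$, and then sum over $i$ using $\sum_{i}e^i(e^i)^\top=I_n$. First I will show that the constraint $\rho(A+BK)<1$ loses nothing. If $A+BK$ is not Schur stable, then the noise drives $\bE\|x_t\|^2$ unboundedly. With $Q\succ 0$ this makes every $J_\pi^{(i)}$ infinite, so such policies are never optimal. (This requires the noise covariance to excite the unstable modes; I take $\bE[w_tw_t^\top]=I_n$, which is what the term $n\,\mt(P_K)$ in \eqref{eq:compact_lqt} implicitly assumes.) For stable $A_K:=A+BK$, I decompose $x_t^{(i)}=\bar x^{(i)}+\tilde x_t^{(i)}$. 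Here $\bar x^{(i)}=A_K\bar x^{(i)}+BLe^i$, so $\bar x^{(i)}=Y_KBLe^i=:Me^i$ with $M:=Y_KBL$. The fluctuation $\tilde x_t^{(i)}$ obeys $\tilde x_{t+1}=A_K\tilde x_t+w_t$ and has zero mean asymptotically.

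In the limsup average the cross terms between mean and fluctuation vanish. The fluctuation part then contributes $\mt\big((Q+K^\top RK)\Sigma_K\big)$, where $\Sigma_K$ is the stationary covariance. By the standard Lyapunov duality this equals $\mt(P_K)$ with $P_K$ from \eqref{eq:policy_matrices}. Summing over the $n$ references gives the term $n\,\mt(P_K)$. The mean part of the $i$-th cost is $(e^i)^\top\big[(M-I)^\top Q(M-I)+(KM+L)^\top R(KM+L)\big]e^i$. After summation this becomes
\[
\mt(Q)-2\,\mt(QM)+\mt\big(M^\top(Q+K^\top RK)M\big)+2\,\mt(M^\top K^\top RL)+\mt(L^\top RL).
\]

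The main step, and the only real obstacle, is to recast this trace into the form stated in \eqref{eq:compact_lqt}. I plan to use two identities. The first is $Q+K^\top RK=P_K-A_K^\top P_KA_K$, from the Lyapunov equation. The second is $M=A_KM+BL$, from the definition of $Y_K$. Together they give
\[
M^\top(Q+K^\top RK)M=M^\top P_KM-(M-BL)^\top P_K(M-BL),
\]
whose trace is $2\,\mt(L^\top B^\top P_KM)-\mt(L^\top B^\top P_KBL)$. Substituting $M=A_KM+BL$ once more turns $2\,\mt(L^\top B^\top P_KM)$ into $2\,\mt(L^\top B^\top P_KA_KM)+2\,\mt(L^\top B^\top P_KBL)$. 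On the other side, by the definition of $G_\theta$,
\[
\mt(G_\theta^\top BL)=\mt\big((-Q+L^\top RK+L^\top B^\top P_KA_K)M\big).
\]
Collecting terms should then give exactly
\[
\mt(Q)+\mt(L^\top RL)+\mt(L^\top B^\top P_KBL)+2\,\mt(G_\theta^\top BL)
\]
for the mean part. This uses the cyclicity of the trace to match $\mt(M^\top K^\top RL)=\mt(L^\top RKM)$. Adding $n\,\mt(P_K)$ yields $C(\theta)$. Hence $J^{\text{sum}}_\pi=C(\theta)$ on the stabilizing set, and it is $+\infty$ outside that set. This establishes the equivalence of the two problems, and by Lemma~\ref{lem:equivalence} their common unique minimizer is $(K^*,L^*)$.
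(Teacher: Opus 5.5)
Your proof is correct, and the algebra checks out. With $M=Y_KBL$ (the paper's $Z_\theta$; note that $M$ is already the paper's name for the scaling matrix) and $(I-A_K)M=BL$, the mean part reduces to $\mt(Q)+\mt(L^\top RL)+\mt(L^\top B^\top P_KBL)+2\mt(G_\theta^\top BL)$. Under unit noise covariance, the fluctuation part contributes $\mt(P_K)$ per reference.

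Your route differs from the paper's. Appendix~A uses the average-cost Bellman equation. It posits a relative value function $f^i(x;\theta)=x^\top Px+2g^{i\top}x+s$ for each set-point problem and matches the quadratic, linear and constant coefficients. This reads off $P=P_K$, $g^i=(I-A_K^\top)^{-1}(-Q+K^\top RL+A_K^\top P_KBL)e^i=G_\theta e^i$, and $C^i(\theta)$ directly in the stated form. There, $G_\theta$ arises naturally as the linear coefficient of the value function (the analogue of the costate $v_t$), and no trace rearrangement is needed.

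You instead compute the stationary mean $\bar x^{(i)}=Y_KBLe^i$ and covariance. This produces the steady-state form $\mt\big((Z-I)^\top Q(Z-I)+(KZ+L)^\top R(KZ+L)+nP_K\big)$, which is exactly the alternative expression \eqref{eq:cost_alternative_form} that the paper uses in Appendix~C without derivation. You then convert it via the Lyapunov identity and $M=A_KM+BL$.

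What each route buys:
\begin{itemize}
\item Your approach is more elementary and more complete. It handles the $\limsup$ average and the transient cross terms explicitly, and it justifies the constraint $\rho(A+BK)<1$; the Bellman argument tacitly presupposes stability for a quadratic relative value function to exist. It also makes the implicit assumption $\bE[w_tw_t^\top]=I_n$ explicit, and it proves the identity the paper later relies on.
\item The paper's approach is shorter and explains where the specific form of $G_\theta$ comes from.
\end{itemize}

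One small tightening: for unstable $A_K$, argue through the covariance rather than $\bE\|x_t\|^2$. The bound $\bE\|x_t-e^i\|^2\ge\mt\,\mathrm{Cov}(x_t)\ge\sum_{s<t}\|A_K^s\|_F^2\ge t$ shows that the mean cannot offset the growth and that the Ces\`aro average diverges.
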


	\begin{remark}
	All the results in this subsection hold true if we replace the model-based matrices $A$ and $B$ by their estimates $\hA$ and $\hB$. In the following, we use the notation with a hat to denote the estimated quantities in this subsection. 
	\end{remark}

	\subsection{Covariance Parameterization of LQT}
	a data-driven version of~\eqref{eq:compact_lqt} is developed using only the collected data. To this end, a data-based parameterization like that introduced in Section~\ref{sec:direct_lqt} is needed to express the cost function $C(\theta)$. To address the increasing dimensionality in the online setting, we extend the covariance parameterization in \cite{zhao2025data} for LQR to the LQT case. 

	First, define the covariance matrices as $\Lambda_t := \frac{1}{t} D_{0,t} D_{0,t}^\top$
	and define the verified data matrices as
	\begin{equation} \label{eq:verified_data_matrices}
		\begin{aligned}
		\oX_{0,t} &:= \frac{1}{t} X_{0,t} D_{0,t}^\top, \quad \oU_{0,t} := \frac{1}{t} U_{0,t} D_{0,t}^\top, \\
		\oX_{1,t} &:= \frac{1}{t} X_{1,t} D_{0,t}^\top, \quad \oW_{0,t} := \frac{1}{t} W_{0,t} D_{0,t}^\top.
		\end{aligned}
	\end{equation}
	Then, the covariance parameterization of LQT is given as $\xi = [V,H]$ with $V,H \in \bR^{(m+n) \times n}$ satisfying
	\begin{equation} \label{eq:covariance_parameterization}
		\begin{aligned}
		\begin{bmatrix} K \\ I_n \end{bmatrix} = \Lambda_t V = \begin{bmatrix} \oU_{0,t} \\ \oX_{0,t} \end{bmatrix} V, 
		\begin{bmatrix} L \\ \boldsymbol{0}_n \end{bmatrix} = \Lambda_t H = \begin{bmatrix} \oU_{0,t} \\ \oX_{0,t} \end{bmatrix} H.
		\end{aligned}
	\end{equation}

	By the covariance parameterization \eqref{eq:covariance_parameterization}, the closed-loop system matrices can be represented directly by the data matrices and policy $\xi$ as
	\begin{equation} \label{eq:closed_loop_matrices}
		\begin{aligned}
		A + B K &= (\oX_{1,t} - \oW_{0,t}) V, \\
		B L &= (\oX_{1,t} - \oW_{0,t}) H.
		\end{aligned}
	\end{equation}
	Then, by substituting \eqref{eq:closed_loop_matrices} into \eqref{eq:policy_matrices} and \eqref{eq:compact_lqt}, we can obtain a data-based representation of the LQT problem:
	\begin{equation} \label{eq:data_based_lqt} 
		\begin{aligned}
		\min \limits_{\xi} ~& C(\xi) = \mt (Q + H^\top \oU_{0,t}^\top R \oU_{0,t}H + H^\top \oX_{1,t}^\top P_V \oX_{1,t} H \\ 
		&~~~~~~~~~ + nP_V+ 2G_\xi^\top \oX_{1,t} H), \\
		~s.t. ~~& \rho(\oX_{1,t}V) <~ 1, \oX_{0,t} V = \boldsymbol{I}_n, \text{ and }\oX_{0,t} H = \boldsymbol{0}_n,
		\end{aligned}
	\end{equation}
	where $P_V$ and $G_\xi$ are defined as:
	\begin{equation}\label{eq:data_based_policy_matrices}
		\begin{aligned}
		P_V = ~& Q + V^\top \oU_{0,t}^\top R \oU_{0,t}  V  + V^\top\oX_{1,t}^\top P_V \oX_{1,t}V, \\
		Y_V = ~& (I_n - \oX_{1,t} V)^{-1}, \\
		G_\xi =~& Y_V^\top (-Q + V^\top \oU_{0,t}^\top R \oU_{0,t}H  + V^\top \oX_{1,t}^\top P_V \oX_{1,t} H). 
		\end{aligned}
	\end{equation}
	Notice that the known noises $\oW_{0,t}$ are omited by the certainty-equivalence principle~\cite{dorfler2023certainty}.

	Besides the fixed size of the parameters, the advantage to use the covariance parameterization can also be shown by the equivalence between~\eqref{eq:data_based_lqt} and~\eqref{eq:ce_lqt}. Let $\xi^* = [V^*~H^*]$ denote the optimal solution to \eqref{eq:data_based_lqt}, then we have the following theorem.
	\begin{theo}\label{thm:equivalence}
		The optimal solution to~\eqref{eq:data_based_lqt} and~\eqref{eq:ce_lqt} coincide under the change of variables \eqref{eq:covariance_parameterization} and $\hL = \hK_v(I_n - \hA - \hB \hK)^{-\top} Q$.
	\end{theo}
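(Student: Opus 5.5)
The plan is to show that, once $\oW_{0,t}$ is dropped, the data-based problem \eqref{eq:data_based_lqt} is exactly the compact problem \eqref{eq:compact_lqt} written for the least-squares model $(\hA_t,\hB_t)$, reparameterized by the surjective affine map $\xi=[V,H]\mapsto\theta=[K,L]$ given by \eqref{eq:covariance_parameterization}. Then Lemma~\ref{lem:compact_form} and Lemma~\ref{lem:equivalence}, applied with the hatted matrices as allowed by the Remark, identify the minimizer with the CE solution of \eqref{eq:ce_lqt}.

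\textbf{Step 1: express the data-based closed loop through the estimated model.} By \eqref{eq:ls_estimator}, $[\hB_t,\hA_t]$, ordered to match $D_{0,t}=[U_{0,t}^\top,X_{0,t}^\top]^\top$, equals $X_{1,t}D_{0,t}^\dagger$. Under PE, $D_{0,t}^\dagger=D_{0,t}^\top(D_{0,t}D_{0,t}^\top)^{-1}$, so
\begin{equation*}
\oX_{1,t}=\tfrac1t X_{1,t}D_{0,t}^\top=X_{1,t}D_{0,t}^\dagger\,\Lambda_t=[\hB_t,\hA_t]\,\Lambda_t .
\end{equation*}
Take any $\xi$ that is feasible for \eqref{eq:data_based_lqt}, and define $K:=\oU_{0,t}V$ and $L:=\oU_{0,t}H$. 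The constraints $\oX_{0,t}V=I_n$ and $\oX_{0,t}H=\boldsymbol 0_n$ then give $\Lambda_tV=[K^\top,I_n]^\top$ and $\Lambda_tH=[L^\top,\boldsymbol 0]^\top$. Consequently $\oX_{1,t}V=\hA_t+\hB_tK$ and $\oX_{1,t}H=\hB_tL$, together with $\oU_{0,t}V=K$ and $\oU_{0,t}H=L$.

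\textbf{Step 2: match the cost and constraint term by term.} Substituting these identities into \eqref{eq:data_based_policy_matrices}, the Lyapunov equation for $P_V$ becomes the one for $\hP_K$ in \eqref{eq:policy_matrices}. Since the equation has a unique solution when the closed loop is stable, $P_V=\hP_K$. Likewise $Y_V=\hY_K$ and $G_\xi=\hG_\theta$, so $C(\xi)=\hC(\theta)$ and $\rho(\oX_{1,t}V)=\rho(\hA_t+\hB_tK)$. Conversely, because $\Lambda_t\succ0$ under PE, every stabilizing $\theta$ is reached by the unique $V=\Lambda_t^{-1}[K^\top,I_n]^\top$ and $H=\Lambda_t^{-1}[L^\top,\boldsymbol 0]^\top$, and this $\xi$ satisfies the linear constraints. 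The map between feasible sets is therefore a cost-preserving bijection, so the minimizers correspond: $\xi^*$ solves \eqref{eq:data_based_lqt} if and only if $\theta^*=[\oU_{0,t}V^*,\oU_{0,t}H^*]$ solves the hatted version of \eqref{eq:compact_lqt}.

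\textbf{Step 3: invoke the earlier lemmas.} By Lemma~\ref{lem:compact_form}, which carries over to hatted matrices by the Remark, the hatted version of \eqref{eq:compact_lqt} is equivalent to the hatted version of \eqref{eq:multi_set_point_lqt}. By Lemma~\ref{lem:equivalence}, that problem has the unique solution $K=\hK$ and $L=\hK_v(I_n-\hA-\hB\hK)^{-\top}Q=\hL$, where $(\hK,\hK_v)$ solve the Riccati equations for $(\hA_t,\hB_t)$, i.e.\ \eqref{eq:ce_lqt}. Hence $\Lambda_tV^*=[\hK^\top,I_n]^\top$ and $\Lambda_tH^*=[\hL^\top,\boldsymbol 0]^\top$, which is the claim. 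As a byproduct, the solution $\xi^*$ is unique.

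\textbf{Main obstacle.} The delicate point is the hypothesis in the Remark that Lemmas~\ref{lem:equivalence} and \ref{lem:compact_form} transfer to the estimated model. This requires $(\hA_t,\hB_t)$ to be stabilizable, so that the Riccati solution exists and the feasible set is nonempty, and it uses $Q\succ0$ for uniqueness. I would state this as a standing assumption, implicit in \eqref{eq:ce_lqt} being well posed, rather than try to prove it. The remaining care is the block ordering of $[\hB_t,\hA_t]$ against $D_{0,t}$, and the invertibility of $\Lambda_t$, which follows from PE.
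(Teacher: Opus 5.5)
Your proposal is correct and follows the paper's proof. The identity $\oX_{1,t}=[\hB_t,\hA_t]\Lambda_t$ turns \eqref{eq:data_based_lqt} into the hatted version of \eqref{eq:compact_lqt}, and Lemmas~\ref{lem:compact_form} and~\ref{lem:equivalence} then identify the minimizer as $(\hK,\hL)$; your explicit bijection between feasible sets (via $\Lambda_t\succ 0$) and your stabilizability caveat for $(\hA_t,\hB_t)$ make precise points the paper leaves implicit.
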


	Theorem \ref{thm:equivalence} indicates that the covariance parameterization based data-driven LQT~\eqref{eq:data_based_lqt} is self-regularized~\cite{zhao2025regularization}, in the sence that it does not need an extra regularization term to guarantee equivalence like the parameterization in Section~\ref{sec:direct_lqt} \cite{dorfler2022role}. Note that the optimal costs of \eqref{eq:ce_lqt} and \eqref{eq:data_based_lqt} do not coincide in general like in \cite{zhao2025data}, since the reference-decoupled LQT formulation \eqref{eq:multi_set_point_lqt} omits the reference signal. However, by Lemma~\ref{lem:equivalence}, the optimal solution to \eqref{eq:data_based_lqt} can still recover the optimal gains of the original LQT problem.

	\section{Offline DeePO for LQT}\label{sec:offline_deepo}
	In this section, we present an offline DeePO algorithm to solve the data-driven LQT problem~\eqref{eq:data_based_lqt}. We omit the time index $t$ in the rest of this section for simplicity.

	\subsection{The offline DeePO Algorithm for solving~\eqref{eq:data_based_lqt}}
	We adopt the policy gradient method to solve the data-driven LQT problem~\eqref{eq:data_based_lqt}. Define the feasible policy set by
	\begin{equation*}
		\cS = \{\xi = [V,H] ~|~\oX_0 V = \boldsymbol{I}_n, \oX_0 H = \boldsymbol{0}_n, \rho(\oX_{1} V) < 1 \}.
	\end{equation*}
	Besides, for any policy $\xi \in \cS$, define the following policy-dependent matrices:
	\begin{equation} \label{eq:policy_dependent_matrices}
		\begin{aligned}
		E_V = ~& (\oU_0^\top R \oU_0 + \oX_1^\top P_V \oX_1) V, \\
		F_\xi = ~& \oX_1^\top G_\xi + \oU_0^\top R \oU_0 H + \oX_1^\top P_V \oX_1 H.\\
		\Sigma_V = ~& I_n + \oX_1 V \Sigma_V V^\top \oX_1^\top,\\
		Z_\xi = ~& Y_V \oX_1 H, \\
		\Phi_\xi = ~& \begin{bmatrix}n\Sigma_V + Z_\xi Z_\xi^\top & Z_\xi \\ Z_\xi^\top & I_n \end{bmatrix}
		\end{aligned}
	\end{equation}

	Then, the data-driven policy gradient $\nabla_\xi C(\xi)$ can be computed as follows.
	\begin{lemma} \label{lem:policy_gradient}
		For any policy $\xi \in \cS$, the policy gradient of the cost function $C(\xi)$ in \eqref{eq:data_based_lqt} is given by
		\begin{equation*}
			\nabla_\xi C(\xi) = 2 \begin{bmatrix} E_V & F_\xi \end{bmatrix} \Phi_\xi.
		\end{equation*}
	\end{lemma}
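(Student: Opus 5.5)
We compute the differential of $C(\xi)$ in the direction $(dV,dH)$ and read the gradient off the trace form $dC = \mt(\nabla_V C^\top dV + \nabla_H C^\top dH)$. Throughout we write $\Gamma := \oU_0^\top R\oU_0$ and $M:=\oX_1 V$. Since $\xi\in\cS$, we have $\rho(M)<1$, so $P_V$, $\Sigma_V$ and $Y_V$ are well defined and smooth. The constraints $\oX_0V=I_n$ and $\oX_0H=0$ are affine and are not involved here; the formula is the unconstrained gradient of the expression in \eqref{eq:data_based_lqt}.

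\textbf{Differentials of the building blocks.} Differentiating the Lyapunov equation for $P_V$ gives
\[
dP_V = M^\top dP_V M + dV^\top(\Gamma+\oX_1^\top P_V\oX_1)V + V^\top(\Gamma+\oX_1^\top P_V\oX_1)dV,
\]
that is, $dP_V = M^\top dP_V M + dV^\top E_V + E_V^\top dV$. For any symmetric $S$ we therefore have $\mt(S\,dP_V) = 2\,\mt(\Sigma_S^\top E_V^\top dV)$, where $\Sigma_S = S + M\Sigma_S M^\top$. This is the standard duality between the two Lyapunov operators. For $Y_V$ we have $dY_V = Y_V\oX_1 dV\,Y_V$.

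\textbf{Collecting terms.} We split $C$ into $T_1 = \mt(H^\top(\Gamma+\oX_1^\top P_V\oX_1)H)$, $T_2 = n\,\mt(P_V)$ and $T_3 = 2\,\mt(G_\xi^\top\oX_1H)$.
\begin{itemize}
\item The $dH$ part of $T_1$ is $2(\Gamma+\oX_1^\top P_V\oX_1)H$.
\item The $dH$ part of $T_3$ is $2\oX_1^\top G_\xi$, plus the contribution of $G_\xi$'s own dependence on $H$, namely $2(\Gamma+\oX_1^\top P_V\oX_1)V\,Y_V\oX_1 H = 2E_V Z_\xi$.
\item Together these give $2F_\xi + 2E_VZ_\xi$, which is the $H$-block of $2[E_V\ F_\xi]\Phi_\xi$, since the second block column of $\Phi_\xi$ is $[Z_\xi;I_n]$.
\end{itemize}
For the $V$ part, all terms containing $dP_V$ are pooled into a single weight $S$:
\begin{itemize}
\item from $T_2$: $nI$;
\item from $T_1$: $\oX_1HH^\top\oX_1^\top$;
\item from $T_3$: $2Y_V^\top$-weighted terms, since $G_\xi$ contains $V^\top\oX_1^\top P_V\oX_1H$.
\end{itemize}
The remaining explicit $dV$ terms come from $G_\xi$ (through $V^\top\Gamma H$ and $V^\top\oX_1^\top P_V\oX_1H$) and from $dY_V$.

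\textbf{Main obstacle.} The hardest step is showing that the $V$-gradient equals $2(E_V(n\Sigma_V+Z_\xi Z_\xi^\top)+F_\xi Z_\xi^\top)$. For this we need the pooled weight $S$ to produce exactly $\Sigma_S = n\Sigma_V + Z_\xi Z_\xi^\top$. The term $nI$ gives $n\Sigma_V$. For the rest, we want the $H$-dependent part of $S$ to be $Z_\xi Z_\xi^\top - MZ_\xi Z_\xi^\top M^\top$. Using $Y_V(I-M)=I$ we get $Z_\xi = \oX_1H + MZ_\xi$, so
\[
Z_\xi Z_\xi^\top - MZ_\xi Z_\xi^\top M^\top = \oX_1HH^\top\oX_1^\top + MZ_\xi H^\top\oX_1^\top + \oX_1HZ_\xi^\top M^\top,
\]
and the cross terms should match the $P_V$-contributions from $T_3$. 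The $dY_V$ term and the explicit $dV$ terms of $G_\xi$ should then combine to $F_\xi Z_\xi^\top$. To see this, we expand
\[
2\,\mt\big((dY_V)^\top(-Q+V^\top\Gamma H+V^\top\oX_1^\top P_V\oX_1H)^\top\oX_1H\big),
\]
rewrite it using $Y_V^\top(\cdot)=G_\xi$, and obtain $2\oX_1^\top G_\xi Z_\xi^\top$. Adding the explicit terms gives $2(\Gamma+\oX_1^\top P_V\oX_1)HZ_\xi^\top$. The sum is $2F_\xi Z_\xi^\top$. This bookkeeping, and in particular keeping track of transposes inside $\mt(\cdot)$, is where errors are most likely, so we will verify each identity using cyclicity of the trace and the symmetry of $P_V$ and $\Sigma_V$.
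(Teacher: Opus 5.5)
Your computation is correct, but it handles the $V$-derivative with different bookkeeping from the paper.

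\textbf{The paper's route.} The paper splits $C(\xi)$ into three parts:
$C_1$ (the constant, the $R$-quadratic term in $H$, and $n\,\mt(P_V)$);
$C_2$ (the $R$- and $Q$-parts of $2\,\mt(G_\xi^\top\oX_1H)$);
$C_3$ (all $H$-dependent terms containing $P_V$).
It then removes $P_V$ from $C_3$ using the identity $C_3=\mt(Z_\xi^\top(Q+V^\top\oU_0^\top R\oU_0V)Z_\xi)$. The only remaining $P_V$-dependence is $n\,\mt(P_V)$, whose gradient $2nE_V\Sigma_V$ is imported from the LQR case. The proof closes with the structural relation $\nabla_V C=2nE_V\Sigma_V+\nabla_H C\,Z_\xi^\top$.

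\textbf{Your route.} You keep $P_V$ in every term. You absorb all three sources of $P_V$-dependence into a single adjoint Lyapunov equation, whose solution you identify as $n\Sigma_V+Z_\xi Z_\xi^\top$.

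\textbf{Comparison.} Both arguments rest on the same fact, $Y_V=I_n+\oX_1VY_V$, i.e.\ $\oX_1H=(I_n-\oX_1V)Z_\xi$. The paper uses it on the primal side, collapsing $P_V-(\oX_1V)^\top P_V\oX_1V$ into the stage cost inside $C_3$. You use it on the dual side, recognizing the $H$-dependent part of the pooled weight as $Z_\xi Z_\xi^\top-\oX_1VZ_\xi Z_\xi^\top V^\top\oX_1^\top$. Your route is self-contained and shows directly why the $(1,1)$ block of $\Phi_\xi$ is $n\Sigma_V+Z_\xi Z_\xi^\top$. It also avoids the power-series manipulation the paper uses to justify the $C_3$ identity. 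The paper's route separates off the LQR gradient and exhibits $\nabla_V C$ as that gradient plus the $H$-gradient transported by $Z_\xi^\top$, which makes the block form of $\Phi_\xi$ visible at a glance.

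\textbf{Two details to tighten.} First, you state the duality identity for symmetric $S$, but the weight coming from $T_3$ is $2\oX_1VZ_\xi H^\top\oX_1^\top$, which is not symmetric. Since $dP_V$ is symmetric, you may replace this weight by its symmetric part $\oX_1VZ_\xi H^\top\oX_1^\top+\oX_1HZ_\xi^\top V^\top\oX_1^\top$. That is exactly the pair of cross terms in your identity, so your ``should match'' is in fact an exact match.

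Second, the transpose in your displayed $dY_V$ term is misplaced. Set $W:=-Q+V^\top\oU_0^\top R\oU_0H+V^\top\oX_1^\top P_V\oX_1H$, so that $G_\xi^\top=W^\top Y_V$. The term is then $2\,\mt(W^\top\,dY_V\,\oX_1H)=2\,\mt(Z_\xi W^\top Y_V\oX_1\,dV)$, which yields $2\oX_1^\top G_\xi Z_\xi^\top$ as you claim. The expression you actually wrote, $2\,\mt((dY_V)^\top W^\top\oX_1H)$, would instead give $2\oX_1^\top Y_V^\top W^\top\oX_1HY_V^\top$.
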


	Using Lemma~\ref{lem:policy_gradient}, we can apply the gradient descent method to solve \eqref{eq:data_based_lqt}. Starting from an initial policy $\xi^0 \in \cS$, the policy is updated iteratively as
	\begin{equation} \label{eq:policy_update}
		\xi^+ = \xi - \eta \Pi_{\oX_0} \nabla_\xi C(\xi),
	\end{equation}
	where $\eta > 0$ is the step size, $\Pi_{\oX_0} = I_n - \oX_0^\dagger \oX_0$ is the projection matrix to ensure that the updated policy $\xi^{k+1}$ satisfies the affine constraints in $\cS$, with $\oX_0^\dagger$ being the Moore-Penrose pseudoinverse of $\oX_0$. The equation \eqref{eq:policy_update} forms our offline DeePO algorithm for solving the data-driven LQT problem~\eqref{eq:data_based_lqt}. 

	\subsection{Global Linear Convergence of the Offline DeePO}
	Since the cost function $C(\xi)$  in~\eqref{eq:data_based_lqt} is non-convex, it is not straightforward to analyze the convergence of the offline DeePO algorithm in~\eqref{eq:policy_update}. However, by exploiting the gradient domination and local smoothness property of the model-based LQT problem~\eqref{eq:compact_lqt} with estimate model $(\hA, \hB)$, it is sufficient to establish the global linear convergence of a model-based policy gradient algorithm for solving~\eqref{eq:compact_lqt}. Then, using the equivalence between the above model-based algorithm with~\eqref{eq:policy_update}, we can establish the global linear convergence of the offline DeePO algorithm. 

	First, we show that the DeePO iteration \eqref{eq:policy_update} is equivalent to a model-based policy one in the following lemma.
	\begin{lemma} \label{lem:equivalence_deepo_model_based}
		The policy update in \eqref{eq:policy_update} is equivalent to the following model-based policy gradient update:
		\begin{equation} \label{eq:model_based_policy_update}
			\theta^{k+1} = \theta^k - \eta M \nabla_\theta \hC(\theta^k),
		\end{equation}
		where $M = \oU_0 (I_n-\oX_0^\dagger \oX_0) \oU_0^\top$ is a data-dependent scaling matrix. Furthermore, $M$ is positive definite. 
	\end{lemma}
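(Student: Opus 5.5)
The plan is to push the update \eqref{eq:policy_update} through the change of variables \eqref{eq:covariance_parameterization} and track what it does to $\theta=[K,L]$. Write $\theta = \oU_0\xi$. The constraints $\oX_0 V = I_n$ and $\oX_0 H = \boldsymbol{0}_n$ hold along the iteration, since $\oX_0 \Pi_{\oX_0} = 0$. Hence the update in $\xi$ induces
\[
\theta^{k+1} = \theta^k - \eta\, \oU_0 \Pi_{\oX_0} \nabla_\xi C(\xi^k).
\]
Here $\Pi_{\oX_0}=I_{m+n}-\oX_0^\dagger\oX_0$; the identity in the paper's formula should be read as being of size $m+n$.

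The main step is a chain-rule identity linking the two gradients. By construction, the data-based cost satisfies $C(\xi) = \hC(\oU_0 V, \oU_0 H)$ for every feasible $\xi$. This holds because, with $\hat B\hat K = \oX_1 V - \hat A$ and the least-squares identity $[\hB,\hA]\Lambda = \oX_1$, each term of \eqref{eq:data_based_lqt} matches the corresponding term of \eqref{eq:compact_lqt} under the estimated model. The notation $\hC$ is the compact cost \eqref{eq:compact_lqt} evaluated at $(\hA,\hB)$.

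To differentiate, I will use the unconstrained composite map $\xi \mapsto \hC(\oU_0 V, \oU_0 H)$. Its gradient is $\oU_0^\top \nabla_\theta \hC(\theta)$. It agrees with $C$ on the affine feasible set, so the two gradients coincide after projection onto the tangent space $\ker \oX_0$. I do not need the gradient formula of Lemma~\ref{lem:policy_gradient} itself, only that its projection is the projection of the true gradient of the restricted function. I expect this to be the main obstacle. The formula in Lemma~\ref{lem:policy_gradient} is computed from the expression \eqref{eq:data_based_lqt}, in which $\oX_1 V$ stands in place of $\hA+\hB\oU_0V$, and the two differ off the constraint set. The resolution is that they agree whenever $\oX_0 V = I$, and differences of gradients normal to the constraint set (in the row space of $\oX_0$) are annihilated by $\Pi_{\oX_0}$. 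Concretely, I will write $\oX_1 V = \hB\oU_0 V + \hA\oX_0 V$, so that the extra terms in the gradient have the form $\oX_0^\top(\cdot)$, and then observe that $\Pi_{\oX_0}\oX_0^\top = 0$. With this in hand,
\[
\Pi_{\oX_0} \nabla_\xi C = \Pi_{\oX_0}\oU_0^\top \nabla_\theta \hC,
\]
and therefore
\[
\theta^{k+1} = \theta^k - \eta\, \oU_0\Pi_{\oX_0}\oU_0^\top \nabla_\theta\hC(\theta^k).
\]
This is \eqref{eq:model_based_policy_update}, with $M = \oU_0\Pi_{\oX_0}\oU_0^\top$ since $\Pi_{\oX_0}$ is idempotent and symmetric. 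The converse direction of the equivalence follows because every $\theta$ has a feasible preimage by PE, and the $\xi$-iterates map uniquely to $\theta$-iterates.

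For positive definiteness, $M$ is clearly positive semidefinite, with $u^\top M u = \|\Pi_{\oX_0}\oU_0^\top u\|^2$. Suppose $\Pi_{\oX_0}\oU_0^\top u = 0$. Then $\oU_0^\top u$ lies in the row space of $\oX_0$, so $\oU_0^\top u = \oX_0^\top y$ for some $y$. This gives $\Lambda [u; -y] = 0$, where $\Lambda = [\oU_0;\oX_0]$ is symmetric. Because $\Lambda = \frac1t D_0D_0^\top$ is invertible under PE, $u=0$. Hence $M \succ 0$.
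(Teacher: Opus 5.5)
Your proof is correct, and its main step takes a different route from the paper's.

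\textbf{The paper's route.} It works with explicit formulas. It takes the closed-form data-based gradient $2[E_V~F_\xi]\Phi_\xi$ from Lemma~\ref{lem:policy_gradient} and the analogous closed-form model-based gradient \eqref{eq:gradient_KL}. It then substitutes $\hA\oX_0+\hB\oU_0=\oX_1$, $\oX_1V=\hA+\hB K$ and $\oX_1H=\hB L$ block by block, and reads off $\oU_0\Pi_{\oX_0}(\cdot)=M(\cdot)$. That the leftover $\oX_0^\top(\cdot)$ pieces are annihilated by the projection is used only implicitly; an example is $E_V=\oU_0^\top E_K+\oX_0^\top\hA^\top P_K(\hA+\hB K)$. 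Your ``concrete'' fallback, splitting $\oX_1V=\hB\oU_0V+\hA\oX_0V$, is exactly the mechanism behind this substitution.

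\textbf{Your route.} You argue structurally instead. The chain rule through the linear map $\xi\mapsto\oU_0\xi$ gives $\oU_0^\top\nabla_\theta\hC$. Two smooth functions that agree on the affine feasible set have gradients differing by an element of the range of $\oX_0^\top$, so their $\Pi_{\oX_0}$-projections coincide. Both functions are defined on open stability sets, which is what justifies the tangent-direction argument.

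\textbf{What each buys.} Your version does not depend on the long, error-prone explicit gradient expressions. It handles the $K$- and $L$-blocks at once, and it makes clear why the projection is what aligns the data-based and model-based updates. It also shows the projected update does not depend on how $C(\xi)$ is extended off the constraint set. The paper's computation, in turn, produces the explicit formula \eqref{eq:gradient_KL} for $\nabla_\theta\hC$. That formula is reused later for gradient domination, smoothness and the gradient-perturbation bounds.

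\textbf{Positive definiteness.} You show that $\Pi_{\oX_0}\oU_0^\top u=0$ forces $\Lambda\,[u^\top,\,-y^\top]^\top=0$. This is equivalent to the paper's rank argument with $M=NN^\top$, $N=\oU_0\Pi_{\oX_0}$; both reduce to invertibility of $\Lambda$. Your remark that the identity in $\Pi_{\oX_0}$ must be $I_{m+n}$ rather than $I_n$ is also correct.
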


	Using Lemma~\ref{lem:equivalence_deepo_model_based}, the convergence analysis of the offline DeePO algorithm can be transformed to that of the model-based policy gradient method for solving \eqref{eq:compact_lqt} with $(\hA, \hB)$. Although the problem \eqref{eq:compact_lqt} is still non-convex, we can show that it satisfies the gradient domination and local smoothness properties, which are standard to establish the global linear convergence of the policy gradient method~\cite{fazel2018global}.

	Let $\hC^*$ be the optimal cost of \eqref{eq:data_based_lqt} and \eqref{eq:compact_lqt} with $(\hA, \hB)$, and define 
	\begin{equation*}
		\begin{aligned}
		\widehat{\cS}_{\theta} &= \{\theta = [K,L] ~|~ \rho(\hA + \hB K) < 1 \}, \\
		\widehat{\cS}_{\theta} (a) &= \{\theta = [K,L] ~|~ \rho(\hA + \hB K) < 1, \hC(\theta) \leq a \}. \\
		\end{aligned}
	\end{equation*}
	We give the gradient domination and local smoothness properties of the model-based LQT problem~\eqref{eq:compact_lqt}.
	\begin{lemma}\label{lem:gradient_domination_smoothness}
		For any stabilizing policy $\theta \in \widehat{\cS}_{\theta} (a)$ and any $a > \hC^*$, the following properties hold:
		\begin{itemize}
			\item (Coercivity) The cost function $\hC(\theta)$ is coercive in $\theta$, i.e., $\hC(\theta) \to \infty$ as $\theta \to \partial \widehat{\cS}_{\theta}$ where $\partial$ denotes the boundary of the set. Moreover, $\widehat{\cS}_{\theta} (a)$ is a compact set.
			\item (Gradient Domination) There exists a constant $\widehat{\mu}(a) > 0$ such that
			\begin{equation*}
				\hC(\theta) - \hC^* \leq \widehat{\mu}(a) \Vert \nabla_\theta \hC(\theta) \Vert_F^2,
			\end{equation*}
			where $\widehat{\theta}^*$ is the optimal policy to \eqref{eq:compact_lqt}.
			\item (Local Smoothness) For any $\theta', \theta \in \widehat{\cS}_{\theta} (a)$ satisfying $\theta + b(\theta' - \theta) \in \widehat{\cS}_{\theta} (a), \forall b \in [0,1]$, there exists a constant $\widehat{L}(a) > 0$ which is polynomial in $a$ such that
			\begin{equation*}
				\hC(\theta') \leq \hC(\theta) + \langle \nabla_\theta \hC(\theta), \theta' - \theta \rangle + \frac{\widehat{L}(a)}{2} \Vert \theta' - \theta \Vert_F^2.
			\end{equation*}
		\end{itemize}
	\end{lemma}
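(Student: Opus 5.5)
The plan is to split $\hC(\theta)$ into an LQR-type part depending only on $K$ and a part that is quadratic in $L$ for fixed $K$, and then reduce all three properties to the known LQR landscape results of \citet{fazel2018global}. First we rewrite $\hC$ in a more transparent form. For a fixed stabilizing $K$, the stationary state of the $i$-th set-point problem is $\bar x^{(i)} = \hY_K \hB L e^i$. Around this mean, the noise-driven fluctuation gives the average cost $\mt(P_K)$ per reference, which is the term $n\hat P_K$ summed over $i$. The steady offset gives
\begin{equation*}
\textstyle\sum_i \big\Vert (\bar x^{(i)} - e^i)\big\Vert_Q^2 + \Vert K\bar x^{(i)} + L e^i\Vert_R^2 = \mt\big((\hY_K\hB L - I)^\top Q (\hY_K\hB L - I) + (K\hY_K\hB L + L)^\top R (K\hY_K\hB L + L)\big).
\end{equation*}
(One should check that this matches \eqref{eq:compact_lqt}, possibly up to the normalization of the noise covariance, which we take as given from Lemma~\ref{lem:compact_form}.) The decomposition is therefore $\hC(\theta) = n\,\mt(\hat P_K) + g(K,L)$, where $g$ is a convex quadratic in $L$ for fixed $K$. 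Moreover, $g$ is strictly convex, because its Hessian in $L$ is dominated from below by $R \succ 0$.

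\textbf{Coercivity and compactness.} As $\rho(\hA+\hB K)\to 1$, we have $\mt(\hat P_K)\ge \mt(\hat\Sigma_K)\lambda_{\min}(Q)\to\infty$, since $\hat P_K \succeq Q + (\hA+\hB K)^\top \hat P_K (\hA+\hB K)$. As $\Vert K\Vert\to\infty$, the bound $\hat P_K\succeq K^\top R K$ gives blow-up. As $\Vert L\Vert\to\infty$ with $K$ fixed, $g\ge \mt(\cdot^\top R\,\cdot)$ grows quadratically; to make this uniform on sets where $\mt(\hat P_K)$ is bounded, we need $\Vert (I+K\hY_K\hB)L\Vert$ to be bounded below by $c\Vert L\Vert$. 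This follows because $I+K\hY_K\hB = I + K(I-\hA-\hB K)^{-1}\hB$ is invertible, with inverse $I - K(I-\hA)^{-1}\hB$ when $I-\hA$ is invertible. In general I would instead use the $Q$-term: the map $L\mapsto (\hY_K\hB L, (I+K\hY_K\hB)L)$ is injective, and its injectivity constant is continuous on the compact $K$-sublevel set. The sublevel set is closed by continuity plus the boundary blow-up, and bounded, hence compact.

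\textbf{Gradient domination.} We use the partial minimizer $L^\star(K)=\arg\min_L g(K,L)$, which is unique and in closed form. Write
\begin{equation*}
\hC(\theta)-\hC^* = \big[g(K,L)-g(K,L^\star(K))\big] + \big[\hC(K,L^\star(K))-\hC^*\big].
\end{equation*}
The first bracket is at most $\Vert\nabla_L \hC\Vert_F^2/(2\lambda_{\min}(\nabla^2_L g))$ by strong convexity, and the curvature is uniformly bounded below on the compact set. For the second bracket, the reduced function $K\mapsto \hC(K,L^\star(K))$ has the same minimizer $\hK$ as the LQR part. This holds because by Lemma~\ref{lem:equivalence} the optimal $K$ of the summed problem is the LQR gain, and the offset cost under the optimal $L$ does not depend on $K$: the optimal feedforward achieves the same steady-state input and state pair whatever the stabilizing $K$, since the steady state is determined by the achievable equilibria. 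Consequently $\hC(K,L^\star(K)) - \hC^* = n(\mt\hat P_K - \mt\hat P_{\hK})$. Fazel et al.'s LQR gradient dominance then bounds this by $\mu\Vert\nabla_K \mt \hat P_K\Vert^2$. Finally, $\nabla_K\hC(K,L) = n\nabla_K\mt\hat P_K + \nabla_K g$, so it remains to relate the two gradients. By the envelope theorem, $\nabla_K g(K,L)$ vanishes at $L=L^\star(K)$, and it is Lipschitz in $L$ on the compact set. This gives
\begin{equation*}
\Vert n\nabla_K\mt\hat P_K\Vert \le \Vert\nabla_K\hC\Vert + c\Vert L-L^\star(K)\Vert \le \Vert\nabla_K\hC\Vert + c'\Vert\nabla_L\hC\Vert,
\end{equation*}
and combining the two brackets yields $\widehat\mu(a)$.

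\textbf{Local smoothness.} On the compact sublevel set, $\hC$ is real-analytic, since $\hat P_K$ and $\hY_K$ are rational in $K$. Its Hessian is therefore bounded, and the smoothness inequality follows by Taylor's theorem along the segment, which lies in the set by assumption. Polynomial dependence on $a$ comes from bounding $\Vert\hat P_K\Vert,\Vert\hat\Sigma_K\Vert\le a/\sigma_{\min}(Q)$ and $\Vert L\Vert,\Vert\hY_K\Vert$ polynomially in $a$, following the perturbation bounds for $\hat P_K$ in \citet{fazel2018global}.

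The main obstacle is the claim that the reduced cost $\hC(K,L^\star(K))$ differs from $\hC^*$ only by the LQR term, i.e.\ that the optimal steady-state offset cost is independent of $K$. I would prove it by reparametrizing $L$ through the steady-state input $\bar u^{(i)} = K\bar x^{(i)}+Le^i$. Under this change, $\bar x^{(i)} = (I-\hA)^{-1}\hB\bar u^{(i)}$, or more generally any pair satisfying $(I-\hA)\bar x = \hB\bar u$, and the offset cost becomes a $K$-free quadratic in $\bar u$. The bijection between $L$ and $\bar u$ for fixed stabilizing $K$ is exactly the invertibility used in the coercivity step.
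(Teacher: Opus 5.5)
Your proposal is correct, but your gradient-domination argument takes a genuinely different route from the paper's, and you organize coercivity and smoothness differently as well.

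\textbf{Comparison with the paper.} Your splitting $\hC(\theta)=n\,\mt(\widehat{P}_K)+g(K,L)$ is exactly the paper's alternative form \eqref{eq:cost_alternative_form}. It agrees with \eqref{eq:compact_lqt} exactly: check it using $\hB L=(I_n-\hA-\hB K)Z$ and the Lyapunov equation for $\widehat{P}_K$. So your normalization caveat is unnecessary.

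The paper uses this form only for coercivity and for its bounds on $\Vert\widehat{P}_K\Vert$, $\Vert Z\Vert$, $\Vert L\Vert$. For gradient domination it instead sums over the $n$ references the affine-policy inequality of \citet{zhao2023global}, and lower-bounds $\us(\Phi(\theta))$ on the compact sublevel set. Its smoothness comes from differentiating \eqref{eq:gradient_KL} and bounding the Hessian term by term.

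You instead show that $\min_L g(K,L)$ is independent of $K$: via $Le^i=\bar u-K\bar x$, every equilibrium pair is reachable under every stabilizing $K$. You then combine strong convexity of $g(K,\cdot)$, Fazel's global LQR inequality for $\mt(\widehat{P}_K)$, and the envelope identity $\nabla_K g(K,L^\star(K))=0$. This chain is sound. It includes the boundedness of $L^\star(K)$, which follows from $\hC(K,L^\star(K))\le\hC(K,L)\le a$.

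Your route has three advantages:
\begin{itemize}
\item it needs only the LQR landscape;
\item it explains structurally why the optimal feedback is the LQR gain, giving an independent proof of the $K$-part of Lemma~\ref{lem:equivalence};
\item it makes smoothness more transparent, since $g$ depends on $K$ only through $\widehat{Y}_K$, with $d\widehat{Y}_K=\widehat{Y}_K\hB\,dK\,\widehat{Y}_K$, and involves no Lyapunov operators.
\end{itemize}
The paper's route is shorter, given the cited per-reference inequality. Both proofs leave the polynomial dependence of $\widehat{L}(a)$ at the level of a sketch.

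\textbf{Two local fixes.}

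First, your claim that the Hessian of $g(K,\cdot)$ is bounded below by $R$ is false. Write $H_K=(\widehat{Y}_K\hB)^\top Q\,\widehat{Y}_K\hB+(I_m+K\widehat{Y}_K\hB)^\top R\,(I_m+K\widehat{Y}_K\hB)$. Since $\det(I_m+K\widehat{Y}_K\hB)=\det(I_n-\hA)/\det(I_n-\hA-\hB K)$, the $R$-part is singular whenever $I_n-\hA$ is. Along its kernel, $H_K$ reduces to the $Q$-part, which need not dominate $R$. Your later injectivity argument is the right justification, and it can be made quantitative without compactness. For a unit vector $v$, set $s=\Vert\widehat{Y}_K\hB v\Vert$ and use $\Vert(I_m+K\widehat{Y}_K\hB)v\Vert\ge 1-\Vert K\Vert s$. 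Minimizing over $s$ gives $\lambda_{\min}(H_K)\ge \us(Q)\us(R)/(\us(Q)+\us(R)\Vert K\Vert^2)$, with $\Vert K\Vert^2\le a/(n\us(R))$.

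Second, you never need invertibility of $I_m+K\widehat{Y}_K\hB$. The bound $\Vert L\Vert\le\Vert K\widehat{Y}_K\hB L+L\Vert+\Vert K\Vert\,\Vert\widehat{Y}_K\hB L\Vert$ controls $\Vert L\Vert$ polynomially in $a$, as the paper does. This is also what your polynomial smoothness claim actually requires. Likewise, in your key step what matters is surjectivity of $Le^i\mapsto(\bar x^{(i)},\bar u^{(i)})$ onto $\{(\bar x,\bar u):(I_n-\hA)\bar x=\hB\bar u\}$, not a bijection between $L$ and $\bar u$.
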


	Based on Lemma~\ref{lem:gradient_domination_smoothness}, we can establish the global linear convergence of the model-based PO \eqref{eq:model_based_policy_update}. Different from the case in \cite{fazel2018global} for the LQR problem, the smoothness and gradient domination properties for LQT only holds locally. Therefore, we need to choose a small enough step size to ensure that the policy sequence generated by \eqref{eq:model_based_policy_update} always stays in a compact set. For some initial policy $\theta^0 \in \widehat{\cS}_{\theta}$, define $\hL^0 = \hL(\hC(\theta^0))$ and $\widehat{\mu}^0 = \widehat{\mu}(\hC(\theta^0))$. Then we have the following convergence result.
	\begin{lemma} \label{lem:convergence_model_based_po}
		For any initial policy $\theta^0 \in \widehat{\cS}_{\theta}$ and stepsize $\eta \in \mathopen( 0, 1/(\hL^0 \Vert M \Vert) \mathclose]$, the model-based PO algorithm in \eqref{eq:model_based_policy_update} leads to $\theta^k \in \widehat{\cS}_{\theta}(\hC(\theta^0))$ for all $k$. Moreover, $\theta^k$ converges linearly as:
		\begin{equation*}
			\hC(\theta^{k+1}) - \hC^* \leq (1 - \nu) (\hC(\theta^k) - \hC^*),
		\end{equation*}
		where $\nu = \eta(1-\hL^0 \eta \Vert M \Vert/2)\us(M)/\widehat{\mu}^0$.
	\end{lemma}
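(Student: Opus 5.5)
We prove Lemma~\ref{lem:convergence_model_based_po} by induction on $k$. The induction hypothesis is $\theta^k \in \widehat{\cS}_{\theta}(a)$ with $a = \hC(\theta^0)$. The base case $k=0$ holds trivially, and $\hC^* < a$ may be assumed (otherwise $\theta^0$ is optimal, the gradient vanishes and nothing moves). Given $\theta^k \in \widehat{\cS}_{\theta}(a)$, write $g = \nabla_\theta \hC(\theta^k)$ and consider the segment $\theta(s) = \theta^k - s M g$ for $s \in [0,\eta]$.

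\textbf{Step 1 (staying in the sublevel set).} Let
\[
\bar s = \sup\{ s \in [0,\eta] : \theta(r) \in \widehat{\cS}_{\theta}(a)\ \forall r \in [0,s]\}.
\]
By coercivity, $\widehat{\cS}_{\theta}(a)$ is compact and lies strictly inside the open set $\widehat{\cS}_{\theta}$, and $\hC$ is continuous there. Hence $\theta(\bar s) \in \widehat{\cS}_{\theta}(a)$. On $[0,\bar s]$ the whole segment is in the sublevel set, so local smoothness applies with $\theta' = \theta(s)$:
\[
\hC(\theta(s)) \le \hC(\theta^k) - s\langle g, M g\rangle + \tfrac{\widehat{L}^0 s^2}{2}\Vert M g\Vert_F^2 .
\]
Using $\Vert Mg\Vert_F^2 \le \Vert M\Vert \langle g, Mg\rangle$, which holds since $M \succ 0$ by Lemma~\ref{lem:equivalence_deepo_model_based}, we obtain
\[
\hC(\theta(s)) \le \hC(\theta^k) - s\bigl(1 - \widehat{L}^0 s\Vert M\Vert/2\bigr)\langle g, Mg\rangle \le \hC(\theta^k),
\]
because $s \le 1/(\widehat{L}^0\Vert M\Vert)$. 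If $g \neq 0$ this decrease is strict for $s>0$.

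Now suppose $\bar s < \eta$. Then $\hC(\theta(\bar s)) < a$, and $\theta(\bar s)$ lies in the open stabilizing set. By continuity of $\hC$ a neighbourhood of $\theta(\bar s)$ stays in $\widehat{\cS}_{\theta}(a)$, so the segment can be extended past $\bar s$, which contradicts maximality. The strict inequality is what makes this work. In the subcase $\hC(\theta^k) = a$ at $k = 0$, strictness comes from the decrease along the segment. Therefore $\bar s = \eta$ and $\theta^{k+1} = \theta(\eta) \in \widehat{\cS}_{\theta}(a)$, with
\[
\hC(\theta^{k+1}) \le \hC(\theta^k) - \eta\bigl(1 - \widehat{L}^0\eta\Vert M\Vert/2\bigr)\langle g, Mg\rangle .
\]

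\textbf{Step 2 (linear rate).} Bound $\langle g, Mg\rangle \ge \us(M)\Vert g\Vert_F^2$. Here $M$ acts on $\theta$ from the left, so the Frobenius inner product $\mt(g^\top M g)$ is at least $\us(M)\Vert g\Vert_F^2$. Gradient domination with $\widehat{\mu}^0$ gives $\Vert g\Vert_F^2 \ge (\hC(\theta^k) - \hC^*)/\widehat{\mu}^0$. Subtracting $\hC^*$ from both sides then yields
\[
\hC(\theta^{k+1}) - \hC^* \le (1-\nu)\bigl(\hC(\theta^k) - \hC^*\bigr),
\]
with $\nu$ exactly as stated. The constants $\widehat{\mu}$ and $\widehat{L}$ may be taken nondecreasing in $a$. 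Since every iterate stays in $\widehat{\cS}_{\theta}(a)$, the constants evaluated at $a = \hC(\theta^0)$ remain valid throughout, which closes the induction.

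The main obstacle is Step 1: local smoothness is only available along segments already known to lie in the sublevel set. The continuity/maximal-interval argument above circumvents this, and it relies on the coercivity part of Lemma~\ref{lem:gradient_domination_smoothness} to guarantee that the sublevel set is closed and stays away from the stability boundary.
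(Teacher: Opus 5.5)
Your proof is correct, but it establishes invariance of the sublevel set by a different argument than the paper's.

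\textbf{Descent and rate.} The paper reparametrizes $\psi = M^{-1/2}\theta$, so that \eqref{eq:model_based_policy_update} becomes plain gradient descent with domination constant $\us(M)/\widehat{\mu}(a)$ and smoothness constant $\hL(a)\Vert M\Vert$. You obtain the same two constants directly in $\theta$-coordinates from $\Vert Mg\Vert_F^2 \le \Vert M\Vert\langle g, Mg\rangle$ and $\langle g, Mg\rangle \ge \us(M)\Vert g\Vert_F^2$. This is equivalent and avoids the change of variables.

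\textbf{Invariance, paper's route.} The paper uses a discretized continuation. It enlarges the level to $a+b$ with $\hL(a+b)\le(1+\phi)\hL(a)$, which requires continuity of $a\mapsto\hL(a)$. It then uses the positive distance $d$ between $\widehat{\cS}_\theta(a)$ and the complement of $\widehat{\cS}_\theta(a+b)$. Finally, it splits $\eta = N\tau$ into substeps along the fixed direction $-Mg$ and shows inductively that each partial point falls back to level $a$.

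\textbf{Invariance, your route.} Your maximal-interval argument replaces all of this with closedness of $\widehat{\cS}_\theta(a)$, openness of $\widehat{\cS}_\theta$ and continuity of $\hC$. It is shorter, needs no regularity of $\hL(\cdot)$ in $a$, and uses smoothness only in the exact segment form that Lemma~\ref{lem:gradient_domination_smoothness} states. The paper's \eqref{eq:ls_psi}, by contrast, silently switches to a gradient-Lipschitz form.

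\textbf{One sentence to tighten.} The enlarged level lets the paper treat the boundary case $\hC(\theta^0)=a$ uniformly. In your version that case needs its own explicit justification. If a priori $\bar s = 0$, the descent inequality on $[0,\bar s]$ is vacuous, so it cannot be what makes $\bar s>0$. Instead, argue that the directional derivative of $\hC$ at $\theta^0$ along $-Mg$ equals $-\langle g, Mg\rangle<0$ when $g\neq 0$. Together with openness of $\widehat{\cS}_\theta$, this gives $\bar s>0$. Your phrase ``strictness comes from the decrease along the segment'' should be replaced by this first-order argument. After that, the smoothness bound on $[0,\bar s]$ gives $\hC(\theta(\bar s))<a$ and your extension argument goes through.
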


	As a direct consequence of Lemmas~\ref{lem:equivalence_deepo_model_based} and~\ref{lem:convergence_model_based_po}, we can establish the global linear convergence of the offline DeePO algorithm in \eqref{eq:policy_update}.
	\begin{theo} \label{thm:convergence_deepo}
		For any initial policy $\xi^0 \in \cS$ and stepsize $\eta \in \mathopen( 0, 1/(\hL^0 \Vert M \Vert) \mathclose]$, the offline DeePO algorithm in \eqref{eq:policy_update}  converges linearly as:
		\begin{equation*}
			C(\xi^{k+1}) - \hC^* \leq (1 - \nu) (C(\xi^k) - \hC^*).
		\end{equation*}
	\end{theo}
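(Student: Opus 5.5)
We prove the theorem by mapping the DeePO iterates to a model-based policy gradient sequence and then importing Lemma~\ref{lem:convergence_model_based_po}. Given $\xi^0=[V^0,H^0]\in\cS$, define $\theta^k=[K^k,L^k]$ through the covariance parameterization \eqref{eq:covariance_parameterization}, i.e., $K^k=\oU_0V^k$ and $L^k=\oU_0H^k$. Here $\oX_0V^k=I_n$ and $\oX_0H^k=\boldsymbol{0}_n$ are maintained along the iteration, because the projection $\Pi_{\oX_0}$ in \eqref{eq:policy_update} keeps the update direction in the null space of $\oX_0$. Neglecting $\oW_0$ under the certainty-equivalence principle, the least-squares estimate \eqref{eq:ls_estimator} gives
\[
[\hB,\hA]=\oX_1\Lambda^{-1}.
\]
Hence $\oX_1V=\hA+\hB K$ and $\oX_1H=\hB L$. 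Substituting these into \eqref{eq:data_based_policy_matrices} shows $P_V=P_K$, $Y_V=Y_K$, $G_\xi=G_\theta$, all evaluated at $(\hA,\hB)$, and therefore
\[
C(\xi)=\hC(\theta), \qquad \rho(\oX_1V)=\rho(\hA+\hB K).
\]
This shows $\xi\in\cS$ if and only if $\theta\in\widehat{\cS}_\theta$. It also shows that the optimal values coincide, consistent with Theorem~\ref{thm:equivalence}, so $\hC^*$ denotes the common optimum.

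Next, Lemma~\ref{lem:equivalence_deepo_model_based} states that the image $\theta^{k+1}=[\oU_0V^{k+1},\oU_0H^{k+1}]$ of the DeePO update is exactly $\theta^k-\eta M\nabla_\theta\hC(\theta^k)$, where $M\succ0$. So by induction the sequence $\{\theta^k\}$ is the model-based PO sequence \eqref{eq:model_based_policy_update} started at $\theta^0$. Lemma~\ref{lem:convergence_model_based_po}, with the stated stepsize $\eta\le 1/(\hL^0\Vert M\Vert)$, then gives $\theta^k\in\widehat{\cS}_\theta(\hC(\theta^0))$ for all $k$ and
\[
\hC(\theta^{k+1})-\hC^*\le(1-\nu)\bigl(\hC(\theta^k)-\hC^*\bigr).
\]

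Finally, we transfer the result back to $\xi$. Since $\theta^k\in\widehat{\cS}_\theta$, we have $\rho(\oX_1V^k)<1$, so $\xi^k\in\cS$ and the DeePO iteration is well defined at every step. Using $C(\xi^k)=\hC(\theta^k)$ gives the claimed contraction. The constants $\hL^0$, $\widehat\mu^0$ and $\nu$ are defined from $\hC(\theta^0)=C(\xi^0)$, so they depend only on the initial data-based policy.

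The main obstacle is the identity $C(\xi)=\hC(\theta)$ together with the correspondence of feasible sets. This requires that the data-based closed-loop matrices in \eqref{eq:data_based_lqt} match the certainty-equivalence model exactly, i.e., $\oX_1V=\hA+\hB K$, and that the map $\xi\mapsto\theta$ loses no information relevant to the cost. The map is not injective, since $V$ and $H$ may differ by any null-space component of $\Lambda$. However, $\Lambda$ is invertible under the PE assumption, so $V=\Lambda^{-1}[K^\top,I_n]^\top$ is in fact uniquely determined, and likewise $H$. The change of variables is therefore a bijection, and the cost identity follows by direct substitution.
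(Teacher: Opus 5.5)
Your proposal is correct and follows the paper's route. The paper obtains Theorem~\ref{thm:convergence_deepo} as a direct consequence of Lemma~\ref{lem:equivalence_deepo_model_based} and Lemma~\ref{lem:convergence_model_based_po}, using the identities $\oX_1 V=\hA+\hB K$ and $\oX_1 H=\hB L$ from the proof of Theorem~\ref{thm:equivalence} to identify $C(\xi^k)$ with $\hC(\theta^k)$, exactly as you do. The one blemish is your passing claim that $\xi\mapsto\theta$ is not injective: your own correct observation that $\Lambda$ is invertible under PE refutes it, so that sentence should simply be deleted.
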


	\section{Online DeePO for LQT}\label{sec:online_deepo}
	In this section, we extend the offline DeePO algorithm in Section \ref{sec:offline_deepo} to an online version by using the data collected from the closed-loop system under the current policy. We first present the algorithm and then prove that, with appropriate conditions, the online DeePO algorithm converges linearly to a neighborhood of the optimal policy.

	\subsection{Algorithm Design} \label{sec:online_algorithm}

	The objective of the online adaptive algorithm is to update the control policy concurrently with system operation, ensuring closed-loop stability while steering the policy towards the global optimum. Unlike the offline scenario, online adaptation inherently requires a delicate balance between persistent exploration and optimal exploitation. The complete procedure is summarized in Algorithm \ref{alg:adaptive_deepo_dynamic}.

	\begin{algorithm}[ht]
		\caption{Online DeePO for dynamic reference tracking}
		\label{alg:adaptive_deepo_dynamic}
		\begin{algorithmic}[1]
			\STATE \textbf{Input:} Initial policy parameters $[K_{t_0}, K_{v,t_0}]$, step size $\eta$, initiate data $X_{0,t_0}, U_{0,t_0}, X_{1,t_0}$, reference trajectory $z_t$ for $t=t_0,t_0+1,t_0+2,\cdots$;
			\STATE Calculate the data-based parameterization $\xi_{t_0}$ via
			\[V_{t_0}' = \Lambda^{-1}_{t_0} \begin{bmatrix} K_{t_0} \\ I_n \end{bmatrix},\]
			\FOR{$t=t_0,t_0+1,t_0+2,\cdots$}
			\STATE Calculate the reference-related state via
			\[v_t = \oX_{1,t} V_t' v_{t+1} + Q z_t;\]
			\STATE Apply the control input $u_t = K_t x_t + K_{v,t} v_t + e_t$ and obtain the next state $x_{t+1}$;
			\STATE Calculate the data-based parameterization $\xi_t$ via
			\begin{align*}
			V_{t+1} &= \Lambda^{-1}_{t+1} \begin{bmatrix} K_{t} \\ I_n \end{bmatrix},\\
			H_{t+1} &= \Lambda^{-1}_{t+1} \begin{bmatrix} K_{v,t}(I_n - \oX_{1,t} V_{t}')^{-\top}Q \\ \boldsymbol{0}_n \end{bmatrix};
			\end{align*}
			\STATE Update the policy parameters via
			\[ \xi_{t+1}' = \xi_{t+1} - \eta \Pi_{\oX_{0,t+1}} \nabla_\xi C_{t+1}(\xi_{t+1})\]
			where $\nabla_\xi C_{t+1}(\xi_{t+1})$ is calculated directly from $(X_{0,t+1}, U_{0,t+1}, X_{1,t+1})$ via Lemma~\ref{lem:policy_gradient};
			\STATE Update the policy gains via
			\begin{align*}
				K_{t+1} &= \oU_{0,t+1} V_{t+1}', \\
				K_{v,t+1} &= \oU_{0,t+1} H_{t+1}' (I_n - \oX_{1,t+1} V_{t+1}')^\top Q^{-1};
			\end{align*}
			\ENDFOR
		\end{algorithmic}
	\end{algorithm}

	The core challenge in transitioning to an online framework lies in the time-varying nature of the data matrices $D_{0,t}$. Consequently, the covariance parameterization $\xi_t$ in \eqref{eq:covariance_parameterization} shifts with time, even if the underlying model-based policy $\theta_t = [K_t, K_{v,t}]$ remains fixed. Optimizing $\xi_t$ directly across different time steps is therefore ill-posed. To circumvent this, Algorithm \ref{alg:adaptive_deepo_dynamic} employs a dynamic re-parameterization strategy. Between consecutive steps $t$ and $t+1$, the current policy is mapped back to the data-independent space $\theta_t$. Once new data $(x_{t+1}, u_t)$ is incorporated, the policy is re-parameterized into $\xi_t$ (Line 6) using the updated data matrices, enabling a mathematically consistent projected gradient update (Line 7).

	At each time step $t$, the control input (Line 5) is synthesized using two components: the exploitation term $K_t x_t + K_{v,t} v_t$, which drives the tracking performance, and the exploration noise $e_t$. The injection of $e_t$ is strictly necessary to maintain the persistent excitation of the online data, a prerequisite for algorithmic convergence that will be rigorously analyzed in Section \ref{sec:convergence_analysis}.

	Implementing the exploitation term requires computing the non-homogeneous tracking state $v_t$. Since the true closed-loop dynamics $A+BK_t$ are inaccessible, $v_t$ is approximated using the data-based equivalent $\oX_{1,t} V_t'$ (Line 4). Theoretically, the exact evaluation of \eqref{eq:vt_dynamics} demands future reference knowledge $\{z_k\}_{k=t}^\infty$. The convergence of $v_t$ is guaranteed by the stability of the data-based dynamics $\oX_{1,t} V_t'$, provided that $z_t$ is bounded. In practical implementations where an infinite-horizon preview is unfeasible, we adopt a finite-horizon truncation. By acquiring a preview of $N+1$ steps, we initialize $v_{t+N} = (I - \oX_{1,t} V_t')^{-1} z_{t+N}$ and iterate backward to obtain $v_t$. Due to the closed-loop stability, the truncation error decays exponentially with respect to $N$. In Section \ref{sec:experiment}, a modest preview horizon of $N=10$ is chosen for online control.

	\subsection{Convergence Analysis} \label{sec:convergence_analysis}

	To establish the convergence of Algorithm \ref{alg:adaptive_deepo_dynamic}, we introduce necessary assumptions regarding the system noise and the persistent excitation of the input data.

	\begin{assumption} \label{asm:noise}
		The process noise $w_t$ has a bounded covariance, i.e., there exists $\delta_t \geq 0$ such that $$\frac{1}{t}\Vert W_{0,t} W_{0,t}^\top \Vert \leq \delta_t^2.$$
	\end{assumption}

	\begin{assumption} \label{asm:input}
		The control input is persistently exciting, i.e., there exists $\gamma_t > 0$ such that $\us(\Lambda_t) \geq \gamma_t^2$.
	\end{assumption}

	Assumption \ref{asm:noise} bounds the noise variance rather than specifying its distribution. This assumption is standard in data-driven control to ensure the empirical data encapsulates reliable, deterministic system information. Specifically, strictly bounded noise yields $\delta_t \sim \cO(1)$, whereas Gaussian noise typically satisfies $\delta_t \sim \cO(1/\sqrt{t})$ with high probability.

	Assumption \ref{asm:input} formally states the PE condition, which guarantees the data is sufficiently informative. While this condition mathematically involves both the input $u_t$ and the state $x_t$, it can be inherently satisfied by enforcing PE strictly on the input \cite{coulson2023quantitative}. Practically, this is achieved by injecting the exploration noise $e_t$. Although rigorously designing deterministic, bounded PE sequences remains an open challenge, a widely adopted empirical strategy is to sample $e_t$ from a Gaussian distribution. Our experiments, c.f. Section \ref{sec:experiment}, demonstrate that under Gaussian excitation, the minimum singular value $\us(\oU_{0,t})$ grows at a rate of $\cO(\sqrt{t})$ with high probability, corresponding to a unit excitation of $\gamma_t \sim \cO(1)$. This aligns with established results in random matrix theory for matrices with independent entries \cite{tao2023topics}. However, extending these exact theoretical bounds to the correlated closed-loop sequences inherent in online control remains an active area of research.

	Together, these assumptions quantify the data quality, prompting the definition of the data ``Signal-to-Noise Ratio" (SNR):
	$$\text{SNR}_t := \frac{\gamma_t}{\delta_t}.$$
	Consequently, if the excitation $e_t$ is a unit excitation and the process noise $w_t$ is Gaussian, the SNR grows at $\text{SNR}_t \sim \cO(\sqrt{t})$ with high probability.

	We define the feasible policy sets for the true system $(A,B)$ as:
	\begin{equation*}
		\begin{aligned}
			\cS_{\theta} &= \{\theta = [K,L] ~|~ \rho(A + B K) < 1 \}, \\
			\cS_{\theta} (a) &= \{\theta = [K,L] ~|~ \rho(A + B K) < 1, C(\theta) \leq a \}.
		\end{aligned}
	\end{equation*}

	Based on these feasible sets, we establish the initial assumption on the policy, followed by a boundedness assumption on the tracking reference.
	\begin{assumption}
		The initial policy $\theta_{t_0}$ is internally stabilizing, i.e., $\theta_{t_0}\in \cS_{\theta}$.
	\end{assumption}
	\begin{assumption}
		For all $t\geq t_0$, the tracking reference satisfies $\Vert z_t \Vert \leq \overline{z}$.
	\end{assumption}

	Under the aforementioned assumptions, the following theorem establishes the convergence properties of the online DeePO algorithm for time-varying signal tracking.
	\begin{theo} \label{thm:convergence_online_deepo}
		There exist constants $\nu_i>0, i\in\{1,2,3,4\}$ and $\overline{\mu}$ depending on $A,B,Q,R,\theta_{t_0}$ such that: if for all $t$, $\text{SNR}_t \geq \text{max}\{\nu_1,\nu_2 \Vert M_t \Vert/\us(M_t)\}$ and $\eta_t \leq \Vert M_t\Vert^{-1}\nu_3$, then the policy obtained by Algorithm \ref{alg:adaptive_deepo_dynamic} satisfies:
		\begin{equation*}
				\begin{aligned}
				C(\theta_t) - C^* &\le  \prod_{i=t_0+1}^{t} \left( 1 - \frac{\eta_i \underline{\sigma}(M_i)}{2\overline{\mu}} \right) (C(\theta_{t_0}) - C^*) \\
				&+ \nu_4 \sum_{i=t_0+1}^{t} \prod_{s=i+1}^{t} \left( 1 - \frac{\eta_s \underline{\sigma}(M_s)}{2\overline{\mu}} \right) \frac{1}{\text{SNR}_i},
				\end{aligned}
		\end{equation*}
		where $M_t = \oU_{0,t} (I-\oX_{0,t}^\dagger \oX_{0,t}) \oU_{0,t}^\top$.
	\end{theo}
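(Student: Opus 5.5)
The plan follows the online DeePO analysis for LQR \cite{zhao2025policy}: reduce each online step to an inexact model-based gradient step on the true cost $C(\theta)$, then unroll a one-step contraction-plus-perturbation inequality. Throughout, work with $\theta_t=[K_t,L_t]$, where $L_t=K_{v,t}(I_n-\hA_t-\hB_tK_t)^{-\top}Q$. By the re-parameterization in Lines 6 and 8 of Algorithm~\ref{alg:adaptive_deepo_dynamic}, the map $\xi_t\mapsto\theta_t$ is exact. Lemma~\ref{lem:equivalence_deepo_model_based}, applied with the data at time $t+1$, then shows that one iteration is
\begin{equation*}
\theta_{t+1}=\theta_t-\eta_{t+1}M_{t+1}\nabla_\theta\hC_{t+1}(\theta_t),
\end{equation*}
where $\hC_{t+1}$ is the cost \eqref{eq:compact_lqt} built on the least-squares model $(\hA_{t+1},\hB_{t+1})$. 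I rewrite this as a step on the true cost,
\begin{equation*}
\theta_{t+1}=\theta_t-\eta_{t+1}M_{t+1}\bigl(\nabla C(\theta_t)+\Delta_{t+1}\bigr),\qquad \Delta_{t+1}=\nabla\hC_{t+1}(\theta_t)-\nabla C(\theta_t).
\end{equation*}

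\textbf{Step 1 (model error in terms of SNR).} From \eqref{eq:ls_estimator} and \eqref{eq:data_equation}, $[\hB_t,\hA_t]-[B,A]=W_{0,t}D_{0,t}^\dagger=\oW_{0,t}\Lambda_t^{-1}$. Assumptions~\ref{asm:noise} and \ref{asm:input} give $\|\oW_{0,t}\|\le\delta_t\|\Lambda_t\|^{1/2}$. The quantity I actually need is $\|W_{0,t}D_{0,t}^\dagger\|\le\sqrt t\,\delta_t/\sqrt{t\,\us(\Lambda_t)}\le\delta_t/\gamma_t=1/\mathrm{SNR}_t$. So the model error is at most $1/\mathrm{SNR}_t$.

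\textbf{Step 2 (perturbation of cost and gradient).} Fix the sublevel set $\cS_\theta(a)$ with $a$ a multiple of $C(\theta_{t_0})$. I need that for every $\theta\in\cS_\theta(a)$ and every model within distance $\epsilon\le 1/\nu_1$ of $(A,B)$:
\begin{itemize}
\item $\theta$ stabilizes the perturbed model;
\item $|\hC(\theta)-C(\theta)|\le c_1\epsilon$ and $\|\nabla\hC(\theta)-\nabla C(\theta)\|\le c_2\epsilon$.
\end{itemize}
To prove this, treat $P_K$ and $\Sigma_K$ as solutions of Lyapunov equations and $Y_K$ as a resolvent. Standard Lyapunov perturbation bounds (as in \cite{fazel2018global}) make these quantities Lipschitz in $(A,B)$ on compact stabilizing sets. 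Lipschitzness then passes through the explicit formulas \eqref{eq:policy_matrices}, \eqref{eq:compact_lqt} and Lemma~\ref{lem:policy_gradient}. Compactness comes from the coercivity in Lemma~\ref{lem:gradient_domination_smoothness}, which applies equally to the true model. This is the main obstacle. The feedforward terms $Y_K$ and $G_\theta$ make the bounds depend on $\|(I-A-BK)^{-1}\|$, and that must be controlled uniformly on the sublevel set; I would bound it through $P_K\succeq Q$ and the coercivity. One more subtlety: $L_t$ is defined through the estimated model, so the change of variables between consecutive data sets perturbs $\theta$ by $O(1/\mathrm{SNR})$ as well. I would absorb this into $\nu_4$ using the same Lipschitz bound.

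\textbf{Step 3 (one-step descent).} Assume inductively that $\theta_t\in\cS_\theta(a)$. Apply the local smoothness and gradient domination of Lemma~\ref{lem:gradient_domination_smoothness} to the true model, with constants $\overline L$ and $\overline\mu$ on $\cS_\theta(a)$. With $\eta_{t+1}\le\nu_3/\|M_{t+1}\|$, so that $\eta\overline L\|M\|\le1/2$, and using Young's inequality on the cross term $\langle\nabla C,M\Delta\rangle$, I obtain
\begin{equation*}
C(\theta_{t+1})-C^*\le\Bigl(1-\frac{\eta_{t+1}\us(M_{t+1})}{2\overline\mu}\Bigr)(C(\theta_t)-C^*)+c\,\eta_{t+1}\|M_{t+1}\|\,\|\Delta_{t+1}\|^2 .
\end{equation*}
Using $\eta\|M\|\le\nu_3$ and $\|\Delta\|\le c_2/\mathrm{SNR}$, the error term is bounded by $\nu_4/\mathrm{SNR}_{t+1}$. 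Alternatively, one can keep the linear term $\eta\|M\|\|\nabla C\|\|\Delta\|$ with $\|\nabla C\|$ bounded on the compact set.

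\textbf{Step 4 (induction and unrolling).} To close the induction, I must show that $\theta_{t+1}$ stays in $\cS_\theta(a)$. This holds when the bias is small relative to the decrease. Here the condition $\mathrm{SNR}\ge\nu_2\|M\|/\us(M)$ is used: it ensures $\nu_4/\mathrm{SNR}$ is at most a fraction of $\eta\us(M)/(2\overline\mu)$ times the gap $a-C^*$, so the recursion keeps $C(\theta_{t+1})\le a$. A continuity argument along the segment from $\theta_t$ to $\theta_{t+1}$ guarantees that the smoothness inequality is applicable. Unrolling the one-step inequality from $t_0$ to $t$ gives exactly the bound stated in Theorem~\ref{thm:convergence_online_deepo}.
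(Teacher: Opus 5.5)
Your proof is correct, and its skeleton matches the paper's:
\begin{itemize}
\item reduction to the preconditioned model-based step of Lemma~\ref{lem:equivalence_deepo_model_based_2};
\item the model-error bound $\|[\hB,\hA]-[B,A]\|\le\delta/\gamma$ (your Step~1 is the one-line argument the paper calls trivial);
\item a polynomial gradient-perturbation bound as in Lemma~\ref{lem:gradient_error};
\item an induction keeping $C(\theta_t)$ in a fixed sublevel set, followed by unrolling.
\end{itemize}

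Where you differ is the one-step inequality. The paper never applies smoothness along the actual update. It introduces the ideal iterate $\theta''=\theta-\eta M\nabla C(\theta)$ and gets its contraction from Lemma~\ref{lem:convergence_true_gradient}, recycling the sublevel-set argument of Lemma~\ref{lem:convergence_model_based_po}. It then passes to the real iterate through local Lipschitz continuity of the cost (Lemmas~\ref{lem:cost_error} and~\ref{lem:cost_error_2}):
\[
|C(\theta')-C(\theta'')|\le p_5\|\theta'-\theta''\|\le\eta p_2p_5\|M\|\delta/\gamma .
\]
That step also certifies that $\theta'$ is stabilizing.

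You instead treat the step as inexact gradient descent: smoothness along $\theta_t\to\theta_{t+1}$, plus Young's inequality on $\langle\nabla C,M\Delta\rangle$. This buys two things:
\begin{itemize}
\item you no longer need the cost-Lipschitz lemma or its radius $p_3$;
\item the bias is second order, $O(\eta\|M\|/\mathrm{SNR}^2)$, sharper than the paper's first-order bias (you then loosen it via $\mathrm{SNR}\ge\nu_1$).
\end{itemize}
The price is twofold. The continuity argument keeping the segment inside the sublevel set must be redone for the perturbed direction $M(\nabla C+\Delta)$. Also, $\nu_3$ must be somewhat smaller (e.g.\ $\eta\overline{L}\|M\|\le 1/4$ with Young weight $1/2$) to retain exactly the factor $\eta\us(M)/(2\overline{\mu})$.

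There are two small corrections.

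First, the \emph{subtlety} you raise about the change of variables does not arise. Line~8 forms $K_{v,t+1}$ with $(I_n-\oX_{1,t+1}V'_{t+1})^{\top}Q^{-1}$, and the next Line~6 undoes this with the same matrix. So the $L$ fed into the next gradient step is exactly $\oU_{0,t+1}H'_{t+1}$, and $\theta$ is carried over with no error. This matters. A per-step $O(1/\mathrm{SNR})$ jump not multiplied by $\eta\|M\|$ could be hidden in $\nu_4$ in the final formula. However, your Step~4 induction could not absorb it under the $\eta$-independent hypothesis $\mathrm{SNR}\ge\nu_2\|M\|/\us(M)$.

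Second, for the same reason, in Step~4 you must compare the bias in its $\eta$-proportional form $c\,\eta\|M\|\|\Delta\|^2$, not the bound $\nu_4/\mathrm{SNR}$. Compare it against the decrease $\eta\us(M)(a-C^*)/(2\overline{\mu})$. Then $\eta$ cancels, and what remains is a lower bound on $\mathrm{SNR}$ (here $\mathrm{SNR}^2$) proportional to $\|M\|/\us(M)$. That bound is supplied by the hypothesis together with $\mathrm{SNR}\ge\nu_1$.
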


	Theorem \ref{thm:convergence_online_deepo} reveals that the tracking error decays through two competing mechanisms. The first term dictates a linear convergence trend driven by the policy gradient, provided Gaussian excitation and noise yield $\us(M_t) \sim \cO(1)$ with high probability. The second term, bottlenecked by the SNR, accounts for the accumulation of estimation errors. Since Gaussian assumptions yield an SNR growing at $\text{SNR}_t\sim \cO(\sqrt{t})$, this second term decays sublinearly. Consequently, the overall algorithm converges linearly to a residual neighborhood of the optimal policy, the size of which is fundamentally constrained by the SNR.

	A critical prerequisite for Theorem \ref{thm:convergence_online_deepo} is that $\text{SNR}_t$ bounds the condition number of $M_t$, and the step size $\eta_t$ scales inversely with $\Vert M_t \Vert$. Since Lemma \ref{lem:equivalence_deepo_model_based_2} establishes $\us(M_t) \geq \gamma_t^4$, adopting a fixed step size under unit excitation $\gamma_t \sim \cO(1)$ rigorously necessitates a uniform upper bound on $\Vert M_t \Vert$. We establish this by leveraging sequential stability analysis to bound the state $x_t$.

	\begin{theo} \label{thm:sequential_stability}
		There exists a constant $\overline{p}_7$ dependent on $\overline{C}$ such that: if
		\[ 
		\begin{aligned}
		&\frac{\delta_t}{\gamma_t} \leq \text{min}\left\{\underline{p}_1, \frac{\us(M_t)}{2\overline{p}_2 \overline{p}_5 \overline{\mu} \Vert M_t \Vert}\right\} \\
		&\eta_t \leq \frac{1}{\Vert M_t \Vert}\cdot \text{min}\left\{\frac{\underline{p}_3 \gamma_t}{\overline{p}_2 \delta_t}, \underline{p}_6, \frac{\underline{p}_3}{\overline{p}_7}, \frac{\underline{a}}{4\overline{\kappa}^2\overline{p}_4\overline{p}_7}, \frac{1}{2\overline{p}_4 \overline{p}_7} \right\},
		\end{aligned}
		\]
		then it holds that
		\[
		\Vert x_t \Vert \leq \overline{\kappa}(1-\frac{\underline{a}}{2})^{t-t_0}\Vert x_{t_0} \Vert + \frac{2\overline{\kappa}}{\underline{a}}\max_{t_0 \leq i <t} \Vert Be_i + B L v_i + w_i \Vert,\]
		where
		\[ \overline{\kappa} = \sqrt{\frac{\overline{C}}{\text{min}\{\us(R),\us(Q)\}}}, \underline{a} = 1-\sqrt{1-\frac{1}{\overline{\kappa}^2}}.
		\]
	\end{theo}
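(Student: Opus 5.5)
The plan is a sequential-stability argument for the slowly varying closed-loop matrices $A+BK_t$. I would write the state recursion as
\[
x_{t+1} = (A+BK_t)x_t + Be_t + BK_{v,t}v_t + w_t,
\]
and collect the forcing terms into $d_t := Be_t + BLv_t + w_t$. The first step is a uniform Lyapunov certificate for each frozen closed-loop matrix. By Theorem~\ref{thm:convergence_online_deepo}, under the stated SNR and step-size conditions every iterate satisfies $\theta_t \in \cS_\theta(\overline{C})$ for some uniform level $\overline{C}$; the constants $\underline{p}_1,\overline{p}_2,\dots$ are exactly those from that proof. Since $C(\theta)\ge \mt(nP_K)$ (up to nonnegative terms), we get $P_{K_t}\preceq \overline{C} I$. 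Moreover $P_{K_t}\succeq Q$, so $P_{K_t} \succeq \min\{\us(R),\us(Q)\}\, I$. The Lyapunov equation gives
\[
(A+BK_t)^\top P_{K_t}(A+BK_t) \preceq P_{K_t} - Q \preceq \Big(1-\tfrac{\us(Q)}{\Vert P_{K_t}\Vert}\Big)P_{K_t}.
\]
Hence in the $P_{K_t}$-norm the matrix $A+BK_t$ contracts by a factor at most $\sqrt{1-1/\overline{\kappa}^2} = 1-\underline{a}$, and switching between the Euclidean norm and the $P_{K_t}$-norm costs at most the factor $\overline{\kappa}$.

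The second step controls the drift of the certificate between consecutive steps. I would bound $\Vert P_{K_{t+1}}-P_{K_t}\Vert \le \overline{p}_4\Vert K_{t+1}-K_t\Vert$ via a perturbation bound on the Lyapunov equation, which is valid on the compact sublevel set by the coercivity in Lemma~\ref{lem:gradient_domination_smoothness}. The change in $K_t$ has two sources. The first is the gradient step $\eta_t M_t \nabla$, whose size is at most $\eta_t\Vert M_t\Vert\,\overline{p}_7$, where $\overline{p}_7$ bounds the gradient on $\cS_\theta(\overline{C})$. The second is the re-parameterization error from the noise, of order $\overline{p}_2\delta_t/\gamma_t$, which is where $\underline{p}_3$ enters. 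The step-size conditions $\eta_t\Vert M_t\Vert \le \underline{a}/(4\overline{\kappa}^2\overline{p}_4\overline{p}_7)$ and $\eta_t\Vert M_t\Vert \le 1/(2\overline{p}_4\overline{p}_7)$ are precisely what is needed to give
\[
P_{K_{t+1}} \preceq \big(1+\tfrac{\underline{a}}{2}\big)P_{K_t}, \qquad \Vert x\Vert_{P_{K_{t+1}}} \le \big(1+\tfrac{\underline{a}}{4}\big)\Vert x\Vert_{P_{K_t}} .
\]
Combining this with the contraction factor $1-\underline{a}$ yields a net per-step contraction of $1-\underline{a}/2$ for the time-varying norm $\Vert\cdot\Vert_{P_{K_t}}$, after normalizing by $\us(P)$. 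I expect this to be the main obstacle: the norm change must be absorbed into half of the contraction margin uniformly in $t$. To keep the constants clean, I would first try working with the normalized matrices $P_{K_t}/\Vert P_{K_t}\Vert$.

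The last step unrolls the contracting recursion,
\[
\Vert x_{t+1}\Vert_{P_{t+1}} \le \big(1-\tfrac{\underline{a}}{2}\big)\Vert x_t\Vert_{P_t} + \sqrt{\overline{C}}\,\Vert d_t\Vert .
\]
Unrolling gives a geometric term in $\Vert x_{t_0}\Vert$ plus the forcing sum $\sum_i (1-\underline{a}/2)^{t-1-i}\Vert d_i\Vert$. Bounding that sum by $\tfrac{2}{\underline{a}}\max_{t_0\le i<t}\Vert d_i\Vert$ and converting back to the Euclidean norm gives the claimed bound with constant $\overline{\kappa}$. The residual term $B(K_{v,t}v_t - Lv_t)$ is absorbed into $d_t$ as in the statement.
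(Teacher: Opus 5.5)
Your overall route is the one the paper defers to, namely the sequential strong-stability argument of \cite{zhao2025policy}. It has four steps: the uniform bound $C(\theta_t)\le\overline{C}$ from the last lemma of the appendix; $(\overline{\kappa},\underline{a})$-strong stability of each frozen $A+BK_t$; slow drift of the Lyapunov certificate, enforced by the extra step-size conditions; and unrolling. The paper's constants point to $\Sigma_{K_t}$ as the certificate, since $\overline{p}_4$ is the $\Sigma$-perturbation constant of Lemma~\ref{lem:cost_error}. Your use of $P_{K_t}$ instead is a harmless dual choice, provided $\overline{p}_4$ is enlarged so that it is also a local Lipschitz constant of $K\mapsto P_K$. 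Your recursion bounds $\Vert d_t\Vert_{P_{K_{t+1}}}\le\sqrt{\overline{C}}\Vert d_t\Vert$ and applies the factor $1+\underline{a}/4$ only to $(A+BK_t)x_t$; done this way, it yields exactly the stated constants.

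The drift step, however, does not close as written, because the policy has no separate re-parameterization error. Line~6 of Algorithm~\ref{alg:adaptive_deepo_dynamic} gives $\oU_{0,t+1}V_{t+1}=K_t$ and $\oU_{0,t+1}H_{t+1}=L_t$ exactly, since the same matrix $\oX_{1,t}V_t'$ is used to pass from $L_t$ to $K_{v,t}$ and back. Hence, by Lemma~\ref{lem:equivalence_deepo_model_based_2}, the entire increment is $\theta_{t+1}-\theta_t=-\eta_t M_{t+1}\nabla_\theta\hC_{t+1}(\theta_t)$. The noise only replaces $\nabla C$ by $\nabla\hC$. By Lemma~\ref{lem:gradient_error} that error is at most $\overline{p}_2\underline{p}_1$, and it is still multiplied by $\eta_t\Vert M_{t+1}\Vert$. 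So $\overline{p}_7$ must bound the \emph{estimated} gradient, e.g.\ $\overline{p}_7=\sup_{\cS_\theta(\overline{C})}\Vert\nabla_\theta C\Vert+\overline{p}_2\underline{p}_1$.

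If you keep an additive term of order $\overline{p}_2\delta_t/\gamma_t$, no hypothesis makes $\overline{p}_4$ times it smaller than $\underline{a}\,\us(Q)/2$. The same holds for a term of size $\underline{p}_3$, which is all the condition $\eta_t\le\underline{p}_3\gamma_t/(\overline{p}_2\delta_t\Vert M_t\Vert)$ provides. In either case $P_{K_{t+1}}\preceq(1+\underline{a}/2)P_{K_t}$ does not follow.

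Relatedly, $\underline{p}_3$ actually enters through the condition $\eta_t\Vert M_t\Vert\le\underline{p}_3/\overline{p}_7$. This keeps $\Vert\theta_{t+1}-\theta_t\Vert\le\underline{p}_3$, i.e.\ inside the perturbation radius of Lemma~\ref{lem:cost_error}. That condition is what licenses the Lipschitz bound on the certificate, not compactness of $\cS_\theta(\overline{C})$ alone: the set need not be convex, so the segment from $K_t$ to $K_{t+1}$ may leave it. With this accounting, $\Vert P_{K_{t+1}}-P_{K_t}\Vert\le\underline{a}/(4\overline{\kappa}^2)\le\tfrac{\underline{a}}{2}\us(Q)$ (using $\overline{C}\ge 1$), and the rest of your argument goes through.

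Finally, nothing needs to be absorbed at the end. Take $d_t=Be_t+BK_{v,t}v_t+w_t$, which is what $BLv_i$ in the statement denotes.
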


	The proof of Theorem \ref{thm:sequential_stability} follows the same vein as the proof in \cite{zhao2025policy}. Since the reference $z_t$ (and consequently $v_t$) is bounded, Theorem \ref{thm:sequential_stability} guarantees that $\Vert x_t \Vert$ remains bounded over time. If $e_t$ and $w_t$ are strictly bounded, a deterministic upper bound exists; if they are Gaussian, a high-probability upper bound can be established. Because the control input is $u_t = K x_t + L v_t + e_t$, and $M_t$ is constructed from these historical states and inputs, the uniform boundedness of $x_t$ directly implies that $\Vert M_t \Vert$ is bounded. In Lemma \ref{lem:equivalence_deepo_model_based_2} in the Appendix, we also establish a lower bound on $\us(M_t)$ in terms of the excitation $\gamma_t$. By substituting this supremum for $\Vert M_t \Vert$ and replacing $\us(M_t)$ with $\gamma_t^4$ in the conditions of Theorem \ref{thm:convergence_online_deepo}, we obtain a conservative, matrix-independent criterion that theoretically justifies the selection of a valid, fixed step size throughout the online adaptation.

	\begin{remark} \label{rmk:different_eta}
		It follows from the proof of Theorem \ref{thm:convergence_online_deepo} that the convergence of Algorithm \ref{alg:adaptive_deepo_dynamic} is preserved if the update direction is preconditioned by an arbitrary positive definite matrix $D$, provided the step size is appropriately adjusted. As a special case, we can choose different step sizes for the $V$- and $H$-components of the gradient to maximize the optimization efficiency. This is validated in Section \ref{sec:experiment}.
	\end{remark}

	\section{Numerical Experiments} \label{sec:experiment}

	In this section, we validate the theoretical properties of the proposed offline and online DeePO algorithms. We consider a randomly generated, open-loop stable discrete-time linear system. The open-loop stability allows us to initialize the algorithms with a zero policy ($\theta_0 = 0$) without loss of generality. The system matrix $A$ is given by:
	$$A = \begin{bmatrix}
		-0.229& 0.247& -0.511& 0.493\\
		0.846& 0.159& 0.722& 0.529\\
		-0.018& 0.07& 0.3& 0.758\\
		0.247& 0.546& -0.511& -0.176
	\end{bmatrix}.
	$$
	To evaluate the algorithms under different controllability conditions, we define a fully actuated input matrix $B_{\text{full}}$:
	$$B_{\text{full}} = \begin{bmatrix}
		-0.633 & 0.938 & 0.132 & -0.527\\
		0.262& -0.796& 0.264& -0.350\\
		0.461& -0.180& -0.428& 0.457\\
		0.774& 0.112& -0.285& -0.168
	\end{bmatrix},
	$$
	and an underactuated matrix $B_{\text{under}}$ formed by extracting the first two columns of $B_{\text{full}}$.

	For both offline and online scenario, we pre-collect $T=9$ samples generated with Gaussian exploration noise $\mathcal{N}(0, I_4)$ and process noise $w_t \sim \mathcal{N}(0, 0.1^2 I_4)$. We set the weight matrices to $Q = I_4$ and $R = 0.01 I_2$, and the base learning rate to $\eta = 0.01$. 

	Figure \ref{fig:oc} illustrates the convergence trajectories of the offline DeePO algorithm for both fully actuated and underactuated configurations. The results strictly align with Theorem \ref{thm:convergence_deepo}, demonstrating a global linear convergence rate for both the cost function and the policy parameters. 

	\begin{figure}[t!]
		\centering
		\includegraphics[width = 0.47\textwidth]{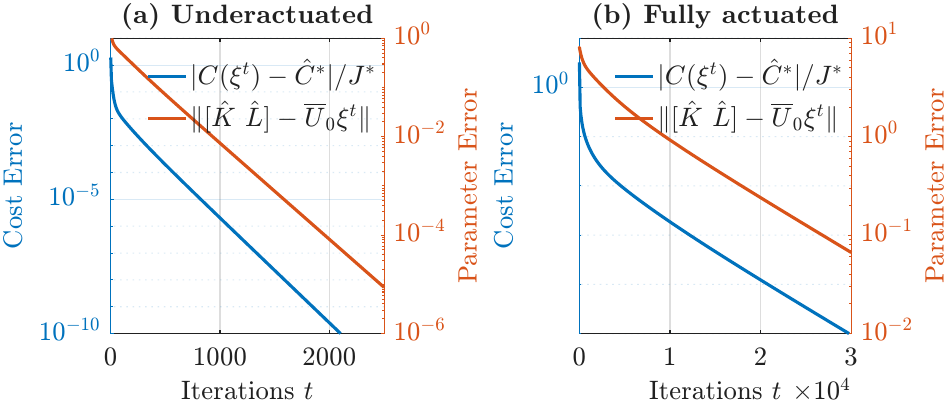}
		\caption{Convergence of the Offline DeePO algorithm for LQT.}
		\label{fig:oc}
	\end{figure}

	\begin{figure}[t!]
		\centering
		\includegraphics[width = 0.47\textwidth]{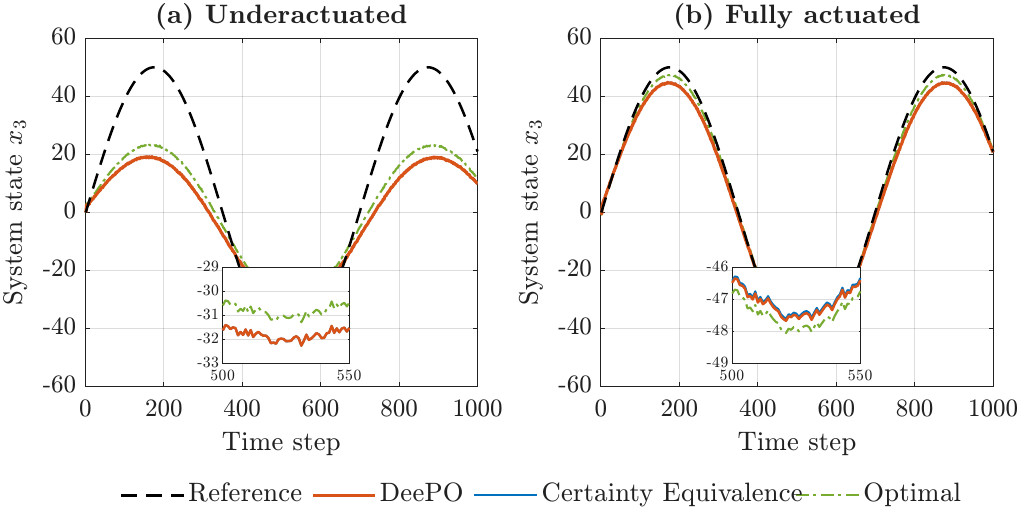}
		\caption{Tracking performance of the Offline DeePO algorithm for LQT.}
		\label{fig:ofp}
	\end{figure}

	\begin{figure}[t!]
		\centering
		\includegraphics[width = 0.47\textwidth]{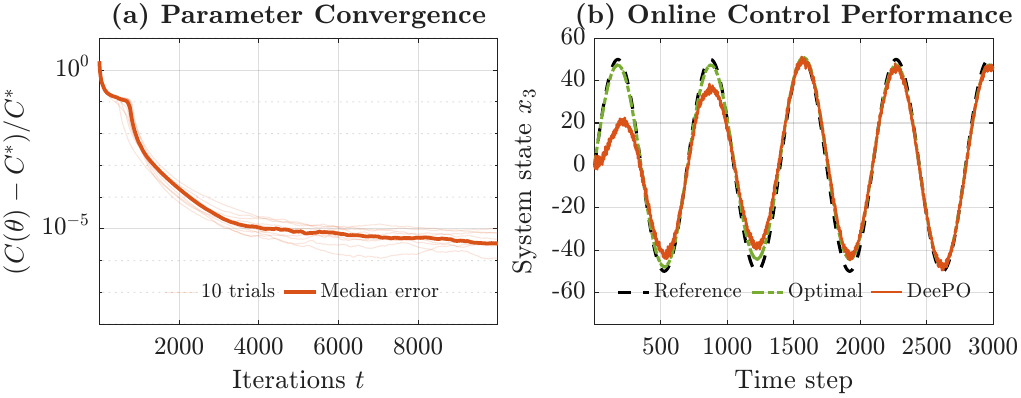}
		\caption{Convergence andTracking performance of the Online DeePO algorithm for LQT.}
		\label{fig:onp}
	\end{figure}

	\begin{figure}[t!]
		\centering
		\includegraphics[width = 0.47\textwidth]{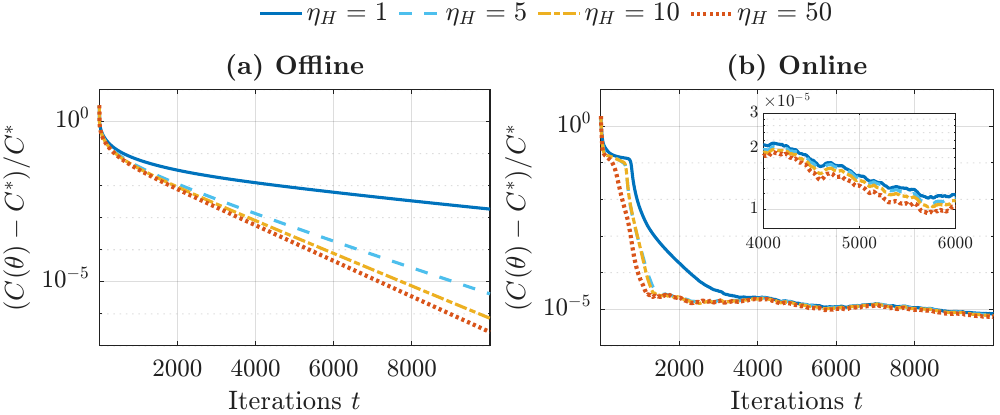}
		\caption{Convergence with different step sizes for the $H$-component of the gradient.}
		\label{fig:ch}
	\end{figure}

	\begin{figure}[t!]
		\centering
		\includegraphics[width = 0.47\textwidth]{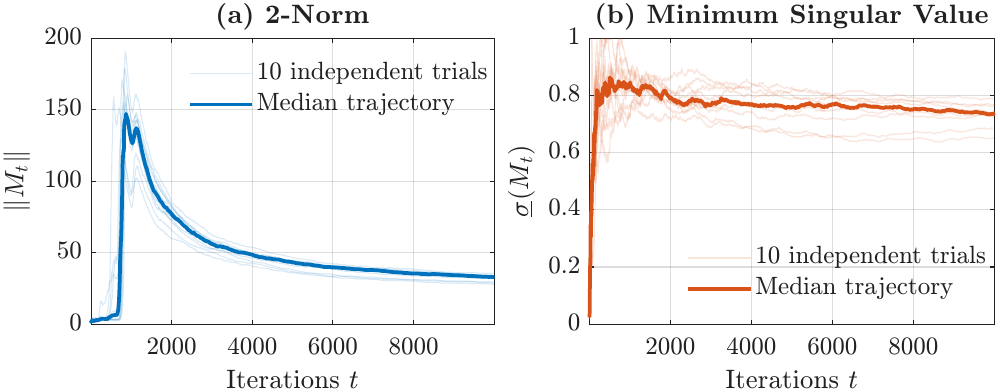}
		\caption{Evolution of $\underline{\sigma}(\oU_{0,t})$ and $\underline{\sigma}(M_t)$ during the online adaptation.}
		\label{fig:sc}
	\end{figure}

	To evaluate tracking performance, the converged policy is deployed for 3000 steps to track a time-varying reference: $z_t = [50 \sin(0.003t), 0.003t, 50 \sin(0.009t), 10]^\top$. As shown in Fig. \ref{fig:ofp}, the offline DeePO policy achieves a tracking cost almost identical to the certainty-equivalence (CE) baseline and closely approximates the global optimal performance. Predictably, perfect tracking is unattainable for the underactuated system regardless of the policy, manifesting as inherent residual errors.

	Since in the offline mode, the differences between underactuated and fully actuated have been fully demonstrated, in the online mode we only present the results of fully actuated. The results, averaged over 10 independent Monte Carlo runs, are presented in Fig. \ref{fig:onp}.
	The perfectly captures the two-phase convergence predicted by Theorem \ref{thm:convergence_online_deepo}. The policy initially converges at a linear rate before transitioning to a sublinear rate, ultimately bounded by the accumulating SNR bottleneck. Consequently, the real-time tracking performance dynamically improves as the policy adapts, eventually aligning with the optimal tracking trajectory.

	In Remark \ref{rmk:different_eta}, we discussed the possibility of using different step sizes for the $V$- and $H$-components of the gradient. We empirically validate this by scaling the base learning rate $\eta$ by factors of $\eta_H = 1, 5, 10 \text{ and }50$ for the $H$-component while keeping it fixed for the $V$-component. As shown in Fig. \ref{fig:ch}, a larger step size for the $H$-component leads to faster convergence in the early stages. However, for the online mode, the final convergence rate is bottlenecked by the SNR, and a larger step size does not lead to better final performance.

	A critical pillar of our convergence is the boundedness of $\Vert M_t \Vert$ and $\us(M_t)$ under Gaussian excitation. To empirically validate these assumptions, we track the evolution of $\underline{\sigma}(\oU_{0,t})$ and $\underline{\sigma}(M_t)$ throughout the online adaptation process, shown in Fig. \ref{fig:sc}. The results confirm that both $\Vert M_t \Vert$ and $\underline{\sigma}(M_t)$ exhibit a trend consistent with $\cO(1)$, thereby validating the theoretical conditions for convergence.

	\section{Conclusion} \label{sec:conclusion}

	This paper presents a direct data-driven policy optimization framework for the LQT problem. By introducing a reference-decoupled reformulation and extending covariance parameterization, we ensure a constant policy parameter dimension independent of data length. We develope offline and online DeePO algorithms, establishing global linear convergence for the offline case and proving that the online adaptive policy converges to an optimality neighborhood bounded by the inverse SNR. Numerical results validate the effectiveness of our proposed methods in dynamic tracking tasks. 
	
	Future research will focus on establishing theoretical guarantees for persistent excitation (PE) under Gaussian noise and exploring more efficient exploration mechanisms. Additionally, investigating the impact of input design on the properties of the scaling matrix $M$, such as its condition number, remains a promising direction to further bridge the gap between data-driven and model-based methodologies.

	\appendix

	\section{Proof of Lemma~\ref{lem:compact_form}} \label{app:proof_lem_compact}
		For the $i$-th set-point tracking problem in \eqref{eq:multi_set_point_lqt}, define an advantage function as 
		\begin{equation*}
			\begin{aligned}
				f^i(x;\theta) = &\bE \bigg \{ \sum_{t=0}^\infty (x_t - e^i)^\top Q (x_t - e^i) + u_t^\top R u_t \\
				& - C^i(\theta) | x_0 = x, u_t = K x_t + L e^i \bigg\},
			\end{aligned}
		\end{equation*}
		where $C^i(\theta)$ is the cost function of the $i$-th set-point tracking problem. By using backward dynamic programming \cite{bertsekas1995dynamic}, it can be shown that $f^i(x;\theta)$ has a quadratic form as $f^i(x;\theta) = x^\top P x + 2 g^{i \top x} + s$, where $P$, $g$ and $s$ are to be determined. By the Bellman equation, for some state $x$ we have
		\begin{equation} \label{eq:bellman_equation}
			\begin{aligned}
				& (x - e^i)^\top Q (x - e^i) + (K x + L e^i)^\top R (K x + L e^i) - C(\theta) \\
				& + \bE [f^i(A x + B (K x + L e^i) + w; \theta)] = f^i(x;\theta).
			\end{aligned}
		\end{equation}
		By substituting the quadratic form of $f^i(x;\theta)$ into \eqref{eq:bellman_equation} and comparing the coefficients of $x^\top x$, $x$ and the constant term, we get the following equations:
		\begin{equation*}
			\begin{aligned}
				P &= Q + K^\top R K + (A + B K)^\top P (A + B K), \\
				g^i &= -Q e^i + K^\top R L e^i + (A + B K)^\top P B L e^i \\
				& ~~~+ (A + B K)^\top g^i, \\
				C^i(\theta) &= e^{i \top} Q e^i + e^{i \top} L^\top R L e^i + e^{i \top} L^\top B^\top P B L e^i \\
				& ~~~ +\mt(P) + 2 g^{i \top} B L e^i.
			\end{aligned}
		\end{equation*}

		Finally, by adding the cost functions $C^i(\theta)$ for all $i=1,\ldots,n$ and using the definitions of $P_K$ and $G_\theta$ in \eqref{eq:policy_matrices}, we obtain the cost funtion $C(\theta)$ in \eqref{eq:compact_lqt}. This completes the proof.

	\section{Proof of Theorem~\ref{thm:equivalence}} \label{app:proof_thm_equivalence}
		Note that
		\begin{equation} \label{eq:data_dynamics_equivalence}
		\begin{aligned}
		&\hA \oX_{0,t} + \hB \oU_{0,t} = [\hB~\hA]\begin{bmatrix} \oU_{0,t} \\ \oX_{0,t} \end{bmatrix} \\
		&~~= (X_1 D_{0,t}^\dagger) (\frac{1}{t} D_{0,t} D_{0,t}^\top) = \oX_{1,t}.
		\end{aligned}
		\end{equation}
		Then, for any $V, H$ and $K, L$ satisfying~\eqref{eq:covariance_parameterization}, it holds that
		\begin{equation} \label{eq:closed_loop_equivalence}
			\begin{aligned}
			\hA + \hB K &= (\hA \oX_{0,t} + \hB \oU_{0,t}) V = \oX_{1,t} V, \\
			\hB L &= (\hA \oX_{0,t} + \hB \oU_{0,t}) H = \oX_{1,t} H.
			\end{aligned}
		\end{equation}
		
		Therefore, by substituting the above equations into \eqref{eq:covariance_parameterization} and \eqref{eq:data_based_lqt}, we can obtain the problem \eqref{eq:compact_lqt}, with $A, B$ substituted by $\hA, \hB$. By Lemma \ref{lem:equivalence} and Lemma \ref{lem:compact_form}, the optimal solution to this substituted problem is $(\hK, \hL)$ with $\hL = \hK_v(I_n - \hA - \hB \hK)^{-\top} Q$. This completes the proof.

		\section{Proof in Section~\ref{sec:offline_deepo}} \label{app:proof_offline_deepo}
		In this appendix, since the model is clear to refer to the estimate model, we omit the hat notation for simplicity, e.g., we use $A, B, C(\theta)$ to denote $\hA, \hB, \hC(\theta)$. 

		\subsection{Proof of Lemma~\ref{lem:policy_gradient}}
		We divide the cost function $C(\xi)$ into three parts as
		\begin{equation*}
			\begin{aligned}
			C_1(\xi) &= \mt(Q + H^\top \oU_0^\top R \oU_0 H + n P_V), \\
			C_2(\xi) &= \mt(2 H^\top \oU_0^\top R \oU_0 V Y_V \oX_1 H - 2 QY_V \oX_1 H), \\
			C_3(\xi) &= \mt(H^\top \oX_1^\top P_V \oX_1 H + 2H^\top \oX_1^\top P_V \oX_1 V Y_V \oX_1 H).
			\end{aligned}
		\end{equation*}
		Then, we compute the gradient of each part separately. For $C_1(\xi)$, we have
		\begin{equation*}
			\nabla_V C_1(\xi) = n \nabla_V \mt(P_V) = 2 n E_V \Sigma_V,
		\end{equation*}
		which follows~\cite[Lemma 2]{zhao2025data}, and
		\begin{equation*}
			\nabla_H C_1(\xi) = 2 \oU_0^\top R \oU_0 H.
		\end{equation*}
		For $C_2(\xi)$, we have
		\begin{equation*}
			\begin{aligned}
			\nabla_V C_2(\xi) &= 2(-\oX_1^\top Y_V^\top Q + \oU_0^\top R\oU_0 H\\
			& + \oX_1^\top Y_V^\top V^\top \oU_0^\top R \oU_0 H ) Z_\xi^\top, \\
			\end{aligned}
		\end{equation*}
		and
		\begin{equation*}
			\begin{aligned}
			\nabla_H C_2(\xi) = &~ 2(-\oX_1^\top Y_V^\top Q + \oX_1^\top Y_V^\top V^\top \oU_0^\top R \oU_0 H \\
			&+ \oU_0^\top R \oU_0 V Y_V \oX_1 H).
			\end{aligned}
		\end{equation*}

		Finally, for $C_3(\xi)$, we have
		\begin{equation*}
			\begin{aligned}
			\nabla_H C_3(\xi) = &~ 2 \big(\oX_1^\top P_V \oX_1 H + \oX_1^\top P_V \oX_1 V Y_V \oX_1 H \\ 
			&~ + \oX_1^\top Y_V^\top V^\top \oX_1^\top P_V \oX_1 H \big),
			\end{aligned}
		\end{equation*}
		By adding the above three parts together, we obtain
		\begin{equation} \label{eq:gradient_H}
			\nabla_H C(\xi) = 2 F_\xi + 2 E_V Z_\xi.
		\end{equation} 

		For $\nabla_V C(\xi)$, it is difficult to culculate the gradient of $C_3(\xi)$ directly and transform it into the form in Lemma~\ref{lem:policy_gradient}. Instead, we use the relationship between $\nabla_V C(\xi)$ and $\nabla_H C(\xi)$ to get the desired form. Note that $Y = (I_n + \oX_1 V Y_V)$, then we have
		\begin{equation*}
			\begin{aligned}
			C_3(\xi) &= \mt(H^\top \oX_1^\top P_V \oX_1 H + 2H^\top \oX_1^\top P_V \oX_1 V Y_V \oX_1 H)\\
			& = \mt(H^\top \oX_1^\top P_V (I_n + \oX_1 V) Y_V \oX_1 H) \\
			& = \mt(H^\top \oX_1^\top Y_V^\top (Q + V^\top \oU_0^\top R \oU_0 V) Y_V \oX_1 H).
			\end{aligned}
		\end{equation*}
		The last equality follows that $P_V = \sum_{i=0}^\infty((\oX_1 V)^\top)^i  (Q + V^\top \oU_0^\top R \oU_0 V) (\oX_1 V)^i$, and $Y_V =  \sum_{i=0}^\infty (\oX_1 V)^i$. By multiplying all the terms of the expansion, and using the fact that $\mt(A^{\top i} \Phi A^j) = \mt(A^{\top j} \Phi A^i)$ for any matrices $A$ and symmetric matrix $\Phi$, we can prove the equation. 

		Next, we compute the gradient of $C_3(\xi)$ with respect to $V$ as
		\begin{equation*}
			\begin{aligned}
			\nabla_V C_3(\xi) = &~ 2 \big(\oX_1^\top Y_V^\top (Q + V^\top \oU_0^\top R \oU_0 V) Y_V \oX_1 H \\
			& + \oU_0^\top R \oU_0 V Y_V \oX_1 H \big)Z_\xi^\top.
			\end{aligned}  
		\end{equation*}
		Meanwhile, we have 
		\begin{equation*}
			\begin{aligned}
			\nabla_H C_3(\xi) = 2 \oX_1^\top Y_V^\top (Q + V^\top \oU_0^\top R \oU_0 V) Y_V \oX_1 H.
			\end{aligned}
		\end{equation*}
		By adding the three parts together, it is clear that $\nabla_V C(\xi) = 2E_V\Sigma_V + \nabla_H C(\xi) Z_\xi^\top$. Then, by substituting \eqref{eq:gradient_H} into the above equation, we obtain the desired form of $\nabla_V C(\xi)$. This completes the proof. 

		\subsection{Proof of Lemma~\ref{lem:equivalence_deepo_model_based}}
		Similar to the proof of Lemma~\ref{lem:policy_gradient}, it can be shown that the policy gradient of the cost function $C(\theta)$ in \eqref{eq:compact_lqt} is given by
		\begin{equation} \label{eq:gradient_KL}
			\nabla_\theta C(\theta) = 2 [E_K ~~ F_\theta] \Phi(\theta),
		\end{equation}
		where $E_K$, $F_\theta$, $\Sigma_K$, $Z_\theta$ are defined similarly as $E_V$, $F_\xi$, $\Sigma_V$, $Z_\xi$, i.e.,
		\begin{equation*}
			\begin{aligned}
			E_K = ~& (R + B^\top P_K B) K + B^\top P_K A, \\
			F_\theta = ~& B^\top G_\theta + R L+ B^\top P_K B L.\\
			\Sigma_K = ~& I_n + (A + B K) \Sigma_K (A + B K)^\top,\\
			Z_\theta = ~& Y_K B L,
			\end{aligned}
		\end{equation*}
		and 
		\begin{equation*}
			\Phi(\theta) = \begin{bmatrix} \Sigma_K + Z_\theta Z_\theta^\top & Z_\theta \\ Z_\theta^\top & I_n \end{bmatrix}.
		\end{equation*}
		By the data-driven policy iteration \eqref{eq:policy_update}, the corresponding updates of $K$ can be written as
		\begin{equation*}
			\begin{aligned}
				&~~~~K^+ - K \\
				&= \oU_0 (V^+ - V)\\
				&= -2\eta \oU_0 (I - \oX_0^\dagger \oX_0) (E_V \Sigma_V + E_V Z_\xi Z_\xi^\top + F_\theta Z_\xi^\top) \\
				&= -2\eta M\left((RK + B^\top P_K (A+B K))(\Sigma_K + Z_\theta Z_\theta^\top)\right.\\
				&~~~\left.+(B^\top F_\theta + R L + B^\top P_K B L)Z_\theta^\top\right),\\
				\end{aligned}
		\end{equation*}
		where the last equality follows from \eqref{eq:data_dynamics_equivalence}, \eqref{eq:closed_loop_equivalence} and \eqref{eq:covariance_parameterization}. Therefore, we have $ K^+ = K - \eta M \nabla_K C(\theta)$.
		
		Similarly, the updates of $L$ can be written as
		\begin{equation*}
			\begin{aligned}
				&~~~~L^+ - L \\
				&= \oU_0 (H^+ - H)\\
				&= -2\eta \oU_0 (I - \oX_0^\dagger \oX_0) ( \oX_1^\top F_\xi + \oU_0^\top R \oU_0 H + \oX_1^\top P_V \oX_1 H) \\
				&= -\eta M \nabla_L C(\theta).
			\end{aligned}
		\end{equation*}

		As for the positive definiteness of $M$. Let $N = \oU_0(I-\oX_0^\dagger \oX_0)$, then we have
		\begin{equation} \label{eq:M_equ_ntn}
		M = N N^\top + \oU_0\oX_0^\dagger\oX_0(I-\oX_0^\dagger \oX_0)\oU_0^\top = N N^\top.
		\end{equation}
		By the persistent excitation condition, we have 
		\[
		\text{rank}(\Lambda) = \text{rank}\begin{bmatrix}
		N + \oU_0\oX_0^\dagger\oX_0 \\ \oX_0
		\end{bmatrix} = \text{rank}\begin{bmatrix}
		N \\ \oX_0
		\end{bmatrix}=n+m.
		\]
		Therefore, we have $\text{rank}(N) = m$. By \eqref{eq:M_equ_ntn}, we know that $M$ is positive definite. This completes the proof.

		\subsection{Proof of Lemma~\ref{lem:gradient_domination_smoothness}}
		We will provide the proof of coercivity later together with the proof of local smoothness. 
		\subsubsection{Gradient Domination}
		In \cite{zhao2023global}, it has been shown that the cost of each subproblem in~\eqref{eq:multi_set_point_lqt} satisfies the following properties (see the proof of Lemma 5 in~\cite{zhao2023global}):
		\begin{equation*}
			C^i(\theta) - C^i(\theta^*) \leq \frac{\Vert \Phi^{i *} \Vert}{\us(R)} \mt \{ [E_K~~F_\theta^i][E_K~~F_\theta^i]^\top\},
		\end{equation*}
		where $\Phi^{i *}$ is the value of 
		\begin{equation*}
			\Phi^i(\theta) = \begin{bmatrix} \Sigma_K + (Z_\theta e^i)(Z_\theta e^i)^\top & Z_\theta e^i \\ (Z_\theta e^i)^\top & 1 \end{bmatrix},
		\end{equation*}
		at the optimal policy $\widehat{\theta}$, and $F_\theta^i = F_\theta e^i$. Moreover, similar to Lemma \ref{lem:policy_gradient}, it can be shown that the policy gradient of $C^i(\theta)$ on $K$ and the $i$-th colume of $L$ is given by
		\begin{equation*}
			\nabla_{[K, L^i]} C^i(\theta) = 2 \begin{bmatrix} E_K & F_\theta^i \end{bmatrix} \Phi^i(\theta).
		\end{equation*}
		Therefore, we can establish the gradient domination property by summing the above inequalities for all $i=1,\ldots,n$ associated
		\begin{equation*}
			C(\theta) - C(\theta^*) \leq \mu(a, \theta) \Vert \nabla_\theta C(\theta) \Vert_F^2,
		\end{equation*}
		where $\mu(a, \theta) = \max_{i} \frac{\Vert \Phi^{i *} \Vert}{4 \us(R) \us(\Phi(\theta))^2}$. By the continuity of the positive definite matrix $\Phi(\theta)$ in $\theta$, it is clear that $\us(\Phi(\theta))$ has a positive lower bound over the compact set $\widehat{\cS}_{\theta} (a)$. Hence, we can choose a uniform constant $\mu(a) > 0$ for all $\theta \in \widehat{\cS}_{\theta} (a)$.

		\subsubsection{Local Smoothness}
		We prove the local smoothness by providing an upperbound for the second-order derivative $\Vert \nabla^2 C(\theta) \Vert := \sup_{\Vert \Delta \Vert_F = 1} \nabla^2 C(\theta)[\Delta, \Delta]$ over the sublevel set $\widehat{\cS}_{\theta} (a)$. In this proof, we omit the dependence of $\theta$ in $E_K$ $F_\theta$, $\Sigma_K$, $Y_K$, $Z_\theta$ and $\Phi(\theta)$ for simplicity.

		First, we have 
		\begin{equation*}
			\nabla C(\theta)[\Delta] = \mt (2\Delta^\top [E ~~ F] \Phi(\theta)).
		\end{equation*}
		Then, we compute $\nabla^2 C(\theta)[\Delta, \Delta]$:
		\begin{equation*}
			\begin{aligned}
				d(E \Sigma &+ E Z Z^\top + F Z^\top) \bigr\vert_K\\
				&= E d\Sigma + dE\bigr\vert_K \Sigma + dE\bigr\vert_K Z Z^\top\\
				&\quad + E d(Z Z^\top)\bigr\vert_K + dF\bigr\vert_K Z^\top + F dZ^\top\bigr\vert_K, \\[4pt]
				d(E Z &+ F) \bigr\vert_K = dE\bigr\vert_K Z + E dZ\bigr\vert_K + dF\bigr\vert_K, \\[4pt]
				d(E \Sigma &+ E Z Z^\top + F Z^\top) \bigr\vert_L\\
				&= E d(Z Z^\top)\bigr\vert_L + dF\bigr\vert_L Z^\top + F dZ^\top\bigr\vert_L,\\[4pt]
				d(E Z &+ F) \bigr\vert_L = E dZ\bigr\vert_L + dF\bigr\vert_L, 
			\end{aligned}
		\end{equation*}
		where
		\begin{equation*}
		\begin{aligned}
			& d\Sigma = \sum_{i=0}^\infty (A+BK)^i \big(BdK \Sigma (A+BK)^\top \\
				& \quad \quad \quad +(A+BK)\Sigma dK^\top B^\top \big) ((A+BK)^\top)^i, \\[4pt]
			&d P = \sum_{i=0}^\infty ((A+BK)^\top)^i \big(dK^\top E + E^\top dK \big) (A+BK)^i, \\[4pt]
			&d E\bigr\vert_K = (R + B^\top P B) dK + B^\top dP (A+BK),\\[4pt]
			&d(Z Z^\top)\bigr\vert_K = Y B dK Z Z^\top + Z Z^\top dK^\top B^\top Y^\top, \\[4pt]
			&d(Z^\top)\bigr\vert_K = Z^\top dK^\top B^\top Y^\top, \\[4pt]
			&dF\bigr\vert_K = B^\top Y^\top dK^\top F + B^\top Y^\top dP B L, \\[4pt]
			&d(Z Z^\top)\bigr\vert_L = Y B dL Z^\top + Z dL^\top B^\top Y^\top, \\[4pt]
			&d(Z^\top)\bigr\vert_L = dL^\top B^\top Y^\top, \\[4pt]
			&dF\bigr\vert_L = B^\top Y^\top P B dL + R dL + B^\top Y^\top K^\top R dL. \\
		\end{aligned}
		\end{equation*}

		In order to continue the derivation, we divide the direction $\Delta$ into two parts as $\Delta = [\Delta_1~~\Delta_2]$, where $\Delta_1$ and $\Delta_2$ are the parts corresponding to $K$ and $L$, respectively. Then, define
		\begin{equation}
		\begin{aligned}
		\gamma_L = & \left( Z^\top \Delta_1^\top E + Z^\top E^\top \Delta_1 + \Delta_2^\top E \right) YB + F^\top \Delta_1 YB \\
				& + (Z^\top \Delta_1 + \Delta_2) \left( B^\top Y^\top PB + R + B^\top Y^\top K^\top R \right) \\[1ex]
		\Sigma_1 = & \sum_{i=0}^\infty (A+BK)^i \bigl[ (A+BK) ( n\Sigma\Delta_1^\top + ZZ^\top \Delta_1^\top \\
				& + Z\Delta_2^\top ) B^\top \bigr] \left( (A+BK)^\top \right)^i \\[1ex]
		\Sigma_2 = & \sum_{i=0}^\infty \left( (A+BK)^\top \right)^i (\Delta_1^\top E) (A+BK)^i \\[1ex]
		\Sigma_3 = & \sum_{i=0}^\infty (A+BK)^i \left( BL(Z^\top \Delta_1^\top + \Delta_2^\top) B^\top Y^\top \right) \\
				& \cdot \left( (A+BK)^\top \right)^i \\[1ex]
		\gamma_K = & \Sigma(A+BK)^\top (\Sigma_2 + \Sigma_2^\top)B + (\Sigma_1 + \Sigma_1^\top)E^\top \\
				& + (\Sigma + ZZ^\top)\Delta_1^\top (R + B^\top PB) \\
				& + YB(\Delta_1 Z + \Delta_2)F^\top + ZF^\top \Delta_1 YB \\
				& + ZZ^\top (E^\top \Delta_1 + \Delta_1^\top E)YB \\
				& + Z\Delta_2^\top (R + B^\top PB + EYB) \\
				& + (\Sigma_3 + \Sigma_3^\top)E^\top.
		\end{aligned}
		\end{equation}
		
		Using the above notations, we can express $\nabla^2 C(\theta)[\Delta, \Delta]$ as
		\begin{equation} \label{eq:direction_derivative}
			\nabla^2 C(\theta)[\Delta, \Delta] = 2 \mt(\Delta_1 \gamma_K + \Delta_2 \gamma_L).
		\end{equation}

		Next, we provide an explicit upper bound for $\Vert \nabla^2 C(\theta) \Vert$. We start by deriving some useful bounds over the sublevel set $\widehat{\cS}_{\theta} (a)$. Notice that each subproblems in \eqref{eq:multi_set_point_lqt} has a stationary state under the policy $\theta$, denoted by $\bar{x}^i = Y B L e^i$. Then, by the definition of $C(\theta)$ in \eqref{eq:compact_lqt}, we can write the cost function in another form as
		\begin{equation} \label{eq:cost_alternative_form}
			\begin{aligned}
			C(\theta) &~= \sum_{i=1}^n \left( (\bar{x}^{i} - e^i)^\top Q (\bar{x}^i-e^i) \right.\\
			&~~~~\left. + (K \bar{x} + L e^i)^\top R (K \bar{x} + L e^i) + tr(P)\right).\\
			&~ = \mt \left( (Z-I_n)^\top Q (Z-I_n) \right. \\
			&~~~~\left. + (KZ + L)^\top R (KZ + L) + n P\right).
			\end{aligned}
		\end{equation}
		Therefore we have $\Vert P \Vert \leq a/n$, $\Vert KZ + L \Vert \leq \sqrt{a / \us(R)}$ and $\Vert Z \Vert \leq \sqrt{a / \us(Q)}$. Also, we have
		\begin{equation*}
			a \geq n \mt(P) \geq n \mt ((Q + K^\top R K) \Sigma) \geq n \us(R) \Vert K \Vert_F^2.
		\end{equation*}
		Then, we have $\Vert \Sigma \Vert \leq a/(n\us(Q))$ and $\Vert K \Vert_F \leq \sqrt{a / (n \us(R))}$. Moreover, we can obtain 
		\begin{equation*}
			\Vert L \Vert \leq \sqrt{a / \us(R)} + \frac{a}{\sqrt{n \us(Q) \us(R)}}.
		\end{equation*}

		Notice that $\Vert Z \Vert \rightarrow \infty$ as $\Vert L \Vert_F \rightarrow \infty$, $\mt(P) \rightarrow \infty$ as $\rho(A+BK) \rightarrow 1$. Therefore, by \eqref{eq:cost_alternative_form}, we can prove the coercivity of $C(\theta)$. Also, since $C(\theta)$ is continuous and the optimal policy $\widehat{\theta}$ is unique, we obtain that the sublevel set $\widehat{\cS}_{\theta} (a)$ is compact.

		Next, we show the upper bound for $Y$. Let $\zeta$ be the eigenvector of $A+BK$ corresponding to the largest eigenvalue. By the definition of $P$ we can get
		\begin{equation*}
			\zeta^\top P \zeta \geq \zeta^\top Q \zeta + \rho(A+BK)^2 \zeta^\top P \zeta,
		\end{equation*}
		which implies that $\rho(A+BK) \leq \sqrt{1 - n\us(Q)/a}$. Moreover, let $\zeta'$ be the right singular vector of $A+BK$ corresponding to the largest singular value, then we have
		\[ 
			\begin{aligned}
			\frac{a}{n} &\geq \zeta'^\top P \zeta' \geq \zeta'^\top Q \zeta' + \Vert A + BK \Vert^2 \us(P) \\
			&\geq (1 + \Vert A + BK \Vert^2) \us(Q),
			\end{aligned}
		\]
		which implies that $\Vert A + BK \Vert \leq \sqrt{a/(n\us(Q)) - 1}$.
		Then, let the schur decomposition of $A + BK$ be $A + BK = U (D + N) U^\top$, where $U$ is an orthogonal matrix, $D$ is a diagonal matrix containing the eigenvalues of $A + BK$, and $N$ is a strictly upper triangular matrix. Then, we have
		\begin{equation} \label{eq:Y_bound}
			\begin{aligned}
			\Vert Y \Vert &~= \Vert (I_n - D - N)^{-1} \Vert \\
			&~\leq \Vert (I_n - D)^{-1}\Vert \sum_{k=0}^{n-1} \Vert N \Vert^k \Vert (I_n - D)^{-1}\Vert^k  \\
			&~\leq \sum_{k=0}^{n-1} \frac{\Vert A+BK \Vert^k}{(1 - \rho(A+BK))^{k+1}}\\
			&~\leq \sum_{k=0}^{n-1} \frac{(a/(n\us(Q)) - 1)^{k/2}}{(1 - \sqrt{1 - n\us(Q)/a})^{k+1}}.
			\end{aligned}
		\end{equation}

		Finally, the upper bound of $\Sigma_{\{1,2,3\}}$ needs to be provided. Given that
		\been
		\begin{aligned}
			&~\Delta_1^\top E + E^\top \Delta_1  \\
			\preceq &~  \Delta_1^\top(R + B^\top P B) \Delta_1 + K^\top R K + (A+BK)^\top P (A+BK)\\
			=  &~ \Delta_1^\top(R + B^\top P B) \Delta_1 + P - Q \\
			\preceq &~ \left(\frac{\Vert R \Vert}{\us(Q)} +\frac{a \Vert B \Vert^2}{n\us(Q)} + \frac{a}{n\us(Q)} -1\right) Q
		\end{aligned}
		\enen
		It follows that:
		\been
		\begin{aligned}
			\Vert \Sigma_2 + \Sigma_2^\top \Vert &~= \sum_{i=0}^\infty ((A+BK)^\top)^i \big( \Delta_1^\top E + E^\top \Delta_1 \big) (A+BK)^i\\
			&~\leq \left\Vert \left(\frac{\Vert R \Vert}{\us(Q)} +\frac{a \Vert B \Vert^2}{n\us(Q)} + \frac{a}{n\us(Q)}-1\right) P \right\Vert\\
			&~\leq \left(\frac{\Vert R \Vert}{\us(Q)} + \frac{a \Vert B \Vert^2}{n\us(Q)} + \frac{a}{n\us(Q)}-1\right) \frac{a}{n}
		\end{aligned}
		\enen
		Furthermore, the relationships between $\Sigma_2$ and both $\Sigma_1$ and $\Sigma_3$ can be established as follows:
		\been
		\begin{aligned}
		&\tr\left((\Sigma_1 + \Sigma_1^\top)E^\top \Delta_1\right) \\
		= & \tr\left((A+BK)(\Sigma \Delta_1^\top + Z Z^\top \Delta_1^\top + Z \Delta_2^\top) B^\top (\Sigma_2 + \Sigma_2^\top)\right)\\[4pt]
		&\tr\left((\Sigma_3 + \Sigma_3^\top)E^\top \Delta_1\right) \\
		= & \tr\left( B L (Z^\top \Delta_1^\top + \Delta_2^\top)B^\top Y^\top (\Sigma_2 + \Sigma_2^\top)\right)
		\end{aligned}
		\enen
		By substituting the previously derived upper bounds into these expressions, we obtain the upper bounds for the components involving $\Sigma_1$ and $\Sigma_3$ in \eqref{eq:direction_derivative}.

		In summary, the upper bound of \eqref{eq:direction_derivative} can be calculated, thereby establishing the Lipschitz smoothness of $C(\theta)$. Due to the extreme complexity of the final expression, its explicit form is not presented here.

		\subsection{Proof of Lemma~\ref{lem:convergence_model_based_po}}
		First, we translate \eqref{eq:model_based_policy_update} into the normal gradient descent form. Let $\psi = M^{-\frac{1}{2}}\theta$. Define $\tilde{C}(\psi) \equiv C(\theta)$ and $\cS_\psi(a) = \{\psi|M^\frac{1}{2}\psi \in \cS_\theta(a)\}$, then 
		\begin{equation} \label{eq:pg_psi}
		\psi^{t+1} = \psi^t - \eta M^{\frac{1}{2}}\nabla_\theta J(\theta^t) = \psi^t - \eta \nabla_\psi \tilde{J}(\psi^t).
		\end{equation}
		By Lemma \ref{lem:gradient_domination_smoothness}, for any $a \geq C^*$, we can show the gradient domination and Lipschitz smoothness property of $\tilde{C}(\psi)$ on $\psi$. We have
		\begin{equation} \label{eq:gd_psi}
		\begin{aligned}
		&\Vert \nabla_\psi \tilde{C}(\psi)\Vert^2 = \Vert M^{\frac{1}{2}} \nabla_\theta C(\theta) \Vert^2 \geq \underline{\sigma}(M)\Vert\nabla_\theta C(\theta) \Vert^2\\
		& ~~~~~~\geq \frac{\underline{\sigma}(M)}{\widehat{\mu}(a)}(C(\theta)-C^*) \geq \frac{\underline{\sigma}(M)}{\widehat{\mu}(a)}(\tilde{C}(\psi)-C^*),
		\end{aligned}
		\end{equation}
		and
		\begin{equation} \label{eq:ls_psi}
		\begin{aligned}
			&\Vert \nabla_\psi \tilde{C}(\psi) - \nabla_\psi \tilde{C}(\psi')\Vert = \Vert M^{\frac{1}{2}} \nabla_\theta C(\theta) - M^{\frac{1}{2}} \nabla_\theta C(\theta')\Vert\\ 
			& ~~~~~~ \leq L(a)\Vert M \Vert^\frac{1}{2} \Vert \theta - \theta'\Vert \leq  L(a)\Vert M \Vert \Vert \psi - \psi'\Vert,
		\end{aligned}
		\end{equation}
		where $\theta$ and $\theta'$ satisfy the requirements in Lemma \ref{lem:gradient_domination_smoothness}. That is, $\tilde{C}(\psi)$ is gradient dominated with constant $\us(M) / \widehat{\mu}(a)$ and smooth with constant $L(a)\Vert M \Vert$ on the set $\cS_\psi(a)$.

		Then, we show that with stepsize $\eta \in \mathopen( 0,1/(L(a) \Vert M \Vert) \mathclose]$, if $\psi^t \in \cS_\psi(a)$ for some $a>J^*$, then $\psi^{t+1}$ is in $\cS_\psi(a)$ by~\eqref{eq:pg_psi}. For simplicity, we use $\psi_{(\eta)}$ to denote $\psi^t - \eta\nabla_\psi \tilde{J}(\psi^t)$.

		By Lemma \ref{lem:gradient_domination_smoothness}, $\cL(a)$ is smooth on $a$. Therefore, given $\phi \in (0,1)$, there exists $b > 0$ such that $L(a+b) \leq (1+\phi) L(a)$. Let $\cS_\psi^c(a+b)$ be the complementary set of $\cS_\psi(a+b)$. Clearly, the distance $d = \text{inf}\left\{\Vert\psi - \psi'\Vert\big|\forall \psi \in \cS_\psi(a) \text{ and } \psi' \in \cS_\psi^c(a+b)\right\}$ is larger than $0$. Choose a large enough $\overline{N}$ such that $2/\left(\overline{N}(1+\phi)L(a)\Vert M \Vert\right) < d/\Vert \nabla_\psi \tilde{J}(\psi^t) \Vert$, where $\Vert \nabla_\psi \tilde{J}(\psi^t) \Vert > 0$ is ensured by \eqref{eq:gd_psi} and $a>J^*$, and choose $\tau \in \left(0,2/(\overline{N}(1+\phi)L(a)\Vert M \Vert)\right]$. Then, it holds that $\Vert \psi_{(\tau)} - \psi^t \Vert < d$, which implies $\psi_{(\tau)} \in \cS_\psi(a+b)$. By the smoothness \eqref{eq:ls_psi} on $\cS_\psi(a+b)$, we have $\tilde{J}(\psi_{(\tau)}) - \tilde{J}(\psi^t) \leq -\tau(1-\tau L(a+b)\Vert M \Vert/2)\Vert \nabla_\psi \tilde{J}(\psi^t) \Vert^2\leq0$, which implies that $\psi_{(\tau)} \in \cS_\psi(a)$. Similarly, it holds that $\Vert \psi_{(2\tau)} - \psi_{(\tau)} \Vert < d$, which implies $\psi_{(2\tau)} \in \cS_\psi(a+b)$. By induction we can show that $\psi_{(N\tau)} \in \cS_\psi(a)$ for some $N\in \bN_+$ as long as $\tau N L(a+b)\Vert M \Vert/2 < 1$. Since $L(a+b) \leq (1+\phi)L(a)<2L(a)$, for $\eta$ in $(0,1/(L(a) \Vert M \Vert)]$, we can choose $\tau$, $N$ such that $\eta = N\tau$ to show $\psi^{t+1} = \psi_{(\eta)} \in \cS_\psi(a)$.

		Next, we show that \eqref{eq:pg_psi} converges linearly. With the stepsize $\eta \in (0,1/(L^0 \Vert M \Vert)]$, the update \eqref{eq:pg_psi} satisfies $\psi^t \in \cS_\psi(J(\theta^0)), \forall t\in\bN$. And the cost satisfies 
		\begin{equation*}
		\tilde{C}(\psi^{t+1}) \leq \tilde{C}(\psi^t) - \eta\left(1-\frac{L^0\Vert M \Vert \eta}{2}\right)\Vert\nabla_\psi \tilde{C}(\psi^t)\Vert^2.
		\end{equation*}
		Using the gradient domination property \eqref{eq:gd_psi}, and reorganize the terms we obtain 
		\begin{equation}
		\tilde{C}(\psi^{t+1}) - J^* \leq \left(1 -\nu\right)(\tilde{C}(\psi^t)-J^*).
		\end{equation}
		Since $\tilde{C}(\psi) \equiv C(\theta)$, We finally obtain
		\begin{equation*}
		C(\theta^{t+1}) - C^* \leq \left(1 -\nu\right)(C(\theta^t)-C^*).
		\end{equation*}

	\section{Proof of Theorem \ref{thm:convergence_online_deepo}}

	The proof of Theorem \ref{thm:convergence_online_deepo} is as follows. First, we establish the equivalence between the data-driven projected gradient descent and the model-based gradient descent. Let $\hC_{t}(\theta)$ denote the estimated value of the cost function $C(\theta)$ under the estimated model $\hA_t, \hB_t$ given the data at time $t$. We then obtain the following lemma.

	\begin{lemma} \label{lem:equivalence_deepo_model_based_2}
		The policy iteration in Algorithm \ref{alg:adaptive_deepo_dynamic} is equivalent to the following model-based policy gradient update:
		\begin{equation} \label{eq:model_based_policy_update_2}
			\theta_{t+1} = \theta_t - \eta M_{t+1} \nabla_\theta \hC_{t+1}(\theta_t),
		\end{equation}
		and $M_t$ satisfies $\us(M_t)\geq \gamma_t^4$.
	\end{lemma}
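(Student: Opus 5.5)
The proof has two parts. The first is the update equivalence, which reduces to Lemma~\ref{lem:equivalence_deepo_model_based} applied at a single time step. The second is the lower bound $\us(M_t)\ge\gamma_t^4$, which comes from a Schur-complement argument on $\Lambda_t$.

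\emph{Equivalence.} At step $t$, Line~6 of Algorithm~\ref{alg:adaptive_deepo_dynamic} re-parameterizes the current gains through the new covariance $\Lambda_{t+1}$. By construction,
\[
\Lambda_{t+1}V_{t+1}=\begin{bmatrix}K_t\\ I_n\end{bmatrix},\qquad
\Lambda_{t+1}H_{t+1}=\begin{bmatrix}L_t\\ \boldsymbol 0_n\end{bmatrix},
\]
where $L_t=K_{v,t}(I_n-\oX_{1,t}V_t')^{-\top}Q$. Hence $\xi_{t+1}$ is exactly the covariance parameterization \eqref{eq:covariance_parameterization} of $\theta_t=[K_t,L_t]$ with respect to the data at time $t+1$.

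Line~7 is then one offline DeePO step \eqref{eq:policy_update} performed on the data $(X_{0,t+1},U_{0,t+1},X_{1,t+1})$. Lemma~\ref{lem:equivalence_deepo_model_based}, applied with the estimates $(\hA_{t+1},\hB_{t+1})$, gives
\[
\oU_{0,t+1}V'_{t+1}=K_t-\eta\,[M_{t+1}\nabla_\theta\hC_{t+1}(\theta_t)]_K,
\]
and the analogous identity for $\oU_{0,t+1}H'_{t+1}$ in the $L$ block.

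Line~8 inverts the $L\leftrightarrow K_v$ map. The key point is that $\oX_{1,t+1}V'_{t+1}=\hA_{t+1}+\hB_{t+1}K_{t+1}$, by \eqref{eq:closed_loop_equivalence}. So $\theta_{t+1}$, read in the $(K,L)$ coordinates, is exactly \eqref{eq:model_based_policy_update_2}. One bookkeeping check is needed here: the $L$ used inside Line~6 must agree with the $L$ recovered at the previous Line~8. This holds because both use the same factor $(I_n-\oX_{1,t}V'_t)$.

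\emph{Lower bound.} From the proof of Lemma~\ref{lem:equivalence_deepo_model_based}, $M_t=N_tN_t^\top$ with $N_t=\oU_{0,t}\Pi_t$ and $\Pi_t=I-\oX_{0,t}^\dagger\oX_{0,t}$. Write $\Lambda_t=\begin{bmatrix}\Lambda_{uu}&\Lambda_{ux}\\ \Lambda_{xu}&\Lambda_{xx}\end{bmatrix}$, so that $\oU_{0,t}=[\Lambda_{uu}\ \Lambda_{ux}]$ and $\oX_{0,t}=[\Lambda_{xu}\ \Lambda_{xx}]$.

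The aim is the bound $\|N_t^\top y\|\ge\gamma_t^2\|y\|$ for every $y$. The row space of $\oX_{0,t}$ is $\mathrm{range}(\Lambda_t[0;I_n])$, and $\Pi_t$ projects onto its orthogonal complement. Using the symmetry of $\Lambda_t$,
\[
N_t^\top y=\Pi_t\Lambda_t\begin{bmatrix}y\\0\end{bmatrix},
\]
which is the residual of $\Lambda_t[y;0]$ after projecting it onto $\mathrm{range}(\Lambda_t[0;I])$. Therefore
\[
\|N_t^\top y\|=\min_{z}\Big\|\Lambda_t\begin{bmatrix}y\\-z\end{bmatrix}\Big\|
\ge\us(\Lambda_t)\,\Big\|\begin{bmatrix}y\\-z\end{bmatrix}\Big\|
\ge\us(\Lambda_t)\|y\|.
\]
Combining this with Assumption~\ref{asm:input}, $\us(\Lambda_t)\ge\gamma_t^2$, gives $\us(M_t)=\us(N_t)^2\ge\gamma_t^4$.

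The main obstacle is the notational consistency of the time indices in Lines~6--8, namely confirming that the $K_v\leftrightarrow L$ conversions cancel so the model-based update is exact. The singular-value bound itself is a short projection argument.
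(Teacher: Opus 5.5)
Your proposal is correct. The first half follows the paper's route, reducing to Lemma~\ref{lem:equivalence_deepo_model_based} at time $t+1$. You also supply the index bookkeeping the paper omits. The $L_t=K_{v,t}(I_n-\oX_{1,t}V_t')^{-\top}Q$ formed in Line~6 equals the $\oU_{0,t}H_t'$ produced by Line~8 of the previous step. The projected step keeps $\oX_{0,t+1}V'_{t+1}=I_n$, so $\oX_{1,t+1}V'_{t+1}=\hA_{t+1}+\hB_{t+1}K_{t+1}$.

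For $\us(M_t)\ge\gamma_t^4$ your argument is genuinely different. The paper notes $\Lambda_t\Pi\Lambda_t^\top=\mathrm{diag}(M_t,\boldsymbol 0)$ and applies the multiplicative bound $\sigma_m(\Lambda_t\Pi\Lambda_t^\top)\ge\us(\Lambda_t)^2\sigma_m(\Pi)$ with $\sigma_m(\Pi)=1$. This is short, but it relies on a singular-value inequality at an intermediate index. It also relies on $\Pi$ being a rank-$m$ projector, i.e., on $\oX_{0,t}$ having full row rank. The paper's text also pairs $\Lambda_t$ with $\Pi_{\oX_{0,t+1}}$, which is an index slip. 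Your least-squares residual identity $\|N_t^\top y\|=\min_z\|\Lambda_t[y;-z]\|$ needs nothing beyond $\|\Lambda_t w\|\ge\us(\Lambda_t)\|w\|$, so it is more elementary and self-contained.

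One cosmetic point: you announce a Schur-complement argument and introduce the blocks $\Lambda_{uu},\Lambda_{ux},\dots$, but never use them. That route also works: $M_t$ is the Schur complement of the $(2,2)$ block of $\Lambda_t^2$, so $M_t^{-1}$ is a principal block of $\Lambda_t^{-2}$ and $\|M_t^{-1}\|\le\us(\Lambda_t)^{-2}$. You should either carry it out or drop the announcement.
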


	\begin{pf}
		The first half of the theorem is identical to the proof of Lemma \ref{lem:equivalence_deepo_model_based} and is omitted here. Regarding the minimum singular value of $M_t$, note that
		$$
			\Lambda_t \Pi_{\oX_{0,t+1}} \Lambda_t^\top = \begin{bmatrix}
				M_t & \boldsymbol{0} \\
				\boldsymbol{0} & \boldsymbol{0}
			\end{bmatrix}.
		$$
		Hence, we have $\us(M_t) = \sigma_m\left(\Lambda_t \Pi_{\oX_{0,t+1}} \Lambda_t^\top\right)$. According to the properties of projection matrices, it is known that $\sigma_m\left(\Pi_{\oX_{0,t+1}}\right) = 1$, which yields
		$$
			\us(M_t) \geq \us(\Lambda_t)\sigma_m\left(\Pi_{\oX_{0,t+1}}\right)\us(\Lambda_t) \geq \gamma_t^4.
		$$
	\end{pf}

	Based on Lemma \ref{lem:equivalence_deepo_model_based_2}, it suffices to consider the model-based policy update. Similar to the properties of $\hC(\theta)$ established in Lemma \ref{lem:gradient_domination_smoothness}, the cost function $C(\theta)$ under the true system is also gradient dominated and Lipschitz continuous. Let $C^*$ denote the optimal cost of \eqref{eq:compact_lqt}. The following lemma holds.

	\begin{lemma}\label{lem:gradient_domination_smoothness_2}
		For any $a > C^*$ and any $\theta \in \cS_{\theta} (a)$, the following properties hold:
		\begin{itemize}
			\item (Gradient Domination) There exists a constant $\mu(a) > 0$ such that
			\begin{equation*}
				C(\theta) - C^* \leq \mu(a) \Vert \nabla_\theta C(\theta) \Vert_F^2,
			\end{equation*}
			where $\widehat{\theta}^*$ is the optimal policy for \eqref{eq:compact_lqt}.
			\item (Local Lipschitz Smoothness) For any $\theta', \theta \in \cS_{\theta} (a)$ satisfying $$\theta + b(\theta' - \theta) \in \cS_{\theta} (a), \forall b \in [0,1],$$ there exists a constant $L(a) > 0$, polynomial in $a$, such that
			\begin{equation*}
				C(\theta') \leq C(\theta) + \langle \nabla_\theta C(\theta), \theta' - \theta \rangle + \frac{L(a)}{2} \Vert \theta' - \theta \Vert_F^2.
			\end{equation*}
		\end{itemize}
	\end{lemma}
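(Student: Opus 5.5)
The plan is to observe that Lemma~\ref{lem:gradient_domination_smoothness_2} is exactly Lemma~\ref{lem:gradient_domination_smoothness} with the estimated model $(\hA,\hB)$ replaced by the true model $(A,B)$. The proof of Lemma~\ref{lem:gradient_domination_smoothness} never uses any property specific to the least-squares estimate. It only uses three facts: $Q\succ 0$, $R\succ 0$, and that the optimal policy of \eqref{eq:compact_lqt} exists and is unique, which Lemma~\ref{lem:equivalence} gives for any stabilizable pair. So I would rerun that argument verbatim, with $(A,B)$ in place of $(\hA,\hB)$, and check that each constant depends only on $(A,B,Q,R,n,a)$.

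First, coercivity and compactness. Via \eqref{eq:cost_alternative_form}, write $C(\theta)=\mt((Z-I_n)^\top Q(Z-I_n)+(KZ+L)^\top R(KZ+L)+nP_K)$, with $Z=Y_KBL$. From this I obtain, on $\cS_\theta(a)$, the bounds $\Vert P_K\Vert\le a/n$, $\Vert Z\Vert\le\sqrt{a/\us(Q)}$, $\Vert K\Vert_F\le\sqrt{a/(n\us(R))}$, and the corresponding bound on $\Vert L\Vert$. Next I bound $\Vert\Sigma_K\Vert\le a/(n\us(Q))$ and $\rho(A+BK)\le\sqrt{1-n\us(Q)/a}$. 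I then bound $\Vert Y_K\Vert$ through the Schur-decomposition estimate \eqref{eq:Y_bound}. Since $\mt(P_K)\to\infty$ as $\rho(A+BK)\to1$ and $\Vert Z\Vert\to\infty$ as $\Vert L\Vert\to\infty$, the sublevel set $\cS_\theta(a)$ is compact.

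Second, gradient domination. The gradient formula \eqref{eq:gradient_KL}, $\nabla_\theta C=2[E_K~F_\theta]\Phi(\theta)$, holds for the true model by the same computation as in Lemma~\ref{lem:policy_gradient}. Each set-point subproblem with reference $e^i$ satisfies the single-reference gradient-domination inequality of \cite{zhao2023global}. Summing over $i$ and lower-bounding $\us(\Phi(\theta))$ uniformly on the compact set $\cS_\theta(a)$, by continuity and positive definiteness of $\Phi$, yields a constant $\mu(a)$.

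Third, local smoothness. I bound the Hessian quadratic form \eqref{eq:direction_derivative}, $2\mt(\Delta_1\gamma_K+\Delta_2\gamma_L)$, by inserting the sublevel-set bounds above together with the bounds on $\Sigma_{1,2,3}$ obtained through $\Delta_1^\top E+E^\top\Delta_1\preceq cQ$. Every factor is polynomial in $a$ (and in $\Vert A\Vert,\Vert B\Vert$), so this gives $L(a)$. The integral form of Taylor's theorem along the segment $\theta+b(\theta'-\theta)\subset\cS_\theta(a)$ then gives the stated quadratic upper bound.

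The main obstacle is the claim that $L(a)$ is polynomial in $a$. The $\Vert Y_K\Vert$ bound in \eqref{eq:Y_bound} contains $(1-\sqrt{1-n\us(Q)/a})^{-(k+1)}$. Since $1-\sqrt{1-x}\ge x/2$, this is at most $(2a/(n\us(Q)))^{k+1}$, which is polynomial in $a$, so the polynomial claim survives. I would make this step explicit; the remaining work is bookkeeping.
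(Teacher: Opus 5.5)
Your proposal is correct and matches the paper, which proves this lemma only by noting that it follows exactly as Lemma~\ref{lem:gradient_domination_smoothness}, whose appendix proof is already written for a generic pair $(A,B)$ and uses only $Q\succ 0$, $R\succ 0$ and uniqueness of the optimum. Your explicit use of $1-\sqrt{1-x}\ge x/2$ to keep the bound \eqref{eq:Y_bound} polynomial in $a$ is a detail the paper leaves implicit; likewise, for compactness, rely on $\Vert L\Vert\le\Vert KZ+L\Vert+\Vert K\Vert\Vert Z\Vert$ rather than on $\Vert Z\Vert\to\infty$, which can fail when $B$ lacks full column rank.
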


	Next, we consider a single step update at an arbitrary time. Suppose the policy prior to the update is $\theta$, and take an arbitrary positive definite scaling matrix $M$, assuming the constants in Assumptions~\ref{asm:noise} and~\ref{asm:input} are $\delta$ and $\gamma$, respectively. Then, the update based on the estimated model $(\hA,\hB)$ is
	\begin{equation} \label{eq:update_estimate_system}
		\theta' = \theta - \eta M \nabla_\theta \hC(\theta).
	\end{equation}
	Meanwhile, based on the true system $(A,B)$, another policy update formula is obtained:
	\begin{equation} \label{eq:update_true_system}
		\theta'' = \theta - \eta M \nabla_\theta C(\theta).
	\end{equation}
	The following lemma establishes the convergence result for the policy iteration in \eqref{eq:update_true_system}.

	\begin{lemma} \label{lem:convergence_true_gradient}
		For a policy $\theta \in \cS$ and a step size $\eta \in \mathopen(0, 1/(L(C(\theta)) \Vert M \Vert) \mathclose]$, the policy update \eqref{eq:update_true_system} satisfies
		\[	C(\theta'') - C^* \leq \left(1 - \frac{\eta\us(M)}{2\mu(C(\theta))}\right)(C(\theta) - C^*).
		\]
	\end{lemma}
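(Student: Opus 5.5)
This is essentially Lemma~\ref{lem:convergence_model_based_po} applied to the true cost $C$ for a single step, with a specific step-size choice. I will reuse the rescaled variable $\psi = M^{-1/2}\theta$ and $\tilde C(\psi)\equiv C(\theta)$ from that proof. With this change of variables, \eqref{eq:update_true_system} becomes the plain gradient step $\psi''=\psi-\eta\nabla_\psi\tilde C(\psi)$. Let $a=C(\theta)$. By Lemma~\ref{lem:gradient_domination_smoothness_2}, applied exactly as in \eqref{eq:gd_psi} and \eqref{eq:ls_psi} but with $(A,B)$ in place of $(\hA,\hB)$, $\tilde C$ has two properties on $\cS_\psi(a)$. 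It is gradient dominated with constant $\us(M)/\mu(a)$, meaning $\Vert\nabla_\psi\tilde C\Vert^2\ge \frac{\us(M)}{\mu(a)}(\tilde C-C^*)$. It is also locally smooth with constant $L(a)\Vert M\Vert$.

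The first step is to show that the whole segment $\psi+b(\psi''-\psi)$, $b\in[0,1]$, stays inside $\cS_\psi(a)$, so that the local smoothness inequality can be used along it. I will copy the subdivision and induction argument from the proof of Lemma~\ref{lem:convergence_model_based_po}. Fix $\phi\in(0,1)$ and choose $b'$ with $L(a+b')\le(1+\phi)L(a)$. Let $d>0$ be the distance from $\cS_\psi(a)$ to the complement of $\cS_\psi(a+b')$. Split $\eta=N\tau$ into steps $\tau$ small enough that each one moves less than $d$. After each small step, smoothness on $\cS_\psi(a+b')$ gives a cost decrease, which puts the iterate back in $\cS_\psi(a)$. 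The condition $\eta\le 1/(L(a)\Vert M\Vert)$ ensures $\tau N L(a+b')\Vert M\Vert/2<1$ at every stage. I expect this to be the only delicate step, because smoothness holds only locally. The compactness of sublevel sets, namely the coercivity in Lemma~\ref{lem:gradient_domination_smoothness} transferred to the true system, is what guarantees $d>0$. If $C(\theta)=C^*$ the claim is trivial, since then $\nabla_\theta C(\theta)=0$ and $\theta''=\theta$. So I will assume $a>C^*$ throughout.

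With the segment inside the sublevel set, the descent lemma gives
\[
\tilde C(\psi'')\le \tilde C(\psi)-\eta\Bigl(1-\tfrac{\eta L(a)\Vert M\Vert}{2}\Bigr)\Vert\nabla_\psi\tilde C(\psi)\Vert^2\le \tilde C(\psi)-\tfrac{\eta}{2}\Vert\nabla_\psi\tilde C(\psi)\Vert^2,
\]
where the second inequality uses $\eta L(a)\Vert M\Vert\le1$. Applying gradient domination and subtracting $C^*$ yields
\[
C(\theta'')-C^*\le\Bigl(1-\frac{\eta\us(M)}{2\mu(a)}\Bigr)(C(\theta)-C^*).
\]
This is the claim with $a=C(\theta)$. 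The factor $1/2$ in the rate comes exactly from bounding $1-\eta L\Vert M\Vert/2\ge 1/2$.
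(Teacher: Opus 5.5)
Your proposal is correct and follows the paper's proof. The paper reruns the argument of Lemma~\ref{lem:convergence_model_based_po} (the rescaling $\psi = M^{-1/2}\theta$, subdivision to keep the step inside the sublevel set, then the descent inequality plus gradient domination) to get the factor $1-\eta\bigl(1-L(C(\theta))\eta\Vert M\Vert/2\bigr)\us(M)/\mu(C(\theta))$, and then uses $\eta \le 1/(L(C(\theta))\Vert M\Vert)$ to bound $1-L(C(\theta))\eta\Vert M\Vert/2 \ge 1/2$, exactly as you do; your separate treatment of the trivial case $C(\theta)=C^*$ is a small clarification the paper omits.
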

	\begin{pf}
		Similar to the proof of Lemma \ref{lem:convergence_model_based_po}, we have
		\[
		\begin{aligned}
			&C(\theta'') - C^* \\
			\leq~& \left(1-\eta \left(1 - \frac{L(C(\theta))\eta\Vert M \Vert}{2}\right)\frac{\us(M)}{\mu(C(\theta))}\right)(C(\theta) - C^*)\\
			\leq~& \left(1 - \frac{\eta\us(M)}{2\mu(C(\theta))}\right)(C(\theta) - C^*).
		\end{aligned}
		\]
	\end{pf}

	Algorithm \ref{alg:adaptive_deepo_dynamic} does not utilize the true system $(A,B)$ to compute the policy gradient $\nabla_\theta C(\theta)$ for policy updates; instead, it follows \eqref{eq:update_estimate_system}. Therefore, it is necessary to prove the convergence of $\theta'$. We will demonstrate that when the SNR is sufficiently small, the difference between $\theta'$ and $\theta''$ is bounded, which will be achieved through the following series of lemmas. First, Lemma \ref{lem:estimate_error} indicates that the gap between the estimated system and the true system is directly bounded by the inverse SNR.

	\begin{lemma} \label{lem:estimate_error}
		The system estimate $(\hA,\hB)$ satisfies
		\[\Vert [\hB, \hA] - [B,A] \Vert \leq \frac{\delta}{\gamma}.
		\]
	\end{lemma}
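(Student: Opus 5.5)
The approach is to write the least-squares estimate explicitly in terms of the noise and then bound that expression using the two assumptions. By \eqref{eq:ls_estimator} and the PE condition, $D_{0}$ has full row rank, so $D_0^\dagger = D_0^\top (D_0 D_0^\top)^{-1}$. Hence
\[
[\hB,\hA] = X_1 D_0^\top (D_0D_0^\top)^{-1}.
\]
Note the ordering: $D_0$ stacks $U_0$ above $X_0$, so the block ordering $[\hB,\hA]$ matches \eqref{eq:data_dynamics_equivalence}. Substituting the data equation \eqref{eq:data_equation}, written as $X_1 = [B,A]D_0 + W_0$, gives the standard identity
\[
[\hB,\hA] - [B,A] = W_0 D_0^\top (D_0D_0^\top)^{-1} = \oW_0 \Lambda^{-1},
\]
where $\oW_0 = \frac1t W_0 D_0^\top$ and $\Lambda = \frac1t D_0D_0^\top$.

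The bound then reduces to estimating $\Vert \frac1t W_0 D_0^\top \Lambda^{-1}\Vert$. Write $\frac1t W_0 D_0^\top \Lambda^{-1} = \frac{1}{\sqrt t}W_0 \cdot \frac{1}{\sqrt t}D_0^\top\Lambda^{-1}$ and use submultiplicativity.

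For the first factor, Assumption~\ref{asm:noise} gives $\Vert \frac{1}{\sqrt t}W_0\Vert = \sqrt{\Vert \frac1t W_0W_0^\top\Vert}\le \delta$.

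For the second factor, compute its Gram matrix:
\[
\Lambda^{-1}\tfrac1t D_0D_0^\top \Lambda^{-1} = \Lambda^{-1},
\]
so $\Vert \frac{1}{\sqrt t}D_0^\top\Lambda^{-1}\Vert = \Vert\Lambda^{-1}\Vert^{1/2} = \us(\Lambda)^{-1/2}\le 1/\gamma$ by Assumption~\ref{asm:input}. Combining the two factors yields $\delta/\gamma$.

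The main step to handle carefully is this second factor. A cruder bound, $\Vert\oW_0\Vert\,\Vert\Lambda^{-1}\Vert$, would introduce $\Vert D_0\Vert$ and give $\delta\Vert\Lambda\Vert^{1/2}/\gamma^2$ rather than $\delta/\gamma$. The sharp constant comes from splitting the $1/t$ symmetrically and using the Gram-matrix identity above. One should also check the convention: Assumption~\ref{asm:input} states $\us(\Lambda_t)\ge\gamma_t^2$, which is exactly what the square-root step needs. Everything else is routine algebra.
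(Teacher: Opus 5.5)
Your proof is correct and is the argument the paper leaves implicit when it says the bound ``follows trivially'' from the two assumptions. The error is $W_0 D_0^\dagger$, and since $\frac{1}{\sqrt t}D_0^\top\Lambda^{-1} = \sqrt t\,D_0^\dagger$, your symmetric split is exactly $\Vert W_0\Vert/\sqrt{t} \le \delta$ times $\sqrt{t}\,\Vert D_0^\dagger\Vert = \us(\Lambda)^{-1/2} \le 1/\gamma$; you were also right to read the estimate as $[\hB,\hA] = X_1 D_0^\dagger$, matching the block order of $D_0$, even though \eqref{eq:ls_estimator} labels it $[\hA_t,\hB_t]$.
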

	\begin{pf}
		This follows trivially from Assumption~\ref{asm:noise} and Assumption~\ref{asm:input}.
	\end{pf}

	Second, the following lemma establishes the upper bounds for the variations of $\Sigma$ and $Y$ under small policy perturbations.
	\begin{lemma} \label{lem:lyapunov_perturbation}
		Let $A$ be a stable matrix, i.e., $\rho(A)<1$, and define $\Sigma(A) = I + A \Sigma(A) A^\top$, $Y(A) = (I - A)^{-1}$. If $$\Vert A' - A \Vert \leq \text{min}\left\{\frac{1}{(4\Vert \Sigma(A) \Vert(1+\Vert A \Vert))}, \frac{1}{2\Vert Y(A) \Vert}\right\},$$ then $A'$ is also stable, and it holds that $\Vert \Sigma(A') - \Sigma(A) \Vert \leq 4\Vert \Sigma(A) \Vert^2(1+\Vert A \Vert)\Vert A' - A \Vert$, as well as $\Vert Y(A) - Y(A') \Vert \leq 2 \Vert Y(A) \Vert^2 \Vert A - A'\Vert$.
	\end{lemma}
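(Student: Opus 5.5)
The plan is to treat both claims through the Lyapunov and Neumann-series representations $\Sigma(A)=\sum_{i\ge 0}A^i(A^\top)^i$ and $Y(A)=(I-A)^{-1}$. Set $\Delta=A'-A$ and $\epsilon=\Vert\Delta\Vert$.

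For $Y$, I would first show that $I-A'$ is invertible. Write
\[
I-A'=(I-A)\bigl(I-Y(A)\Delta\bigr).
\]
The hypothesis $\epsilon\le 1/(2\Vert Y(A)\Vert)$ gives $\Vert Y(A)\Delta\Vert\le 1/2$. A Neumann series then yields $\Vert (I-Y(A)\Delta)^{-1}\Vert\le 2$, hence $\Vert Y(A')\Vert\le 2\Vert Y(A)\Vert$. The resolvent identity
\[
Y(A')-Y(A)=Y(A)\,\Delta\, Y(A')
\]
then gives $\Vert Y(A')-Y(A)\Vert\le 2\Vert Y(A)\Vert^2\epsilon$. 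Note that invertibility of $I-A'$ does not by itself imply stability; that comes from the $\Sigma$ part below.

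For $\Sigma$ and stability, I would use the standard argument from \cite{fazel2018global} in its Lyapunov form. Define the operator $\mathcal{T}_A(X)=\sum_{i\ge0}A^iX(A^\top)^i$. Its norm is $\Vert\mathcal{T}_A\Vert=\Vert\mathcal{T}_A(I)\Vert=\Vert\Sigma(A)\Vert$, by monotonicity on PSD matrices. Also define $\mathcal{F}_A(X)=AXA^\top$, so that $\mathcal{T}_A=(\mathcal{I}-\mathcal{F}_A)^{-1}$. Then
\[
\Vert\mathcal{F}_{A'}-\mathcal{F}_A\Vert\le \Vert\Delta\Vert(\Vert A\Vert+\Vert A'\Vert)\le \epsilon(2\Vert A\Vert+\epsilon)\le 2\epsilon(1+\Vert A\Vert),
\]
where the last step uses $\epsilon\le1$; this holds because $\Vert\Sigma(A)\Vert\ge1$ and so $\epsilon\le 1/4$. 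Hence $\Vert\mathcal{T}_A\Vert\,\Vert\mathcal{F}_{A'}-\mathcal{F}_A\Vert\le 2\Vert\Sigma(A)\Vert(1+\Vert A\Vert)\epsilon\le 1/2$.

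From this the operator $\mathcal{I}-\mathcal{F}_{A'}=(\mathcal{I}-\mathcal{F}_A)\bigl(\mathcal{I}-\mathcal{T}_A(\mathcal{F}_{A'}-\mathcal{F}_A)\bigr)$ is invertible, with $\Vert(\mathcal{I}-\mathcal{F}_{A'})^{-1}\Vert\le 2\Vert\Sigma(A)\Vert$.

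The main obstacle is deducing stability of $A'$, i.e.\ that this inverse equals the convergent series $\mathcal{T}_{A'}$. Mere invertibility does not exclude eigenvalues of modulus greater than one. I would handle it by a continuity/homotopy argument. Consider $A_s=A+s\Delta$ for $s\in[0,1]$. The above bound holds uniformly in $s$, so $\Vert(\mathcal{I}-\mathcal{F}_{A_s})^{-1}(I)\Vert$ stays bounded by $2\Vert\Sigma(A)\Vert$.

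Suppose $\rho(A_s)$ first reaches $1$ at some $s^*$. As $s\uparrow s^*$, $\Sigma(A_s)=\mathcal{T}_{A_s}(I)\succeq \sum_i A_s^i(A_s^\top)^i$ blows up: a left eigenvector $\zeta$ with $|\lambda|\to1$ gives $\zeta^*\Sigma\zeta\ge\sum_i|\lambda|^{2i}\Vert\zeta\Vert^2\to\infty$. This contradicts the uniform bound, so $\rho(A')<1$ and $\Sigma(A')=(\mathcal{I}-\mathcal{F}_{A'})^{-1}(I)$.

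Finally, the second resolvent identity gives
\[
\Sigma(A')-\Sigma(A)=\mathcal{T}_{A'}\bigl(\mathcal{F}_{A'}-\mathcal{F}_A\bigr)\bigl(\Sigma(A)\bigr).
\]
Therefore
\[
\Vert\Sigma(A')-\Sigma(A)\Vert\le 2\Vert\Sigma(A)\Vert\cdot 2\epsilon(1+\Vert A\Vert)\cdot\Vert\Sigma(A)\Vert=4\Vert\Sigma(A)\Vert^2(1+\Vert A\Vert)\epsilon,
\]
which is the claimed bound. Care is needed to use an operator norm that is consistent on symmetric matrices, namely the spectral norm induced on the space of symmetric matrices.
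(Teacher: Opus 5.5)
Your proof is correct and follows the paper's route. For $Y$ you use the same factorization $I-A'=(I-A)\bigl(I-Y(A)(A'-A)\bigr)$ with $\Vert (I-Y(A)(A'-A))^{-1}\Vert\le 2$, and your resolvent identity is a rearrangement of the paper's $Y(A)-Y(A')=\bigl(I-(I-Y(A)(A'-A))^{-1}\bigr)Y(A)$; your Neumann-series step also supplies the invertibility that the paper's self-bounding inequality tacitly assumes. For $\Sigma$ the paper only cites Lemma 15 of \cite{zhao2025data}, whereas you reproduce the standard Fazel-type operator perturbation argument, and your continuity argument along $A+s(A'-A)$ correctly settles the stability of $A'$, which invertibility of $\mathcal{I}-\mathcal{F}_{A'}$ alone would not give.
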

	\begin{pf}
		The part concerning $\Sigma$ is consistent with \cite[Lemma 15]{zhao2025data}. We focus here on the part concerning $Y$. Since
		$$ \begin{aligned}
		&\Vert (I - Y(A)(A'-A))^{-1} \Vert \\
		\leq~& \Vert I + Y(A)(A'-A)(I - Y(A)(A'-A))^{-1} \Vert \\
		\leq~& 1 + \Vert Y(A) (A'-A) \Vert \Vert (I - Y(A)(A'-A))^{-1} \Vert \\
		\leq~& 1 + \frac{1}{2}\Vert (I - Y(A)(A'-A))^{-1} \Vert,
		\end{aligned}$$
		it implies $\Vert (I - Y(A)(A'-A))^{-1} \Vert \leq 2$. Consequently, we have
		$$
		\begin{aligned}
		&\Vert Y(A) - Y(A')\Vert \\
		=~& \Vert \left(I - (I - Y(A)(A'-A))^{-1}\right)Y(A)\Vert\\
		\leq~& \Vert \left(I - (I - Y(A)(A'-A))^{-1}\right)\Vert \Vert Y(A)\Vert \\
		\leq~& \Vert Y(A) \Vert^2 \Vert A'-A \Vert \Vert (I - Y(A)(A'-A))^{-1} \Vert\\
		\leq~& 2 \Vert Y(A) \Vert^2 \Vert A - A'\Vert.
		\end{aligned}
		$$
	\end{pf}

	Third, we prove that the difference between the policy gradients $\nabla_\theta C(\theta)$ and $\nabla_\theta \hC(\theta)$ in \eqref{eq:update_estimate_system} and \eqref{eq:update_true_system} is bounded. We define the following parameter:
	\[ \begin{aligned}
	p_1 = &\frac{\sqrt{n\us(R)}}{\sqrt{n\us(R)} + \sqrt{C(\theta)}} \cdot \text{min}\left\{\frac{n \us(Q)}{4C(\theta)\left(1 + \frac{C(\theta)}{n\us(Q)}\right)}, \right. \\
	&\left. \sum_{k=0}^{n-1} \frac{\left(1 - \sqrt{1 - n\us(Q)/C(\theta)}\right)^{k+1}}{2\left(C(\theta)/(n\us(Q)) - 1\right)^{k/2}} \right\}. 
	\end{aligned}
	\]
	We then arrive at the following lemma.

	\begin{lemma} \label{lem:gradient_error}
		Let $\theta\in\cS_\theta$. There exists a polynomial $p_2 = \text{poly}(C(\theta), \Vert A \Vert, \Vert B \Vert, \Vert R \Vert,$ $ 1/\us(Q), 1/\us(R))$ such that: if $\delta/\gamma \leq p_1$, then $\Vert \nabla_\theta C(\theta) - \nabla_\theta \hC(\theta) \Vert \leq p_2 \delta/\gamma$.
	\end{lemma}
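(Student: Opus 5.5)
The plan is a perturbation argument. We write the gradient from \eqref{eq:gradient_KL} as $\nabla_\theta C(\theta)=2[E_K~~F_\theta]\Phi(\theta)$, and its estimated counterpart as $\nabla_\theta \hC(\theta)=2[\hE_K~~\hF_\theta]\hat\Phi(\theta)$, built from $(\hA,\hB)$. We then bound the difference of each building block ($A+BK$, $P_K$, $\Sigma_K$, $Y_K$, $Z_\theta$, $G_\theta$, $E_K$, $F_\theta$) linearly in $\epsilon:=\Vert[\hB,\hA]-[B,A]\Vert$, which is at most $\delta/\gamma$ by Lemma~\ref{lem:estimate_error}. Finally we combine these bounds by the triangle inequality on products: $\Vert XY-\hat X\hat Y\Vert\le \Vert X-\hat X\Vert\Vert Y\Vert+\Vert\hat X\Vert\Vert Y-\hat Y\Vert$.

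First, all true-system quantities are bounded polynomially in $C(\theta)$, using the estimates already derived in the proof of Lemma~\ref{lem:gradient_domination_smoothness} with $a=C(\theta)$. These give $\Vert P_K\Vert\le C/n$, $\Vert\Sigma_K\Vert\le C/(n\us(Q))$, $\Vert K\Vert_F\le\sqrt{C/(n\us(R))}$, the bound on $\Vert L\Vert$, $\Vert A+BK\Vert\le\sqrt{C/(n\us(Q))-1}$, and the Schur-type bound \eqref{eq:Y_bound} on $\Vert Y_K\Vert$.

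Next, the closed-loop perturbation is $\Vert(\hA+\hB K)-(A+BK)\Vert\le\epsilon(1+\Vert K\Vert)\le\epsilon(\sqrt{n\us(R)}+\sqrt{C})/\sqrt{n\us(R)}$. The condition $\delta/\gamma\le p_1$ is designed exactly so that this perturbation is at most $\min\{1/(4\Vert\Sigma_K\Vert(1+\Vert A+BK\Vert)),\,1/(2\Vert Y_K\Vert)\}$, using the above bounds. Lemma~\ref{lem:lyapunov_perturbation} then applies. It gives that $\hA+\hB K$ is stable, so $\theta\in\widehat\cS_\theta$, together with $\Vert\hat\Sigma_K-\Sigma_K\Vert\le 4\Vert\Sigma_K\Vert^2(1+\Vert A+BK\Vert)(1+\Vert K\Vert)\epsilon$ and $\Vert\hat Y_K-Y_K\Vert\le2\Vert Y_K\Vert^2(1+\Vert K\Vert)\epsilon$. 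It also gives $\Vert\hat\Sigma_K\Vert\le2\Vert\Sigma_K\Vert$ and $\Vert\hat Y_K\Vert\le2\Vert Y_K\Vert$.

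For $P_K$ we use the dual Lyapunov identity $\mt(P_K)=\mt((Q+K^\top RK)\Sigma_K)$, or equivalently the bound $\Vert P_K\Vert\le\Vert Q+K^\top RK\Vert\Vert\Sigma_K\Vert$ applied to the transposed closed loop. Applying Lemma~\ref{lem:lyapunov_perturbation} to $(A+BK)^\top$ (same spectral radius and norm, and $\Sigma((A+BK)^\top)\preceq P_K/\us(Q)$) gives $\Vert\hat P_K-P_K\Vert\le\Vert Q+K^\top RK\Vert\cdot O(\epsilon)$ with polynomial constants. Then $\hat Z_\theta-Z_\theta=(\hat Y_K-Y_K)\hB L+Y_K(\hB-B)L$, and similarly $\hat G_\theta-G_\theta$, $\hE_K-E_K$ and $\hF_\theta-F_\theta$ all expand into sums of terms, each containing exactly one difference factor times bounded factors. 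Each is therefore at most $\mathrm{poly}(C,\Vert A\Vert,\Vert B\Vert,\Vert R\Vert,1/\us(Q),1/\us(R))\,\epsilon$. The factor $\Vert\hB\Vert\le\Vert B\Vert+\epsilon$ stays polynomially bounded because $p_1\le1$.

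Finally, $\Vert\Phi-\hat\Phi\Vert$ is controlled through $\Vert\Sigma_K-\hat\Sigma_K\Vert$ and $\Vert Z_\theta Z_\theta^\top-\hat Z_\theta\hat Z_\theta^\top\Vert$. Assembling the pieces gives $\Vert\nabla_\theta C-\nabla_\theta\hC\Vert\le p_2\,\delta/\gamma$.

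The main obstacle is the bookkeeping for $\hat P_K-P_K$ and $\hat G_\theta-G_\theta$. The perturbation lemma as stated handles only $\Sigma$ with identity weight, so the $P_K$ bound has to be transferred via the transposed Lyapunov equation with weight $Q+K^\top RK$. Beyond that, every constant must remain polynomial in the listed quantities without any hidden dependence on $\rho(A+BK)$. This is guaranteed by the $Y_K$ bound \eqref{eq:Y_bound}, which is polynomial in $C$ and $1/(1-\sqrt{1-n\us(Q)/C})$; the latter is itself polynomial in $C/\us(Q)$.
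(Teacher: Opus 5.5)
Your proposal is correct and follows essentially the same route as the paper. Like the paper, you bound each building block ($A+BK$, $\Sigma_K$, $Y_K$, $P_K$, $Z_\theta$, $G_\theta$, $E_K$, $F_\theta$) linearly in $\delta/\gamma$ via Lemmas~\ref{lem:estimate_error} and~\ref{lem:lyapunov_perturbation}, then assemble the gradient difference through product splittings with one difference factor per term. The main deviation is that you derive the $\hat P_K-P_K$ bound yourself from the transposed Lyapunov equation, whereas the paper simply cites \cite[Lemma 12]{zhao2025policy}.

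One small caution concerns your claim that $\delta/\gamma\le p_1$ enforces $\epsilon(1+\Vert K\Vert)\le 1/(2\Vert Y_K\Vert)$ via \eqref{eq:Y_bound}. The second entry of $p_1$ as printed is a sum of the reciprocals of the summands in \eqref{eq:Y_bound}, not the reciprocal of their sum. So for $n\ge 2$ that check does not go through literally. The claim is still true, because the first entry of $p_1$ alone suffices: $\Vert (A+BK)^k\Vert^2\le\Vert\Sigma_K\Vert(1-1/\Vert\Sigma_K\Vert)^k$ gives $\Vert Y_K\Vert\le 2\Vert\Sigma_K\Vert^{3/2}\le 2\frac{C(\theta)}{n\us(Q)}\bigl(1+\frac{C(\theta)}{n\us(Q)}\bigr)$.
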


	\begin{pf}
		For brevity, the subscripts $\theta$ and $K$ denoting policy dependence on matrices are omitted in this proof. According to the gradient expression \eqref{eq:gradient_KL}, we have
		\[
			\begin{aligned}
			&\nabla_K C(\theta) - \nabla_K \hC(\theta) = \\
			&~~~~~2\left(E \Sigma + E Z Z^\top + F Z^\top - \hE \widehat{\Sigma} - \hE \hZ \hZ^\top - \hF \hZ^\top\right)\\
			&\nabla_L C(\theta) - \nabla_L \hC(\theta) = 2\left(
			E Z+ F - \hE \hZ- \hF\right)
			\end{aligned}
		\]
		We next prove that each of the terms $\Vert E \Sigma - \hE \widehat{\Sigma}\Vert$, $\Vert E Z Z^\top- \hE \hZ \hZ^\top\Vert$, $\Vert F Z^\top - \hF \hZ^\top\Vert$, $\Vert EZ - \hE \hZ\Vert$, and $\Vert F - \hF\Vert$ is bounded by a polynomial multiple of the inverse SNR.
		
		According to Lemma \ref{lem:estimate_error}, we have
		\begin{align}
			&\Vert (A+BK) - (\hA + \hB K) \Vert \leq (1 + \Vert K \Vert)\delta/\gamma, \label{eq:K_error}\\
			&\Vert BL - \hB L \Vert \leq \Vert L \Vert\delta/\gamma. \label{eq:L_error}
		\end{align}
		
		The term $\Vert E \Sigma - \hE \widehat{\Sigma}\Vert$ coincides with the difference in the cost function for the LQR problem, the proof of which can be found in~\cite[Lemma 12]{zhao2025policy} and is therefore omitted here. In the proof, it has also been shown that both $\Vert E - \hE \Vert$ and $\Vert P - \hP \Vert$ admit polynomial upper bounds.
		
		For $\Vert EZ - \hE \hZ\Vert$, we have \[EZ - \hE \hZ = E(Z-\hZ) + (E-\hE)\hZ.\] By \cite[Lemma 11]{fazel2018global}, it holds that 
		\[\tr(E^\top E) \leq \Vert R + B^\top P B \Vert (C(K) - C^*).\]
		Consequently, $ \Vert E \Vert \leq \Vert R + B^\top P V \Vert^{\frac{1}{2}}(C(K) - C^*)^{\frac{1}{2}}$. Additionally, we have
		\[ Z - \hZ = Y(BL - \hB L) + (Y - \hY)\hB L,\]
		where, according to Lemma \ref{lem:lyapunov_perturbation}, we obtain 
		\[
		\begin{aligned}
		\Vert Y - \hY \Vert &~\leq 2 \Vert Y \Vert^2 (1 + \Vert K \Vert)\delta/\gamma \\
		&~\leq \Vert Y \Vert^2 \left(1 + \sqrt{\frac{C(\theta)}{(n\us(R))}}\right)p_1.
		\end{aligned}
		\]
		Combining \eqref{eq:L_error}, the upper bound of $\Vert Y \Vert$ from \eqref{eq:Y_bound}, and the upper bound of $\Vert L \Vert$, a polynomial upper bound can be established for $Z - \hZ$, and thus for $\Vert EZ - \hE \hZ\Vert$.
		
		For $\Vert F - \hF \Vert$, we have
		$$	F - \hF = (B^\top G - \hB^\top \hG) + (B^\top P B - \hB^\top \hP \hB) L. $$ 
		Similar to the preceding steps, because each component in $F - \hF$ possesses a polynomial bound, proving that $\Vert F - \hF \Vert$ has a polynomial upper bound requires to prove that $\Vert B - \hB\Vert$, $\Vert G - \hG\Vert$, and $\Vert P - \hP \Vert$ have polynomial bounds. By the same logic, finding a polynomial bound for $\Vert G - \hG\Vert$ requires finding polynomial bounds for $\Vert Y - \hY\Vert$, $\Vert P - \hP \Vert$, $\Vert (A+BK) - (\hA + \hB K) \Vert$, and $\Vert BL - \hB L\Vert$. Since their bounds have already been established, they are not explicitly expanded here for brevity.
		
		Finally, based on $E Z Z^\top- \hE \hZ \hZ^\top = E Z (Z^\top - \hZ^\top) + (E Z - \hE \hZ) \hZ^\top$ and $F Z^\top - \hF \hZ^\top = F(Z - \hZ)^\top + (F - \hF) \hZ^\top$, the upper bounds for these two terms can be similarly synthesized from the bounds derived above.
	\end{pf}

	Fourth, we prove that when two policies $\theta$ and $\tilde{\theta}$ are sufficiently close, the gap between their cost functions is also bounded.
	\begin{lemma} \label{lem:cost_error}
		Let $\theta \in \cS_\theta$. There exist polynomials $p_3 = \text{poly}(C(\theta), 1/\Vert A \Vert, 1/\Vert B \Vert, 1/\Vert R \Vert,$ $ 1/\us(Q), \us(R))$ and $p_4, p_5 =  \text{poly}(C(\theta), \Vert A \Vert, \Vert B \Vert$ $, \Vert R \Vert, 1/\us(Q), 1/\us(R))$ such that: if $\Vert \tilde{\theta} - \theta \Vert \leq p_3$, then $\tilde{\theta}\in\cS$ and
		$$ \Vert \tilde{\Sigma} - \Sigma \Vert \leq p_4, \vert C(\tilde{\theta}) - C(\theta) \vert \leq p_5 \Vert \tilde{\theta} - \theta \Vert.$$
	\end{lemma}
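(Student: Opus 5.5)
We reduce the statement to the perturbation bounds already available for $\Sigma$ and $Y$ (Lemma~\ref{lem:lyapunov_perturbation}) and to the closed-form expression of the cost in \eqref{eq:compact_lqt} and \eqref{eq:cost_alternative_form}. Write $\tilde\theta = [\tilde K,\tilde L]$ and $\Delta_K = \tilde K - K$, $\Delta_L = \tilde L - L$, so that $\Vert \Delta_K\Vert,\Vert\Delta_L\Vert \le \Vert\tilde\theta-\theta\Vert$. The closed-loop perturbation is $(A+B\tilde K)-(A+BK) = B\Delta_K$, with $\Vert B\Delta_K\Vert\le \Vert B\Vert\Vert\tilde\theta-\theta\Vert$.

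\textbf{Step 1: choice of $p_3$ and stability.} Using the sublevel-set bounds from the proof of Lemma~\ref{lem:gradient_domination_smoothness} (with $a=C(\theta)$), we have $\Vert\Sigma_K\Vert\le C(\theta)/(n\us(Q))$, $\Vert A+BK\Vert\le\sqrt{C(\theta)/(n\us(Q))-1}$, and $\Vert Y_K\Vert$ is bounded by \eqref{eq:Y_bound}. Each of these is polynomial in $C(\theta)$ and $1/\us(Q)$; for $\Vert Y_K\Vert$ we can alternatively bound $\Vert Y_K\Vert\le\sum_i\Vert (A+BK)^i\Vert\le \sqrt{\Vert\Sigma_K\Vert}\cdot$const, or simply keep the expression. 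Choose
\[ p_3 = \frac{1}{\Vert B\Vert}\min\left\{\frac{1}{4\Vert\Sigma_K\Vert(1+\Vert A+BK\Vert)},\frac{1}{2\Vert Y_K\Vert},1\right\}, \]
lower-bounded by substituting these bounds, which gives the stated dependence on $1/\Vert B\Vert$ and $1/\us(Q)$. Lemma~\ref{lem:lyapunov_perturbation} then gives $\rho(A+B\tilde K)<1$, i.e.\ $\tilde\theta\in\cS$. It also gives $\Vert\tilde\Sigma-\Sigma\Vert\le 4\Vert\Sigma\Vert^2(1+\Vert A+BK\Vert)\Vert B\Vert p_3\le\Vert\Sigma\Vert=:p_4$, together with $\Vert\tilde Y-Y\Vert\le 2\Vert Y\Vert^2\Vert B\Vert\Vert\Delta_K\Vert$ and $\Vert\tilde Y\Vert\le 2\Vert Y\Vert$.

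\textbf{Step 2: Lipschitz bound on the cost.} We use the form \eqref{eq:cost_alternative_form}:
\[ C(\theta)=\mt\bigl((Z-I)^\top Q(Z-I)+(KZ+L)^\top R(KZ+L)\bigr)+n\,\mt(P_K), \quad Z=Y_KBL. \]
For the $P$ part we use the dual identity $\mt(P_K)=\mt((Q+K^\top RK)\Sigma_K)$, which gives
\[ \mt(\tilde P-P)=\mt\bigl((\tilde K^\top R\tilde K-K^\top RK)\tilde\Sigma\bigr)+\mt\bigl((Q+K^\top RK)(\tilde\Sigma-\Sigma)\bigr). \]
Both terms are bounded by a polynomial times $\Vert\Delta_K\Vert$, via $\Vert K\Vert\le\sqrt{C/(n\us(R))}$, $\Vert\tilde\Sigma\Vert\le 2\Vert\Sigma\Vert$, and the Step~1 bound on $\tilde\Sigma-\Sigma$ (the latter is linear in $\Vert B\Delta_K\Vert$). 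For the $Z$ part, write
\[ \tilde Z-Z=(\tilde Y-Y)B\tilde L+YB\Delta_L. \]
This is bounded linearly in $\Vert\tilde\theta-\theta\Vert$, using the bound on $\Vert L\Vert$ from the proof of Lemma~\ref{lem:gradient_domination_smoothness} and $\Vert\tilde L\Vert\le\Vert L\Vert+p_3$. The quadratic terms are then handled by the expansion $\tilde X^\top W\tilde X-X^\top WX=(\tilde X-X)^\top W\tilde X+X^\top W(\tilde X-X)$, applied with $X=Z-I$ and $X=KZ+L$. Collecting all the terms gives the polynomial $p_5$.

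\textbf{Main obstacle.} The main obstacle is keeping every intermediate constant polynomial in the stated quantities. The bound on $\Vert Y\Vert$ in \eqref{eq:Y_bound} involves $(1-\sqrt{1-n\us(Q)/C})^{-(k+1)}$. This is polynomial in $C/(n\us(Q))$ because $1-\sqrt{1-x}\ge x/2$, so the factor is at most $(2C/(n\us(Q)))^{k+1}$. We invoke this elementary inequality explicitly so that both $p_3$ and $p_5$ are of the claimed polynomial type.
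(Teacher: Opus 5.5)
Your argument is correct, and it shares the paper's skeleton. Lemma~\ref{lem:lyapunov_perturbation}, applied to the closed-loop perturbation $B\Delta_K$, gives stability of $A+B\tilde K$ and the perturbation bounds on $\Sigma$ and $Y$. The cost difference is then telescoped term by term using the sublevel-set bounds from the proof of Lemma~\ref{lem:gradient_domination_smoothness}.

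Where you differ is in which closed form of $C$ you expand. The paper expands the compact form \eqref{eq:compact_lqt}, in which the matrix $P_K$ appears inside $L^\top B^\top P_K B L$ and inside $G_\theta$. It therefore needs matrix-level bounds on $\Vert P_{\tilde K}-P_K\Vert$ and $\Vert G_{\tilde\theta}-G_\theta\Vert$; the former requires an extra Lyapunov-sum argument that the paper only indicates (``using $p_4$''). You instead expand the stationary-state form \eqref{eq:cost_alternative_form}. There $P_K$ enters only through $\mathrm{tr}(P_K)=\mathrm{tr}((Q+K^\top RK)\Sigma_K)$, and the remaining terms go through $Z=Y_KBL$. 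As a result, the two bounds of Lemma~\ref{lem:lyapunov_perturbation} are all you need, which makes your proof more self-contained and explicit. Your use of $1-\sqrt{1-x}\ge x/2$ also makes the polynomiality of \eqref{eq:Y_bound} genuine, a point the paper leaves implicit.

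Two small fixes:
\begin{enumerate}
\item The bounds $\Vert\tilde K\Vert\le\Vert K\Vert+p_3$ and $\Vert\tilde L\Vert\le\Vert L\Vert+p_3$ enter $p_5$, but your $p_3$ is only guaranteed to be at most $1/\Vert B\Vert$. Take $p_3:=\min\{1,\cdot\}$ outside the $1/\Vert B\Vert$ factor, so that $p_5$ does not pick up a factor $1/\Vert B\Vert$.
\item In your alternative bound on $\Vert Y_K\Vert$, the ``const'' is not absolute. From $(A+BK)\Sigma_K(A+BK)^\top\preceq(1-1/\Vert\Sigma_K\Vert)\Sigma_K$ one gets $\Vert Y_K\Vert\le 2\Vert\Sigma_K\Vert^{3/2}$, which is still polynomial, so nothing breaks.
\end{enumerate}
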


	\begin{pf}
		According to Lemma \ref{lem:lyapunov_perturbation}, there exist $p_3, p_4, p$ such that $\tilde{\theta}\in\cS$ and $ \Vert \tilde{\Sigma} - \Sigma \Vert \leq p_4$, $\Vert \tilde{Y} - Y \Vert \leq p$. Furthermore,
		\begin{equation}
		\begin{aligned}
		C(\tilde{\theta}) - C(\theta) = & \operatorname{tr} \Big( 
		\tilde{L}^\top R \tilde{L} - L^\top R L \\
		& + \tilde{L}^\top B^\top P_{\tilde{K}} B \tilde{L} - L^\top B^\top P_K B L \\
		& + n(P_{\tilde{K}} - P_K) + 2(G_{\tilde{\theta}}^\top B \tilde{L} - G_\theta^\top B L) \Big).
		\end{aligned}
		\end{equation}
		Since each term in the above equation has a polynomial upper bound, employing proof techniques similar to those in Lemma \ref{lem:gradient_error}, it suffices to find polynomial upper bounds for $\Vert \tilde{L} - L \Vert$, $\Vert P_{\tilde{K}} - P_K \Vert$, and $\Vert G_{\tilde{\theta}} - G_\theta \Vert$. The bound for $\Vert \tilde{L} - L \Vert$ can be derived using $p_3$, $\Vert P_{\tilde{K}} - P_K \Vert$ using $p_4$, and $\Vert G_{\tilde{\theta}} - G_\theta \Vert$ using $p_3, p_4, p$. Due to their complex explicit forms, they are omitted here.
	\end{pf}

	Fifth, we prove that when $\delta/\gamma$ is sufficiently small, the difference between $C(\theta'')$ and $C(\theta')$ obtained from \eqref{eq:update_estimate_system} and \eqref{eq:update_true_system} is bounded.

	\begin{lemma} \label{lem:cost_error_2}
		Let $\theta \in \cS$. There exists a polynomial $p_6 = \text{poly}(C(\theta), 1/\Vert A \Vert, 1/\Vert B \Vert, 1/\Vert R \Vert,$ $ 1/\us(Q), \us(R))$ such that: if
		$$ \delta/\gamma \leq p_1 \text{ and } \eta \leq \frac{1}{\Vert M \Vert} \cdot \text{min}\left\{\frac{p_3 \gamma}{p_2\delta}, p_6\right\},$$
		then $\vert C(\theta'') - C(\theta') \vert \leq \eta p_2 p_5 \Vert M \Vert \delta/\gamma$.
	\end{lemma}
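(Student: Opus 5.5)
The plan is to compare the two updates \eqref{eq:update_estimate_system} and \eqref{eq:update_true_system}, which start from the same $\theta$ and differ only in the gradient used. Their difference is
\[
\theta'-\theta'' = -\eta M\bigl(\nabla_\theta \hC(\theta)-\nabla_\theta C(\theta)\bigr).
\]
Since $\delta/\gamma\le p_1$, Lemma~\ref{lem:gradient_error} applies and gives
\[
\Vert \theta'-\theta''\Vert \le \eta\Vert M\Vert\, p_2\,\delta/\gamma .
\]
The task then reduces to transferring this parameter gap into a cost gap with Lemma~\ref{lem:cost_error}. That lemma controls $\vert C(\tilde\theta)-C(\theta_{\rm ref})\vert$ by $p_5\Vert\tilde\theta-\theta_{\rm ref}\Vert$, provided $\tilde\theta$ lies in a $p_3$-ball around a stabilizing reference point $\theta_{\rm ref}$.

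The natural reference point is $\theta''$, with $\tilde\theta=\theta'$. For this I need $\theta''\in\cS_\theta$ and $\Vert\theta'-\theta''\Vert\le p_3(\theta'')$. I will obtain both in three steps.

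\textbf{Keeping $\theta''$ in the sublevel set.} I impose $\eta\le p_6/\Vert M\Vert$ with $p_6\le 1/L(C(\theta))$. Lemma~\ref{lem:convergence_true_gradient}, or the step-by-step sublevel-set argument used in the proof of Lemma~\ref{lem:convergence_model_based_po}, then gives $\theta''\in\cS_\theta(C(\theta))$, so $C(\theta'')\le C(\theta)$.

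\textbf{Transferring the polynomial constants.} Because $p_3$ is increasing in the reciprocal quantities and decreasing in the cost, the constants evaluated at $\theta''$ are no worse than those evaluated at cost level $C(\theta)$. This is the reason $p_3$, $p_5$ are expressed as polynomials in $C(\theta)$: I evaluate them at $a=C(\theta)$ and use them uniformly over $\cS_\theta(C(\theta))$.

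\textbf{Closeness.} The condition $\eta\le p_3\gamma/(\Vert M\Vert p_2\delta)$ gives directly
\[
\Vert\theta'-\theta''\Vert\le \eta\Vert M\Vert p_2\delta/\gamma\le p_3 .
\]

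With these in place, Lemma~\ref{lem:cost_error} yields $\theta'\in\cS$ and
\[
\vert C(\theta')-C(\theta'')\vert\le p_5\Vert\theta'-\theta''\Vert\le \eta p_2p_5\Vert M\Vert\delta/\gamma ,
\]
which is the claim.

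The main obstacle is the uniformity issue in the second step. Lemma~\ref{lem:cost_error} is stated around $\theta$, but I apply it around $\theta''$, so I must justify that $p_3(\theta'')\ge p_3(\theta)$ and $p_5(\theta'')\le p_5(\theta)$ using $C(\theta'')\le C(\theta)$. If the monotonicity is not clean, the fallback is to define $p_3$, $p_5$ as worst-case constants over the compact set $\cS_\theta(C(\theta))$, which is compact by the coercivity part of Lemma~\ref{lem:gradient_domination_smoothness_2}'s analogue of Lemma~\ref{lem:gradient_domination_smoothness}. A secondary fallback is to apply Lemma~\ref{lem:cost_error} twice, around $\theta$, using the triangle inequality through $\theta$. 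That route needs both $\Vert\theta'-\theta\Vert$ and $\Vert\theta''-\theta\Vert$ to be at most $p_3$, which is what $p_6$ would absorb via a gradient-norm bound on the sublevel set. However, it loses the clean constant, so I prefer the route through $\theta''$.
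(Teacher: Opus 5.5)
Your proposal is correct and follows the paper's proof. The paper bounds $\Vert\theta'-\theta''\Vert$ via Lemma~\ref{lem:gradient_error} and the step-size condition, sets $p_6=1/L(C(\theta))$ so that Lemma~\ref{lem:convergence_true_gradient} keeps $\theta''$ stabilizing, and then invokes Lemma~\ref{lem:cost_error} around $\theta''$. Your additional justification that $p_3,p_5$ can be taken uniformly over $\cS_\theta(C(\theta))$, using $C(\theta'')\le C(\theta)$, fills in a step the paper leaves implicit.
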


	\begin{pf}
		By Lemma \ref{lem:gradient_error}, we have
		$$ \Vert \theta' - \theta'' \Vert = \eta \Vert M\nabla C(\theta) - M\nabla \hC(\theta) \Vert \leq \frac{\eta \Vert M \Vert p_2 \delta}{\gamma} \leq p_3.$$
		Let $p_6 = 1/L(C(\theta))$. According to Lemma \ref{lem:convergence_true_gradient}, it holds $\theta'' \in \cS$. Together with Lemma \ref{lem:cost_error}, the proof is completed.
	\end{pf}

	Through Lemma \ref{lem:convergence_true_gradient} and Lemma \ref{lem:cost_error_2}, we can formulate the convergence result for \eqref{eq:update_estimate_system}:
	\begin{equation} \label{eq:convergence_C_theta_prime}
	C(\theta') - C(\theta) \leq -\frac{\eta \us(M)}{2\mu(C(\theta))}(C(\theta) - C^*) + \frac{\eta p_2 p_5 \Vert M \Vert \delta}{\gamma}.
	\end{equation}
	However, to obtain the convergence result as stated in Theorem \ref{thm:convergence_online_deepo}, it remains necessary to find a uniform upper bound for the cost function $C(\theta_t)$ throughout the iteration process. This will provide bounding limits (upper or lower) for the parameters $p_i, i = 1,\dots,6$, as well as $L(C(\theta))$ and $\mu(C(\theta))$. Let
	\[ \overline{C} = C^* + C(\theta_{t_0}) + 1 + \frac{1}{2\underline{L}\underline{\mu}},\]
	where $\underline{L} := L(C^*)$ and $\underline{\mu} := \mu(C^*)$. In addition, define $\overline{L} = L(\overline{C})$ and $\overline{\mu} = \mu(\overline{C})$, and let $\underline{p}_1$, $\overline{p}_2$, $\overline{p}_4$, $\overline{p}_5$, and $\underline{p}_6$ represent the values of these quantities evaluated at $\overline{C}$, while $\underline{p}_3$ is evaluated at $C^*$.

	\begin{lemma}
		If
		$$\frac{\delta_t}{\gamma_t} \leq \text{min}\left\{\underline{p}_1, \frac{\us(M_t)}{2\overline{p}_5\overline{p}_2\overline{\mu}\Vert M_t \Vert}\right\} $$
		and 
		$$ \eta \leq \frac{1}{\Vert M_t\Vert} \cdot \text{min}\left\{ \frac{\underline{p}_3 \gamma_t}{\overline{p}_2 \delta_t}, \underline{p}_6 \right\},$$
		then for all time steps $t\geq t_0$, it holds that $C(\theta_t) \leq \overline{C}$.
	\end{lemma}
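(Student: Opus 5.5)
The argument is an induction on $t\ge t_0$ with hypothesis $C(\theta_t)\le \overline{C}$. The base case holds because $C(\theta_{t_0})\le \overline{C}$ by the definition of $\overline{C}$.

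For the inductive step, assume $C(\theta_t)\le\overline{C}$. Each constant is monotone in its cost argument: $p_1,p_3,p_6$ decrease with the cost, while $p_2,p_4,p_5,L,\mu$ are polynomial or increasing. Hence the hypotheses of the lemma, stated with $\underline{p}_1,\overline{p}_2,\underline{p}_3,\overline{p}_5,\underline{p}_6,\overline{\mu}$, imply the hypotheses of Lemma~\ref{lem:cost_error_2} and Lemma~\ref{lem:convergence_true_gradient} at the current policy $\theta=\theta_t$. For $\underline{p}_3$ one has to check the direction of monotonicity: it is evaluated at $C^*$ in the statement, and I would double-check that this is the conservative choice or else read it at $\overline{C}$.

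By Lemma~\ref{lem:equivalence_deepo_model_based_2}, the algorithm's update is exactly \eqref{eq:update_estimate_system} with $M=M_{t+1}$ and the data constants $\delta_{t+1},\gamma_{t+1}$. Then \eqref{eq:convergence_C_theta_prime} gives
\[
C(\theta_{t+1})-C^*\le\Bigl(1-\tfrac{\eta\us(M)}{2\overline{\mu}}\Bigr)(C(\theta_t)-C^*)+\eta \overline{p}_2\overline{p}_5\Vert M\Vert\tfrac{\delta}{\gamma}.
\]
Here I replaced $\mu(C(\theta_t))$ by $\overline{\mu}$, which is valid because $\mu(\cdot)$ is nondecreasing.

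The next step is a case split. Suppose first that $C(\theta_t)-C^*\ge \tfrac{2\overline{\mu}\overline{p}_2\overline{p}_5\Vert M\Vert\delta}{\us(M)\gamma}$. Then the decrease term dominates the bias, so $C(\theta_{t+1})\le C(\theta_t)\le\overline{C}$. Otherwise $C(\theta_t)-C^*$ is below that threshold. The SNR condition $\delta/\gamma\le \us(M)/(2\overline{p}_5\overline{p}_2\overline{\mu}\Vert M\Vert)$ makes the threshold at most $1$, so $C(\theta_t)\le C^*+1$. The bias term is then at most $\eta\us(M)/(2\overline{\mu})$, and the step-size condition $\eta\le \underline{p}_6/\Vert M\Vert$ with $\underline{p}_6=1/\overline{L}$ bounds it by $\tfrac{1}{2\overline{L}\,\overline{\mu}}\le\tfrac{1}{2\underline{L}\,\underline{\mu}}$. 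This uses $\us(M)\le\Vert M\Vert$ and the monotonicity of $L$ and $\mu$. Therefore
\[
C(\theta_{t+1})\le C^*+1+\tfrac{1}{2\underline{L}\,\underline{\mu}}\le\overline{C},
\]
and the induction closes.

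The main obstacle is the bookkeeping of monotonicity of all the constants in the cost level, together with the index shift from $t$ to $t+1$ in $M$, $\delta$ and $\gamma$. Of these, the delicate point is justifying that $\theta_{t+1}\in\cS$, so that $C(\theta_{t+1})$ is even defined. This comes from Lemma~\ref{lem:cost_error_2}: $\theta''$ stays in the sublevel set by Lemma~\ref{lem:convergence_true_gradient}, and $\Vert\theta'-\theta''\Vert\le p_3$ then gives $\theta'\in\cS$ through Lemma~\ref{lem:cost_error}. For that step, $p_3$ must be the value at the level $C(\theta'')\le C(\theta_t)\le\overline{C}$, which is where the choice of evaluation point for $\underline{p}_3$ matters and needs checking.
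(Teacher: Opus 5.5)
Your proof is correct and is essentially the paper's argument: induction on $t$, the one-step bound \eqref{eq:convergence_C_theta_prime} with $\mu$, $p_2$, $p_5$, $p_6$ replaced by their values at $\overline{C}$, and a case split on whether $C(\theta_t)$ lies within $1$ of $C^*$ (you split at the exact threshold where the decrease dominates the bias, which changes nothing, and your cases are correctly oriented, whereas the two case labels in the paper's written proof are interchanged). Your caution about $\underline{p}_3$ is justified: the paper evaluates it at $C^*$ without comment, but a perturbation radius built from Lemma~\ref{lem:lyapunov_perturbation}, like $p_1$, shrinks as the cost grows, so reading it at $\overline{C}$ is the safe choice.
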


	\begin{pf}
		First, at time $t_0$, $C(\theta_{t_0}) \leq \overline{C}$ trivially holds. Assume that at time $t \geq t_0$, $C(\theta_t) \leq \overline{C}$ still holds. By \eqref{eq:convergence_C_theta_prime}, we have
		$$
		\begin{aligned}
			&C(\theta_{t+1}) - C(\theta_t) \\
			\leq & -\frac{\eta \us(M_{t+1})}{2\overline{\mu}}(C(\theta_t) - C^*) + \eta \overline{p}_2 \overline{p}_5 \Vert M_{t+1} \Vert \frac{\delta_{t+1}}{\gamma_{t+1}}\\
			\leq & -\frac{\eta \us(M_{t+1})}{2\overline{\mu}}(C(\theta_t) - C^*) + \frac{\eta \us(M_{t+1})}{2 \overline{\mu}}.
		\end{aligned}
		$$
		We proceed by evaluating cases. If $C(\theta_t) \leq C^* + 1$, then
		$$ C(\theta_{t+1}) \leq C(\theta_t) - \frac{\eta \us(M_{t+1})}{2 \overline{\mu}} + \frac{\eta \us(M_{t+1})}{2 \overline{\mu}} = C(\theta_t) \leq \overline{C}.$$
		Conversely, if $C(\theta_t) \geq C^* + 1$, then
		\[ \begin{aligned}
		C(\theta_{t+1}) &\leq C(\theta_t) + 1 + \frac{\eta \us(M_{t+1})}{2 \overline{\mu}} \\
		&\leq C^* + 1 + \frac{\underline{p}_6}{2 \overline{\mu}} = C^* + 1 + \frac{1}{2 \overline{L} \overline{\mu}} \leq \overline{C}.
		\end{aligned}
		\]
		Therefore, by mathematical induction, the lemma is proven.
	\end{pf}

	In conclusion, the convergence result \eqref{eq:convergence_C_theta_prime} holds consistently throughout the iteration process. By consolidating and rearranging the iteration formulas from all past time steps, the convergence result depicted in Theorem \ref{thm:convergence_online_deepo} is obtained.
		
	\bibliographystyle{agsm}
	\bibliography{mybib}
	
\end{document}